\documentclass[12pt]{article}
\usepackage{amsmath}
\usepackage{graphicx}
\usepackage{natbib}
\usepackage[colorlinks,citecolor=black,urlcolor=black]{hyperref}
\usepackage{color}
\usepackage[dvipsnames]{xcolor}

% DON'T change margins - should be 1 inch all around.
\addtolength{\oddsidemargin}{-.5in}%
\addtolength{\evensidemargin}{-1in}%
\addtolength{\textwidth}{1in}%
\addtolength{\textheight}{1.7in}%
\addtolength{\topmargin}{-1in}%

%---------------------------------------------------------------
% Packages added by the authors:
\usepackage[capitalize,noabbrev]{cleveref}
\usepackage{amsthm,amsfonts}
\usepackage[boxruled,vlined]{algorithm2e}
\usepackage{array}
\usepackage[caption=false,font=normalsize,labelfont=sf,textfont=sf]{subfig}
\usepackage{textcomp}
\usepackage{stfloats}
\usepackage{verbatim}
\usepackage[english]{babel}

%---------------------------------------------------------------
% Theorem styles added by the authors:
\newtheoremstyle{localrem}
	{5pt} % space above
	{5pt} % space below
	{\rm} % Body font
	{} % Indent amount
	{\bf} % Theorem head font
	{{\rm.}} % Punctuation after theorem head
	{.7em} % Space after theorem head
	{} % Theorem head spec ?

\theoremstyle{localrem}
\newtheorem{Definition}{Definition}[section]
\newtheorem{Remark}[Definition]{Remark}
\newtheorem{Example}[Definition]{Example}

\newtheoremstyle{localthm}
	{5pt} % space above
	{5pt} % space below
	{\sl} % Body font
	{} % Indent amount
	{\bf} % Theorem head font
	{{\rm.}} % Punctuation after theorem head
	{.7em} % Space after theorem head
	{} % Theorem head spec ?

\theoremstyle{localthm}
\newtheorem{Corollary}[Definition]{Corollary}
\newtheorem{Theorem}[Definition]{Theorem}
\newtheorem{Lemma}[Definition]{Lemma}

\newtheorem{Assumption}{Assumption}

%---------------------------------------------------------------
% Definitions added by the authors:
\def\col{{:}}

\def\bs{\boldsymbol}

\def\F{\bs{F}}
\def\P{\bs{P}}

\def\QQ{\bs{Q}}
\def\G{\bs{G}}
\def\g{\bs{g}}
\def\k{\bs{k}}
\def\p{\bs{p}}

\def\x{\bs{x}}
\def\X{\bs{X}}
\def\y{\bs{y}}
\def\Y{\bs{Y}}
\def\z{\bs{z}}

\def\BB{{\cal B}}
\def\FF{{\cal F}}
\def\GG{{\cal G}}
\def\HH{{\cal H}}
\def\KK{{\cal K}}
\def\LL{{\cal L}}
\def\NN{{\cal N}}
\def\OO{{\cal O}}
\def\SS{{\cal S}}
\def\XX{{\cal X}}
\def\YY{{\cal Y}}

\def\N{\mathbb{N}}

\def\R{\mathbb{R}}
\def\Z{\mathbb{Z}}

\def\bpi{\bs{\pi}}
\def\bxi{\bs{\xi}}
\def\bi{\bs{i}}
\def\bone{\bs{1}}
\def\btwo{\bs{2}}

\def\ka{\kappa}

\def\hx{\hat{x}}
\def\hz{\hat{z}}

\def\diag{\mathrm{diag}}
\def\temp{\mathrm{temp}}
\def\old{\mathrm{old}}
\def\new{\mathrm{new}}

\def\QATS{\mathrm{QATS}}
\def\OS{\mathrm{OS}}
\def\OSH{\mathrm{OSH}}
\def\sOSH{\mathrm{sOSH}}
\def\ns{n_\mathrm{seeds}}

\def\hat{\widehat}

\def\Ex{\mathop{\rm I\!E}\nolimits}
\def\Pr{\mathop{\rm I\!P}\nolimits}

\def\argmax{\mathop{\rm arg\,max}}

\graphicspath{{./figs/}}
\pdfminorversion=4

\begin{document}

\def\spacingset#1{\renewcommand{\baselinestretch}%
{#1}\small\normalsize} \spacingset{1}

%%%%%%%%%%%%%%%%%%%%%%%%%%%%%%%%%%%%%%%%%%%%%%%%%%%%%%%%%%%%%%%%%%%%%%%%%%%%%%

  \title{\bf Quick Adaptive Ternary Segmentation: An Efficient Decoding Procedure For Hidden Markov Models}
  \author{Alexandre M\"osching\hspace{.2cm}\\
    Nonclinical Biostatistics, F.\ Hoffmann-La Roche, Switzerland\\
    Housen Li  and Axel Munk \\
    Institute for Mathematical Stochastics, Cluster of Excellence ``Multiscale \\Bioimaging: 
     from Molecular Machines to Networks of Excitable Cells'' \\
    Georg-August-Universit\"at G\"ottingen, Germany 
    }
  \maketitle

\bigskip
\begin{abstract}
Hidden Markov models (HMMs) are characterized by an unobservable Markov chain and an observable process---a noisy version of the hidden chain. Decoding the original signal from the noisy observations is one of the main goals in nearly all HMM based data analyses. Existing decoding algorithms such as Viterbi and the pointwise maximum a posteriori (PMAP) algorithm have computational complexity at best linear in the length of the observed sequence, and sub-quadratic in the size of the state space of the hidden chain.

We present Quick Adaptive Ternary Segmentation (QATS), a divide-and-conquer procedure with computational complexity polylogarithmic in the length of the sequence, and cubic in the size of the state space, hence particularly suited for large scale HMMs with relatively few states. It also suggests an effective way of data storage as specific cumulative sums. In essence, the estimated sequence of states sequentially maximizes local likelihood scores among all local paths with at most three segments, and is meanwhile admissible. The maximization is performed only approximately using an adaptive search procedure. Our simulations demonstrate the speedups offered by QATS in comparison to Viterbi and PMAP, along with a precision analysis. An implementation of QATS is in the R-package \texttt{QATS} on GitHub.
\end{abstract}

\noindent%
{\it Keywords:}  
Hidden states, 
Local search, Massive data, Polylogarithmic runtime, Segmentation

%-------------------------------------------------------------
\section{Introduction}
\label{Sec:Intro}
%-------------------------------------------------------------
%
A hidden Markov model (HMM), $(\X,\Y) = (X_k,Y_k)_{k\ge 1}$, defined on a probability space $(\Omega,\FF,\Pr)$, consists of an unobservable (hidden) Markov chain $\X$ on a finite state space $\XX = \{1,2,\ldots,m\}$, $m\ge 2$, and an observable stochastic process $\Y$ that takes values in a measurable space~$(\YY,\BB)$. 
Because of the generality of the observation and state spaces, HMMs are sufficiently generic to capture the complexity of various real-world time series, and meanwhile the simple Markovian dependence structure allows efficient computations. Upon more than half a century of development, HMMs and variants thereof have been established as one of the most successful statistical modeling ideas (see \citealp{Ephraim_2002}, \citealp{Cappe_2005} and \citealp{Mor_2021} for an overview). Since their early days, they are widely used in various fields of science and applications, such as speech recognition \citep{Rabiner_1989,GaYo08}, DNA or protein sequencing \citep{DEKM98,Kar09}, ion channel modeling \citep{BaRi92,PBSM21}, epidemiology \citep{TouEt20} and fluctuation characterization in macro economic time series \citep{Ham89,FS06}, to name only a few. 

Increasingly large and complex datasets with long time series have recently led to a revival in the development of scalable algorithms and methodologies for HMMs, see \cite{Bulla_2019} and the related literature in \cref{Sec:MAP,Sec:RiskBasedSeg}. In the present paper, we focus on computational aspects involved in the estimation of the hidden state sequence for large scale HMMs, that is when $n \gg m$. Here, $n$ denotes the length of the sequence of observations $\y=\y_{1:n} := (y_k)_{k=1}^n$ from $\Y_{1:n}:=(Y_k)_{k=1}^n$. The goal is, as \cite{Rabiner_1989} formulated it, to ``find the `correct' state sequence'' behind~$\y$. Procedures achieving such a task are commonly known as \textit{segmentation} or \textit{decoding} methods, and existing segmentation methods are tailored to what is exactly meant by the ``correct'' state sequence. 

{ We assume that, conditional on~$\X$, the components of the process~$\Y$ are stochastically independent, and each entry~$Y_k$ depends on~$\X$ only through the corresponding~$X_k$. This conditional independence structure is crucial for achieving computational efficiency, although it may be violated in practice. Such violations can often be mitigated, for example, by introducing additional hidden states or adopting a hierarchical modeling approach. Moreover,  we assume that the model parameters are (approximately) known.} These parameters are the transition matrix $\p^{(k)}:=\bigl(p_{ij}^{(k)}\bigr)\in[0,1]^{m\times m}$ of the Markov chain $\X$ with $p_{ij}^{(k)}=\Pr(X_{k+1}=j|X_k=i)$, the initial probability vector $\bpi:=(\pi_i)\in[0,1]^m$ with $\pi_i=\Pr(X_1=i)$, and the conditional distribution of~$Y_k$ given~$X_k=i$, which is assumed to have a density~$f_i^{(k)}$ with respect to some dominating measure~$\mu$ on~$\BB$. If the parameters are unknown, they are typically estimated via maximum likelihood or EM algorithms \citep{Baum_1966,Baum_1970,Baum_1972}. In case of temporal homogeneity, i.e., when $\p^{(k)}$ and $f_i^{(k)}$ do not depend on $k$, the estimation is usually not a severe burden as it can be sped-up for instance by using a fraction of the complete data \citep{GHS98}, while keeping statistical precision accurate, particularly for large scale data. 

For natural numbers $1\le \ell\le r$ and a vector $\bxi$ of dimension at least $r$, we write $\ell\col r$ for an index interval $\{\ell,\ell+1,\ldots,r\}$, and call it an \textit{interval}, and write $\bxi_{\ell:r}$ for the vector $(\xi_k)_{k=\ell}^r$. We call \textit{segments} of $\bxi$ the maximal intervals on which $\bxi$ is constant, i.e.\ $\ell\col r$ is a segment of $\bxi$ if there exists $\xi$ such that $\xi_k = \xi$ for all $k\in\ell\col r$, $\xi_{\ell-1}\neq \xi_\ell$ (if $\ell > 1$) and $\xi_r\neq \xi_{r+1}$ (if the dimension of $\bxi$ is strictly larger than $r$). We interpret vectors as row-vectors. {Superscripts denote dimensions if they are defined in the manuscript, otherwise powers.}
%-------------------------------------------------------------
\subsection{Maximum a posteriori --- Viterbi path}
\label{Sec:MAP}
%-------------------------------------------------------------
%
The most common segmentation method aims to find the most likely state sequence~$\x$ given observations~$\y$. It seeks a path~$\x\in \XX^{n}$ which is a mode of the complete likelihood
\begin{equation}
	\label{Eq:Pr_Xx_Yy}
	\Lambda_{1:n}(\x)
	\ := \
	\pi_{x_1} f_{x_1}^{{(1)}}(y_1) 
		\left(\prod_{k=2}^n p_{x_{k-1}x_k}^{{(k-1)}} f_{x_k}^{{(k)}}(y_k)\right).
\end{equation}
This sequence is commonly known as \textit{maximum a posteriori} (MAP) or \textit{Viterbi path}, named after \cite{Viterbi_1967} which determines such a path via dynamic programming, see \cite{Forney_1973} for details. In its most common implementation, Viterbi algorithm, simply referred to as \textit{Viterbi} in the sequel, has computational complexity~$\OO(m^2n)$, see also Algorithm~\ref{Alg:Viterbi}.

Due to the ever increasing size and complexity of datasets, there is interest in accelerating Viterbi \citep{Bulla_2019}. Several authors obtained sub-quadratic complexity in the size~$m$ of the state space. Specifically, \citet{Esposito_2009} modified Viterbi and achieved a best-case complexity of~$\OO(m\log(m)n)$. At each step of the dynamic program, their approach avoids inspecting all potential states by ranking them and stopping the search once a certain state is too unlikely. \cite{Kaji_2010} proposed to reduce the number of states examined by Viterbi by creating groups of states at each step of the dynamic program and iteratively modifying those groups, when necessary. In the best case, the complexity of their method is~$\OO(n)$. Both \cite{Esposito_2009} and \cite{Kaji_2010} have worst-case complexity equal to that of Viterbi. In contrast, \cite{Cairo_2016} were the first to achieve worst-case complexity~$\OO((m^2/\log m)n)$ (and an extra prepossessing cost that is polynomial in~$m$ and~$n$) by improving the matrix-vector multiplication performed at each step of the dynamic program. All those methods find a maximizer of \eqref{Eq:Pr_Xx_Yy}. Improving Viterbi by a polynomial factor in~$m$, or more, would have important implications in fundamental graph problems, as argued in \cite{Backurs_2017}.

There have been attempts at decreasing the computational complexity of Viterbi in the length~$n$ of the observed sequence~$\y$. \cite{Lifshits_2009} used compression and considered an observation space with finite support, i.e.~$\YY$ has finite cardinality. They proposed to precompress~$\y$ by exploiting repetitions in that sequence, and achieved varying speedups (e.g.\ by a factor~$\Theta(\log n)$) depending on the compression scheme. \cite{Hassan_2021} presented a framework for HMMs which allows to apply the parallel-scan algorithm \citep{LaFi80, Blelloch_1989} for parallel computation of the forward and backward loops of Viterbi. This parallel framework can achieve a span complexity of~$\OO(m^2\log n)$, but  necessitates a number of threads that is proportional to~$n/\log(n)$ and results in a total complexity of~$\OO(m^2 n)$. 

%-------------------------------------------------------------
\subsection{Other risk-based segmentation methods}
\label{Sec:RiskBasedSeg}
%-------------------------------------------------------------
%
Maximizing the complete likelihood as executed by Viterbi may share common disadvantages with other MAP estimators, see \cite{CaLa08}. For instance, Viterbi may perform unsatisfactorily if there are several concurring paths with similar probabilities. A different optimality criterion determines, at each time~$k\in 1\col n$, the most likely state~$\hat x_k$ which gave rise to observation~$y_k$, given the whole sequence~$\y$. The solution to this problem minimizes the expected number of misclassifications and is known as the \textit{pointwise maximum a posteriori} (PMAP) estimator, which is often referred to as \emph{posterior decoding} in bioinformatics and computational biology \citep{DEKM98}. To obtain the PMAP, a forward-backward algorithm similar to Viterbi computes the so-called \textit{smoothing} and \textit{filtering distributions}, which give the distribution of~$X_k$ given~$\Y_{1:n}$ and~$X_k$ given~$\Y_{1:k}$, $k\in 1\col n$, respectively, see \cite{Baum_1970} and \cite{Rabiner_1989}. The PMAP also has computational complexity~$\OO(m^2n)$. There is an important drawback of the PMAP paradigm for estimation: The resulting sequence~$\hat{\x}$ is potentially inadmissible, i.e.\ the probability to transition from~$\hx_k$ to~$\hx_{k+1}$ for some~$1\le k < n$ is zero.

\cite{Lember_2014} studied the MAP and PMAP in a risk-based framework, where both estimators are seen as minimizers of specific risks. By mixing those risks and other relevant ones, hybrid estimators combining desirable properties of both estimators are defined, see also \cite{Fariselli_2005}. In this case, suitable modifications of the forward-backward algorithm are possible to maintain a computational complexity of~$\OO(m^2n)$.

Provided that there is a priori knowledge about the number of segments of the hidden path, \cite{Titsias_2016} determined a most likely path with a user-specified number~$s$ of segments (sMAP). Precisely, they attempt to maximize~$\Lambda_{1:n}(\x)$ over all paths~$\x\in\XX^n$ such that the cardinality of~$\{k: x_k\neq x_{k+1}\}$ is equal to~$s-1$. The complexity of their method is~$\OO(s m^2n)$, and if one desires to look at all paths with up to~$s_{\max}$ segments, the overall complexity amounts to~$\OO(s_{\max}m^2n)$.

%-------------------------------------------------------------
\subsection{Our contribution}
%-------------------------------------------------------------
%
We present a novel decoding procedure---inspired by Viterbi and sMAP---achieving polylogarithmic computational complexity in terms of the sample size~$n$. Our method is particularly beneficial for HMMs with relatively infrequent changes of hidden states, since the case of frequent changes approaches a linear computational complexity, let alone to output the changes of state. The segmentation of HMMs with infrequent changes can be viewed as a particular problem of (sparse) change point detection, see recent surveys \citep{NHZ16,TOV20}. 

From this point of view, we introduce \textit{Quick Adaptive Ternary Segmentation (QATS)}, a fast segmentation method for HMMs. In brief, QATS sequentially partitions the interval~$1\col n$ into smaller intervals with the following property: On each interval, the state sequence that maximizes a localized version of the complete likelihood~\eqref{Eq:Pr_Xx_Yy}, over all state sequences with at most three segments (at most two changes of state), is in fact a constant state sequence (it has a single segment, i.e.\ no change of state). Thus, if at a certain stage of the procedure the maximizing sequence in a given interval was made of two or three segments (one or two changes of state), then those two or three segments would replace the original interval in the partition and those new segments would subsequently be investigated for further partitioning. 

This divide-and-conquer technique builds on the classical binary segmentation \citep{Bai_1997}, which allows at most one split at a time and is primarily used for breakpoint detection in economic time series. The idea of binary segmentation can be traced back to earlier work in cluster analysis \citep{ScKn74}. Here, we allow up to two splits per iteration (three new segments), granting it the name of \textit{ternary segmentation}. The benefit of considering three segments instead of two is the significant increase in detection power of change points, see Lemma~\ref{Lem:Charac_CP}. One could consider more than three segments for the sake of further improvement in detection power, but this would come at the cost of a heavier computational burden. A variant of ternary segmentation that searches for a bump in a time series was first considered in \cite{LeKl85}, the idea of which can also be found in circular binary segmentation \citep{OVLW04}. However, the concept of optimizing over two sample locations is much older and can already be found in the proposal by \cite{Page55}. 

To the best of our knowledge, binary or ternary segmentation, or any extension thereof, has not yet been used for decoding HMMs so far. A possible reason is that the resulting path, which we call \textit{QATS-path}, does not maximize an explicit score defined a priori, unlike the MAP, PMAP or sMAP discussed previously. Instead, the QATS-path solves a problem defined implicitly via recursive local maximizations. {This greedy nature hinders a thorough theoretical analysis on its statistical performance. However, in a simple scenario with~$m=2$, we are able to provide a mathematical justification for QATS. Our simulation study (including scenarios with~$m>2$) shows that QATS estimates the true hidden sequence with a precision comparable to that of its competitors, while being substantially faster already for moderate sized datasets. The empirically observed speedups are supported by our complexity analysis, which shows that QATS has computational complexity~$\OO(sm^3\log n)$, with~$s$ the number of segments of the QATS-path, for general scenarios with $m$ hidden states.} In case of a small number $m$ of states and a large number $n$ of observations, this can be significantly faster than the computational complexity $\OO\bigl(m^2 n/ \log(mn)\bigr)$ of the state-of-the-art accelerations of Viterbi, see \cref{Sec:MAP}. {This situation includes most applications of HMMs in electrophysiology \citep{venkataramanan2002applying} and in bioinformatics \citep{yoon2009hidden}, where typically $m = 2, 4$ or $20$.  See \cref{Sec:Data} for an illustrative example of  application.}

To achieve this important speedup, the maximization step at each iteration of the ternary segmentation is only performed approximately, in the sense that the best path with at most three segments may not be obtained, but a sufficiently good one will be found quickly. To rapidly obtain this path, we devise an adaptive search strategy inspired by the optimistic search algorithm of \cite{Kovacs_2020} in the context of change point detection. The original idea of adaptive searches can be traced back to golden section search \citep{Kie53}. The application of such an adaptive search is beneficial here because data are stored as cumulative sums of log-densities evaluated at the observations~$\y$.

The reasons why the surprising speed-up from~$\OO(n)$ to~$\OO(\log n)$ with little loss of statistical performance is possible at all can be summarized as follows:

\noindent 1) \emph{The switch of optimization perspectives.} {We search for sample locations at which the ``correct'' hidden path most likely switches its states, instead of finding the most likely hidden state for every sample location, like Viterbi and PMAP.}

\noindent 2) \emph{The search of three segments in each step.} {The choice of three segments considerably improves the statistical performance of using two segments, while introducing only a small computational cost, in particular, when the number of states is small. }

\noindent 3) \emph{The estimation of changes via local optima.} We demonstrate that the locations at which the hidden states change can be characterized through the likelihood score by local optima, which can be estimated much faster than a global one. It is the search for a local optimum rather than the global one that makes a fast algorithm requiring only~$\OO(\log n)$ evaluations of likelihood scores possible. 

\noindent 4) \emph{The use of a local likelihood score.} The local likelihood score has the benefit of being computable in~$\OO(1)$ operations since it consists in differences of certain partial sums under proper transformation.

The procedures devised in this article are collected in the R-package QATS and are available from \url{https://github.com/AlexandreMoesching/QATS}. All methods are also implemented in C++ using the linear algebra library Armadillo \citep{Sanderson_Curtin_2016, Sanderson_Curtin_2018} and made accessible to R \citep{R_2022} using Rcpp \citep{Rcpp_JSS_2011, Rcpp_Springer_2013, Rcpp_AS_2018} and RcppArmadillo \citep{RcppArmadillo_2014}.

The article is organized as follows. In \cref{Sec:Procedure}, we devise QATS, and provide computational guarantees and a computational complexity analysis of QATS. The theoretical results on QATS with methodological justifications, sensitivity analysis, and path properties are given in \cref{Sec:Analysis}. In \cref{Sec:MonteCarloSimulations}, we examine empirical performances of QATS in terms of estimation accuracy and computational efficiency. { Further, we demonstrate the utility of QATS using a real-world array CGH dataset in \cref{Sec:Data}.} Proofs, technical details and additional simulations are deferred to the Supplementary Material.

%-------------------------------------------------------------
\section{Description of the procedure}
\label{Sec:Procedure}
%-------------------------------------------------------------
%
The idea of QATS is to sequentially partition, or \textit{segment}, the interval~$1\col n$ into~$s\ge 1$ contiguous and sorted intervals~$S_1,S_2,\ldots,S_s$, that is~$S_u=\ell_u\col r_u$ with indices~$\ell_u\le r_u$, $u\in 1\col s$, such that~$\ell_1=1$, $r_s=n$ and~$r_u+1=\ell_{u+1}$, $u\in 1\col (s-1)$. The segmentation achieves the following goal: On each interval~$S_u$, the best local path with at most three segments is a constant path, i.e.\ it is made of only one segment. 

Precisely, a path of length $d$ with~$c\ge 1$ segments is a vector~$\x\in\XX^{d}$ with~$c-1$ breaks, or \textit{change points}: If~$c>1$, there exists~$\ka_0:=1 < \ka_1 < \cdots < \ka_{c-1} < \ka_c := d+1$ such that~$\x_{\ka_{u-1}\col (\ka_u-1)}$ is a constant vector and~$x_{\ka_{u-1}}\neq x_{\ka_u}$, for~$u\in 1\col c$. A constant path is thus the one that satisfies~$\#\{x_k:k\in 1\col d\}=c=1$, i.e.\ it is made of a single segment. Furthermore, on a given interval~$S=\ell\col r$, we define the \textit{local likelihood} of~$\x\in\XX^{r-\ell+1}$ and~$\y_{\ell\col r}$, given a previous state~$X_{\ell-1}=x_0\in\XX$, as the following quantity:
\begin{equation}
	\label{Eq:LocalLikelihood}
	\Lambda_{\ell:r}(\x| x_0)
	:=
	\begin{cases}
		\Lambda_{1:r}(\x) \text{ defined in~\eqref{Eq:Pr_Xx_Yy}}
			& \text{if}\ \ell = 1, \\
		\prod_{k=\ell}^r p_{x_{k-\ell},x_{k-\ell+1}}^{{(k-1)}} f_{x_{k-\ell+1}}^{{(k)}}(y_k) 
			& \text{otherwise}.
	\end{cases}
\end{equation}
If~$\ell=1$, the likelihood is independent of~$x_0$, so we either write~$\Lambda_{\ell:r}(\x)$ or let~$x_0$ be arbitrary.

Consequently, a \textit{best local path} on~$S_u=\ell_u\col r_u$ with at most three segments and previous state~$x_0\in\XX$ is a vector~$\x^*\in\XX^{r_u-\ell_u+1}$ which maximizes~$\Lambda_{\ell_u:r_u}(\x| x_0)$ over all vectors~$\x\in\XX^{r_u-\ell_u+1}$ with at most three segments. This section is devoted to the explicit construction of the procedure achieving the aforementioned segmentation.

%-------------------------------------------------------------
\subsection{Ternary segmentation}
\label{Sec:TernarySegmentation}
%-------------------------------------------------------------
%
The segmentation of~$1\col n$ is performed sequentially via \textit{ternary segmentation}. One operates with a tuple~$\SS=(S_u)_{u=1}^s$ of~$s$ contiguous and sorted intervals {(with $s$ being initially equal to $1$, and incrementing as the algorithm proceeds)}, a vector~$\hat{\z}\in\XX^s$ keeping track of the estimated state value on each interval, and a number~$u \in 1\col s$ denoting the current interval under investigation. At any stage~$u$ of the procedure, we may replace a single interval~$S_u$ by two or three new contiguous ones, as well as a scalar-state~$\hat z_u$ by a vector of two or three states. {As such, the size $s$ of $\SS$ and $\hat{\z}$ may be incremented by $1$ or $2$, respectively.}

The ternary segmentation proceeds as follows:
\begin{enumerate}
\item[(0)] Initially, set the current interval under investigation to be the whole interval~$1\col n$ and the tuple~$\SS$ to contain only that interval. The only estimated state is arbitrarily set to~$1$: $\SS \leftarrow (S_1) := (1\col n)$, $\hat{\z} \leftarrow 1$, $s \leftarrow 1$, $u \leftarrow 1$.
\item[(1)]
For the current interval~$S_u=\ell_u\col r_u$ of size~$d_u=r_u-\ell_u+1$, find the best local path~$\x^*\in\XX^{d_u}$ with at most three segments and previous state~$\hz_{u-1}$ (if~$u>1$). This yields a segmentation of~$S_u$ into~$\hat{c}\in 1\col 3$ contiguous interval(s)~$(S^w)_{w=1}^{\hat{c}}$ with state(s)~$(i^w)_{w=1}^{\hat{c}}$. Update~$\SS$, $\hat\z$ and $s$ accordingly:
\(
	S_u \leftarrow (S^w)_{w= 1}^{\hat{c}},
	\
	\hat z_u \leftarrow (i^w)_{w=1}^{\hat{c}},	
	\
	s \leftarrow s + \hat{c} - 1.
\)
\item[(2)] 
In case~$\hat{c}\in 2\col 3$, set the first of the newly created intervals as the new interval under investigation (i.e.\ $u$ remains unchanged), and go back to (1).

In case~$\hat{c}=1$, i.e.\ the old~$S_u$ remains unchanged by (1), move to the next available interval: $u \leftarrow u + 1$.

If~$u > s$, the algorithm terminates. Otherwise, go back to (1). 
\end{enumerate}

At the end of the procedure, the estimated path~$\hat{\x}$ from~$\SS$ and~$\hat{\z}$ is %constructed as follows:
\begin{equation}
	\label{Eq:QATS_path}
	\hat{\x}
	:=
	(
		\hz_1\bone_{d_1},\;
		\hz_2 \bone_{d_2},\;
		\ldots,\;
		\hz_s \bone_{d_s}
	),
\end{equation}
where~$\bone_d$ for~$d\in\N$ is the~$d$-dimensional vector of ones and $d_u$ is the size of $S_u$, $u\in 1\col s$. \cref{Fig:Ternary_Segmentation} displays three possible stages of the procedure.

\begin{figure}[!t]
\centering
\includegraphics[width=0.36\textwidth] 
{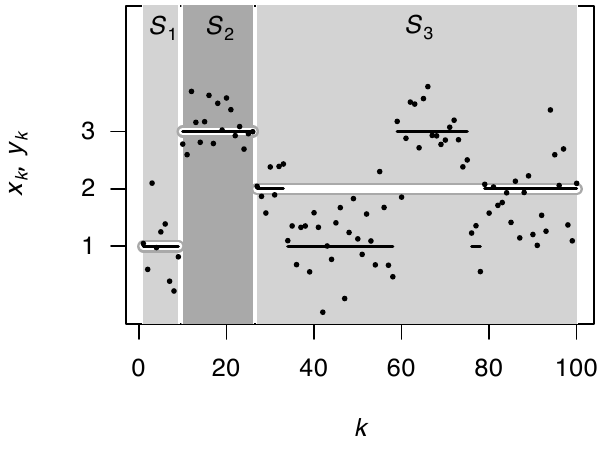}\,
\includegraphics[width=0.30\textwidth, trim = 48 0 0 0, clip]{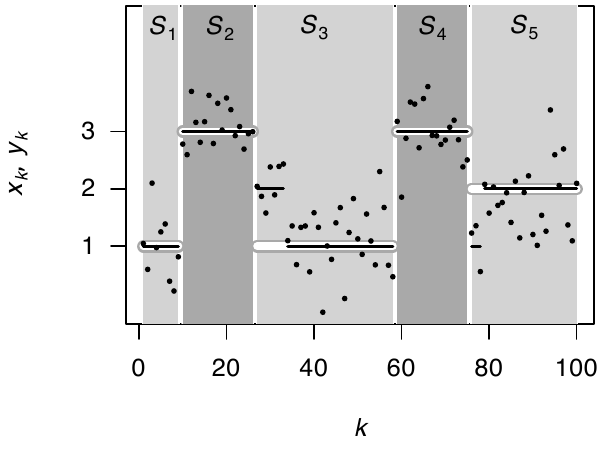}\,
\includegraphics[width=0.30\textwidth, trim = 48 0 0 0, clip]{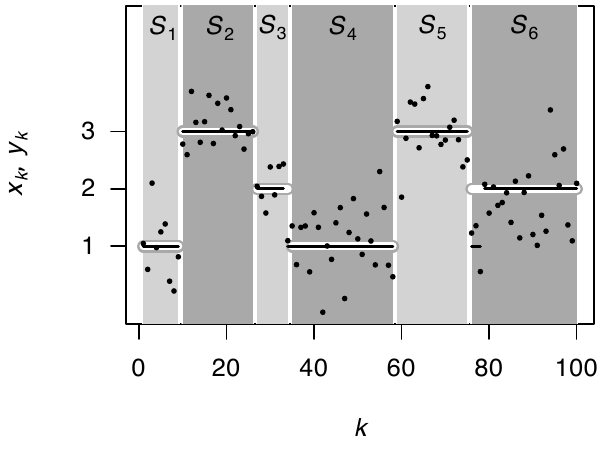}
\caption{Some stages of QATS. Black line segments represent the true, original path~$\x^o$, and points correspond to $\y$. \textit{Left:} After a few iterations of QATS, the best local path on~$S_1$ with at most three segments is constant equal to~$\hz_1 = 1$  (white segment with gray border). The best local path on~$S_2$ with at most three segments and previous states~$\hz_{1}=1$ is constant equal to~$\hz_{2}= 3$. The next interval to investigate is~$S_3$. \textit{Middle:} In step~(1), the search for the best local path on~$S_3$ with at most three segments yields a path with~$\hat{c}=3$ intervals (see \cref{Fig:Gain_Functions}). This replaces the old interval~$S_3$ by new ones: $S_3=S^{1}$, $S_4=S^{2}$, and $S_5=S^{3}$; and the scalar~$\hz_3=1$ by the vector of states: $\hz_{3:5}=(1,3,2)$. Step~(2) sets the new~$S_3$ as the next interval to investigate. \textit{Right:} Step~(1) replaced the old~$S_3$ by~$S_3=S^1$ and~$S_4=S^2$, and the old~$\hz_3$ by~$\hz_{3:4}=(2,1)$, whereas step~(2) sets the new~$S_3$ as the current interval. Applying steps~(1--2) to~$S_3$, followed by~$S_4$ and~$S_5$, will not yield any changes. The next interval to investigate will be~$S_6$.}
\label{Fig:Ternary_Segmentation}
\end{figure}

%-------------------------------------------------------------
\subsection{Approximation}
%-------------------------------------------------------------
%
To achieve sub-linear computational complexity in~$n$, the search of the best local path with at most three segments on a given interval~$S=\ell\col r$ of length~$d=r-\ell+1$ is only performed approximately, in the sense that the vector~$\x\in\XX^d$ with at most three segments achieving highest local likelihood~$\Lambda_{\ell:r}(\x| x_0)$ may not be found exactly. Instead, we perform an approximation by comparing the best constant path~$\x^{*1}$, the approximate best local paths (to be defined below) with two and three segments, respectively denoted~$\tilde{\x}^2$ and~$\tilde{\x}^3$, and selecting among those the path with the highest local likelihood score. Reasons for this approximation and elements of our procedure are detailed in this section.

The search of the best constant path on~$\ell\col r$ with previous state~$x_0\in\XX$ consists in the following maximization problem
\begin{equation}
	\label{Eq:H1}
	H^1 \ := \ \max_{i\in \XX}\, \Lambda_{\ell:r}(i \bone_{r-\ell+1}| x_0).
\end{equation}
The dependence on~$\ell$, $r$ and~$x_0$ for~$H^1$ is omitted to facilitate notation, and is therefore implicit. The same principle will be used in the sequel when convenient.

The best constant path is~$\x^{*1} := i^* \bone_{r-\ell+1}$ with $i^*
	:=
	\argmax_{i\in \XX}\, \Lambda_{\ell:r}(i \bone_{r-\ell+1}| x_0)$. 
This search costs~$m$ evaluations of~$\Lambda_{\ell:r}$ which, after preprocessing the data (see \cref{Sec:Linearization}), is a feasible task since it is independent of~$d$.

The search of the best path on~$\ell\col r$ with two or three segments is computationally more involved, since it requires the search of maxima of the following two target functionals
\begin{align}
	H^2(k)
	\ &:= \
	\max_{i_1\neq i_2}\,
	\Lambda_{\ell:r}\bigl((i_1 \bone_{k-\ell},i_2 \bone_{r-k+1})| x_0\bigr),
	\label{Eq:H2}\\
	H^3(\k)
	\ &:= \
	\max_{i_1\neq i_2\neq i_3}\,
	\Lambda_{\ell:r}\bigl((i_1 \bone_{k_1-\ell},i_2 \bone_{k_2-k_1},i_3 \bone_{r-k_2+1})| x_0\bigr),
	\label{Eq:H3}
\end{align}
over all~$k \in \KK^2_{\ell:r} := 
	\{k : \ell < k \le r\} $ and~$\k \in \KK^3_{\ell:r}:=
	\{(k_1,k_2) : \ell < k_1 < k_2 \le r\}$.
\cref{Fig:Gain_Functions} depicts the natural logarithm of the two maps~$H^2$ and~$H^3$ in the context of \cref{Fig:Ternary_Segmentation}.

\begin{figure}[!t]
\centering
\includegraphics[width=0.9\textwidth]{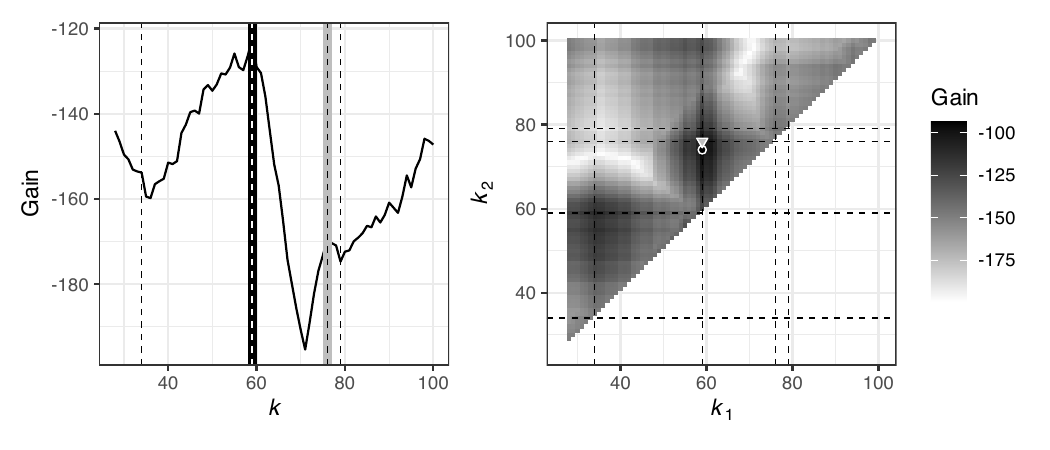}
\caption{Plots of~$\HH^2 = \log H^2$ (\textit{left}) and~$\HH^3 = \log H^3$ (\textit{right}) in the setting of the left plot of \cref{Fig:Ternary_Segmentation}, when the interval~$S_3$ is being investigated. Dashed lines (black, or white on black background) show the true change points of the hidden chain~$\x^o$, whereas the solid black line (\textit{left}) and point (\textit{right}) correspond to the respective global maxima of each map. The gray line (\textit{left}) and triangle (\textit{right}) correspond to the output of Algorithms~\ref{Alg:OSH2} and~\ref{Alg:OSH3}, respectively.} 
\label{Fig:Gain_Functions}
\end{figure}

Indeed, searching the global maximum of~$H^2$, respectively~$H^3$, would require~$(r-\ell)m(m-1)$, respectively~$(r-\ell)(r-\ell-1)m(m-1)^2/2$, probes of~$\Lambda_{\ell:r}$. When~$n$ and therefore~$\ell\col r$ are large, this task is computationally too costly, even after preprocessing the data (see \cref{Sec:Linearization}). Hence, we devise an approximate search algorithm to rapidly determine a one-dimensional local maximum of~$H^2$ and a two-dimensional local maximum~$H^3$, instead of their respective global maxima. Here, an index~$k^*\in\KK^2_{\ell:r}$ is called a \textit{one-dimensional local maximum} of~$H^2$ if~$H^2(k^*)\ge H^2(k)$ for~$k\in\KK^2_{\ell:r}$ such that~$|k^*-k|=1$. Likewise, a pair of indices~$\k^*\in\KK^3_{\ell:r}$ is called a \textit{two-dimensional local maximum} of~$H^3$ if~$H^3(\k^*)\ge H^3(\k)$ for~$\k\in\KK^2_{\ell:r}$ such that~$\|\k^*-\k\|_1=1$, where we define~$\|\bxi\|_1:=|\xi_1|+|\xi_2|$ for a vector~$\bxi = (\xi_1,\xi_2)\in\R^2$.

The approximate search of the best paths with two and three segments is inspired by an adaptive search algorithm known as \textit{optimistic search} (OS). OS was first used in detection of mean changes in independent Gaussian data by \cite{Kovacs_2020} to obtain sub-linear computational complexity when determining a new change point. In its simplest formulation, OS takes as an input a real-valued function~$H$ defined on an interval~$L\col R$, a tuning parameter~$\nu\in(0,1)$, and a maximal interval length~$d_o>1$, and returns a one-dimensional local maximum of~$H$ on~$L\col R$. The pseudocode for this procedure is given in Algorithm~\ref{Alg:OS}. Essentially, OS checks the value of $H$ at two points and keeps an interval that contains the point with the larger one. Hence, the kept interval contains at least one local maximum of~$H$ on~$\ell\col r$. The choice of two points ensures that a proportion of at least~$\nu/2$ points is excluded from the search interval. Thus, OS finds a local maximum of~$H$ on~$\ell \col r$ in~$\OO(\log (r- \ell))$ steps.

\begin{Lemma}[From \citealp{Kovacs_2020}]
\label{Lem:OS}
OS (as in Algorithm~\ref{Alg:OS}), returns a local maximum~$k^*\in L\col R$ of~$H$ in~$\OO(\log (R- L))$ steps/probes of~$H$, and its~$H$-value~$h^*=H(k^*)$, which is at least as large as any of the other probes performed during the algorithm.
\end{Lemma}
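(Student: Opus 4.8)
The plan is to establish all three assertions---termination within $\OO(\log(R-L))$ probes, that the returned index $k^*$ is a local maximum of $H$, and that $h^*=H(k^*)$ dominates every value probed---by exhibiting a single loop invariant and propagating it through \cref{Alg:OS}. First I would fix notation for one generic iteration. The algorithm maintains a working interval $l\col r\subseteq L\col R$ together with a distinguished, already probed, interior index $w$ recording the current best value; it inspects the two halves $l\col w$ and $w\col r$, places exactly one fresh probe $m$ in the larger half (so that each iteration incurs a single new evaluation of $H$), compares $H(m)$ with $H(w)$, and retains the half on the side of the larger of the two, promoting that larger probe to the new $w$.

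The heart of the argument is the invariant that, at the start of every iteration, $l<w<r$, $H(w)\ge H(l)$, $H(w)\ge H(r)$, and $H(w)$ equals the maximum of $H$ over all indices probed so far. I would prove it by induction on the iterations. For preservation, suppose $m$ lands in $w\col r$. If $H(m)\ge H(w)$, the retained interval is $w\col r$ with new centre $m$, and $H(m)\ge H(w)\ge H(r)$ together with $H(m)\ge H(w)$ reinstates the invariant; if $H(m)<H(w)$, the retained interval is $l\col m$ with centre still $w$, and the previous bounds $H(w)\ge H(l)$ and $H(w)>H(m)$ reinstate it as well. The case $m\in l\col w$ is symmetric. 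The initialization must be set up so the invariant already holds before the first iteration; here one accounts for the endpoints $L$ and $R$, using that a boundary index of the domain has a single interior neighbour.

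Granting the invariant, I would read off the local-maximum property from its terminal instance, reached once $r-l\le d_o$ and the remaining interval is scanned exhaustively. Because $H(w)\ge\max\{H(l),H(r)\}$ with $w$ strictly interior, $H$ attains its maximum over $l\col r$ at an interior index $k^*$ (should that maximum be realized only at $l$ or $r$, it equals $H(w)$, so $w$ itself may be taken); both neighbours $k^*\pm1$ then lie in $l\col r$, whence $H(k^*)\ge H(k^*\pm1)$ and $k^*$ is a local maximum in the sense of the definition, the domain-boundary cases holding a fortiori. The domination claim is then immediate: $h^*$ is the maximum of $H$ over the terminal interval, so $h^*\ge H(w)$, which by the invariant already exceeds every value probed before the final scan, while the scan probes only indices in $l\col r$.

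It remains to count. Since each iteration discards a proportion at least $\nu/2$ of the current interval, its length decays geometrically, so $\OO(\log(R-L))$ iterations bring it down to $d_o$, each at the cost of one new probe, and the terminal scan contributes only $\OO(d_o)=\OO(1)$ further probes. The step I expect to be the main obstacle is the invariant bookkeeping across the asymmetric shrink rule: one must check that, whichever half is retained, the discarded half can never hide an index that would violate the local-maximum condition at the retained endpoint, and one must seed the invariant correctly at initialization so that $L$ and $R$ are genuinely dominated. Once these are settled, the induction and the geometric-decay count are routine.
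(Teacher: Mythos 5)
First, a point of reference: the paper does not prove \cref{Lem:OS} itself --- it is imported verbatim from \citet{Kovacs_2020} --- so there is no in-paper argument to compare yours against, and your proof must stand on its own. It does not, because the loop invariant on which everything rests fails at its base case. You require that at the start of every iteration $H(w)\ge H(l)$, $H(w)\ge H(r)$, and $H(w)$ dominates all probes so far; but \cref{Alg:OS} never evaluates $H$ at the initial endpoints $L$ and $R$ (nor at the initial $M$ before the first comparison), so the induction cannot be seeded, and you acknowledge this without offering a fix --- none exists without modifying the algorithm, since the first triple $(L,M,R)$ carries no information whatsoever about $H(L)$ and $H(R)$. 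What the algorithm actually guarantees is weaker: $H(M)$ is non-decreasing across iterations, every discarded probe $W$ satisfied $H(W)\le H(M)$ at the moment of discarding, and the final scan returns a value at least $H(M)$; this does yield the domination claim, but it controls $H$ only at probed points, never at the unprobed neighbours of the surviving interval's endpoints.

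This matters because your extraction of the local-maximum property from the terminal interval also departs from the displayed algorithm: the final loop returns the \emph{first strict maximizer} of $H$ over $l\col r$ in scan order, not $w$, so your fallback ``$w$ itself may be taken'' describes a different procedure. If the returned index is the terminal endpoint $l$ with $l>L$, nothing probed during the run constrains $H(l-1)$. Concretely, take $\nu=1/2$, $d_o=3$, $L=1$, $R=9$, $H(1)=100$, $H(2)=H(3)=0$ and $H(k)=-1$ for $k\in 4\col 9$: the initial probe point is $M=3$, the run discards $6$, then $5$, then $2$ (since $H(2)>H(3)$ is false), then $4$, terminates with the interval $2\col 4$, and the final scan returns $k^*=2$, which is not a local maximum because $H(1)>H(2)$. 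So the conclusion genuinely requires either a distinctness/shape assumption on $H$ (as in \citealp{Kovacs_2020}, and as is satisfied up to ties by the piecewise-linear scores of \cref{Lem:HH2_HH3_Char}) or a tie-breaking rule favouring $M$; an invariant of the kind you propose cannot close this as written. A smaller inaccuracy: in the promote case ($H(W)>H(M)$) the retained interval is the \emph{larger} half, so ``each iteration discards a proportion at least $\nu/2$'' is false there; the geometric decay instead follows from the fact that after every promotion $M$ sits at a fixed fraction of the new interval, or from a two-iteration accounting.
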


\begin{algorithm}[t]
\caption{Optimistic search:~$\OS(L, R, M, H, \nu, d_o)$}
\label{Alg:OS}
\vspace{1ex}
\KwIn{$L, R: L\le R$, $M\in (L\col R)\cup \{0\}$, $H: L\col R \to \R$, $\nu\in(0,1)$, $d_o>1$}
\KwOut{$(k^*, h^*)$, a local maximum of $H$ and its $H$-value}
\lIf{$M=0$}
    {$M \gets \lfloor (L+\nu R)/(1+\nu) \rfloor$}
\While{$R - L \ge d_o$}{
	\eIf{$R - M > M - L$}{
		$W \gets \lceil R - \nu (R - M) \rceil$\;
		\leIf{$H(W) > H(M)$}
		{$L \gets M$ ; $M \gets W$}
		{$R \gets W$}
		}{
		$W \gets \lceil L + \nu (M - L) \rceil$\;
		\leIf{$H(W) > H(M)$}
		{$R \gets M$ ; $M \gets W$}
		{$L \gets W$}
	}
}
$h^* \gets - \infty$\;
\For{$k \in L\col R$}{
	\lIf{$H(k) > h^*$}
	{$(k^*,h^*) \gets (k, H(k))$}
}
\end{algorithm}

In the sequel, we assume fixed values for the tuning parameter~$\nu$ and the minimal interval length~$d_o$, and therefore drop the dependence on those parameters when calling~$\OS$. Furthermore, since $H$ will be replaced by either $H^2$ or $H^3$, OS will return not only $k^*$ and $h^*$, but also the argument $\boldsymbol{i}^*$ that maximizes the corresponding local likelihood.

%-------------------------------------------------------------
\subsubsection{One-dimensional OS on \texorpdfstring{$H^2$}{H2}}
%-------------------------------------------------------------
%
To obtain a local maximum of~$H^2$ over~$\KK^2_{\ell:r}$, simply set~$H=H^2$, $L = \ell+1$, $R = r$ and apply OS. The resulting procedure, as shown in Algorithm~\ref{Alg:OSH2}, requires~$\OO(\log(r-\ell))$ iterations and returns~$k^*\in \KK^2_{\ell:r}$ such that $\tilde{\x}^2
	:=( i_1^* \bone_{k^*-\ell},i_2^* \bone_{r-k^*+1})
$
is an \textit{approximate best local path with two segments} on~$\ell\col r$ and previous state~$x_0$, with
$
	(i_1^*,i_2^*)
	:=
	\argmax_{i_1\neq i_2}\, 
	\Lambda_{\ell:r}\bigl((i_1 \bone_{k^*-\ell},i_2 \bone_{r-k^*+1})| x_0\bigr).
$

\begin{algorithm}
\caption{Optimistic search for~$H^2$: $\OSH^2(\ell,r,x_0)$}
\label{Alg:OSH2}
\vspace{1ex}
\KwIn{$\ell,r: r-\ell \ge 1$, $x_0\in\XX$}
\KwOut{$(k^*,h^*,\bi^*)$, a local maximum of~$H^2$, its~value and associated states}
$(k^*,h^*, \bi^*) \gets \OS \bigl(\ell+1, r, 0, H^2\bigr)$\;
\end{algorithm}

%-------------------------------------------------------------
\subsubsection{Two-dimensional OS on \texorpdfstring{$H^3$}{H3}}
%-------------------------------------------------------------
The procedure consists in an alternation of the fixed and varying arguments of~$H^3$ and the usage of Algorithm~\ref{Alg:OS} to the varying one.

\paragraph{Strategy}
The strategy for the two-dimensional OS can be broken down in three steps:

\noindent I. Initialization: Initialize~$h_\old$ and~$h_\new$ to~$-\infty$ and set some arbitrary index~$k_o\in (\ell+2)\col r$. We also set the initial solution~$\k^*$ to have~$k_o$ as a second component, i.e.\ $k_2^* = k_o$.

\noindent II. Horizontal search: For the first iteration or as long as~$h_\old$ is strictly smaller than~$h_\new$: Update~$h_\old \leftarrow h_\new$ and apply OS to the function~$H(k)=H^3(k,k_2^*)$ defined for~$k\in L\col R = (\ell+1)\col(k_2^*-1)$ using the current value of~$k_1^*$ as the first probe point (i.e.\ $M = k_1^*$ in the first step of OS), for all but the first iteration, for which the default initial probe is used. This so-called \textit{horizontal search} yields an index~$k^*$ which is a local maximum of~$H$ and which replaces the old value of~$k_1^*$. It also returns the score~$h_\new$ of that new~$\k^*$.
    
\noindent III. Vertical search: If~$h_\old$ is still strictly smaller than~$h_\new$ (which is necessarily the case for the first iteration), we perform a \textit{vertical search}: Update~$h_\old \leftarrow h_\new$ and apply OS to the function~$H(k)=H^3(k_1^*,k)$ defined for~$k\in L\col R = (k_1^*+1)\col r$ using the current value of~$k_2^*$ as the first probe point (i.e.\ $M = k_2^*$ in the first step of OS). This yields an index~$k^*$ which is a local maximum of~$H$ and which replaces the old value of~$k_2^*$, as well as the score~$h_\new$ of that new~$\k^*$.

Unless the new score is no larger than the old one at a certain stage, we alternate between the horizontal and vertical searches, swapping the roles of the fixed and varying components of~$H^3$ and applying OS to the varying one.

This alternating procedure strictly increases the score at each iteration, unless two consecutive ones yield the same score, in which case a local maximum of~$H^3$ is found. Indeed, because the current best index of the varying component of~$H^3$ is used as the first probe point~$M$ of OS, this ensures that any update of~$M$ in the while-loop of OS strictly increases the score. In contrast, if~$M$ remains unchanged in the while-loop and is returned at the end of the for-loop, then~$\k^*$ has not been updated twice in a row. Consequently,~$k_1^*$ is a ``vertical local maximum'' of~$H^3(\cdot,k_2^*)$ and~$k_2^*$ is a ``horizontal local maximum'' of~$H^3(k_1^*,\cdot)$. In other words,~$\k^*$ is a two-dimensional local maximum of~$H^3$.

\begin{Lemma}
\label{Lem:aOS}
Let~$V:\KK^3_{\ell:r} \to \R$ and suppose that every one-dimensional local maximum of~$V$ lies on an~$s\times s$ grid, i.e., there is a subset~$\KK$ of~$(\ell+1) \col r$ with cardinality~$s$ such that:

(i) For every~$k_1 \in (\ell+1) \col (r-1)$, all local maxima of~$V(k_1, \cdot)$ on~$(k_1+1)\col r$ are in~$\KK$;

(ii) For every~$k_2 \in (\ell+2) \col r$, all local maxima of~$V(\cdot, k_2)$ on~$(\ell+1) \col (k_2-1)$ are in~$\KK$.

\noindent Then, the alternation of horizontal and vertical searches returns a two-dimensional local maximum of~$V$ on~$\KK^3_{\ell:r}$ in~$\OO\bigl(s^2\log (r-\ell)\bigr)$ probes of~$V$.
\end{Lemma}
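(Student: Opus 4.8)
The plan is to split the statement into a complexity part (the $\OO\bigl(s^2\log(r-\ell)\bigr)$ bound) and a correctness part (that the returned~$\k^*$ is a two-dimensional local maximum of~$V$), and to treat the complexity part as the main content. Since a horizontal search applies~$\OS$ to~$V(\cdot,k_2^*)$ on a subinterval of~$(\ell+1)\col r$ and a vertical search applies~$\OS$ to~$V(k_1^*,\cdot)$ on a subinterval of~$(\ell+1)\col r$, \cref{Lem:OS} already guarantees that each individual search costs~$\OO(\log(r-\ell))$ probes of~$V$. It therefore suffices to show that the alternation performs~$\OO(s^2)$ searches in total.

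To bound the number of searches I would combine two ingredients. First, the grid hypotheses confine the trajectory of~$\k^*$: by \cref{Lem:OS} a horizontal search returns~$k_1^*$ as a one-dimensional local maximum of~$V(\cdot,k_2^*)$, whence (ii) gives~$k_1^*\in\KK$ afterwards, and a vertical search returns~$k_2^*$ as a one-dimensional local maximum of~$V(k_1^*,\cdot)$, whence (i) gives~$k_2^*\in\KK$ afterwards. As a horizontal search leaves~$k_2^*$ untouched and a vertical search leaves~$k_1^*$ untouched, from the first vertical search onward the iterate satisfies~$\k^*\in\KK\times\KK$, a set of cardinality at most~$s^2$. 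Second, the score is strictly monotone: for every search after the first, the first probe~$M$ equals the current value of the varying coordinate, so~$\OS$ returns a value~$h^*\ge V(\k^*)$, and the loop proceeds only when this inequality is strict (the first search trivially raises the score from~$-\infty$). Hence~$V(\k^*)$ is strictly increasing along the sequence of post-search iterates, these iterates are pairwise distinct, and since all but the first two lie in~$\KK\times\KK$ there can be at most~$s^2$ of them. Together with the initial horizontal search and the terminal non-improving search this yields~$\OO(s^2)$ searches, hence the claimed~$\OO\bigl(s^2\log(r-\ell)\bigr)$ bound and, incidentally, termination.

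For correctness I would argue as in the discussion preceding the lemma. The procedure stops at the first search failing to strictly increase the score; suppose it is a vertical search, the horizontal case being symmetric. The preceding horizontal search made~$k_1^*$ a one-dimensional local maximum of~$V(\cdot,k_2^*)$, and~$k_2^*$ is unchanged up to the final vertical step. In that step the first probe is~$M=k_2^*$ with value~$V(k_1^*,k_2^*)$; since the returned value dominates all probes yet the score does not strictly increase,~$k_2^*$ itself attains the maximum of~$V(k_1^*,\cdot)$ over the terminal window and is returned unchanged, so~$k_2^*$ is a one-dimensional local maximum of~$V(k_1^*,\cdot)$. Combining the two facts gives~$V(k_1^*\pm1,k_2^*)\le V(\k^*)$ and~$V(k_1^*,k_2^*\pm1)\le V(\k^*)$; as these four points are exactly the~$\ell_1$-neighbors of~$\k^*$, the iterate~$\k^*$ is a two-dimensional local maximum of~$V$.

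The main obstacle is the interplay between the two grid conditions and strict monotonicity: one must invoke (i) and (ii) \emph{together} to keep both coordinates of~$\k^*$ on the grid simultaneously, and rely on the ``first probe equals the current iterate'' design of~$\OS$ so that each genuine search step strictly raises the score and no grid point is revisited. A minor subtlety, which I would dispatch with a tie-breaking convention returning the first probe~$M$ whenever it attains the window maximum, is that a non-improving~$\OS$ call leaves the varying coordinate \emph{literally} unchanged; this is precisely what lets the previous search's local-maximality statement persist to termination and deliver a genuine two-dimensional local maximum.
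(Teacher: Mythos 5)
Your proposal is correct and follows essentially the same route as the paper's proof: Lemma~\ref{Lem:OS} plus hypotheses (i) and (ii) confine the iterates to the grid $\KK\times\KK$ after at most two searches, strict increase of the score rules out revisiting a grid point so the chain of alternations has length $\OO(s^2)$, each search costs $\OO(\log(r-\ell))$ probes, and at termination the point is simultaneously a horizontal and vertical local maximum, hence a two-dimensional one. Your added remark about the tie-breaking convention in $\OS$ (returning the first probe $M$ when it attains the window maximum, so that the varying coordinate is literally unchanged at the terminal step) makes explicit a detail the paper leaves implicit, but does not change the argument.
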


It will be shown (\cref{Sec:Justification}; cf.\ \cref{Fig:Gain_Functions_Population_Versions}) that the function~$H^3$ defined in \eqref{Eq:H3} fulfills the conditions of Lemma~\ref{Lem:aOS} in a noiseless scenario. The specific set~$\KK$ is shown to be~$\{\ka_1, \ldots, \ka_{c-1}\}$, where each~$\ka_a$ is a true change point. Thus, the alternating procedure returns a pair~$\k^*$ consisting of two true change points. An alternative way to treat vertical and horizontal searches which takes into account boundary effects of~$\KK^3_{\ell:r}$ is presented in Section~B of the Supplement.

\paragraph{Diagonal elements}
If the alternation of OS terminates at an element~$\k^*$ on the diagonal of~$\KK^3_{\ell:r}$, then that~$\k^*$ is in general a local maximum of~$H^3$. {Since, in this case, maximality is evaluated using at most two other elements $\k\in\KK^3_{\ell:r}$, we allow for an additional comparison with diagonal elements and proceed alternatively:} Apply OS to the function~$(\ell+1)\col (r-1)\ni k \mapsto H^3(k, k+1)$ with~$k_1^*$ as the first probe point, resulting in an element~$k^*$ and a (non-necessarily strict) increase of the score. Once~$\k^*$ has been updated to the new pair~$(k^*,k^*+1)$, the alternation between horizontal and vertical searches proceeds as explained earlier.

\paragraph{Maximum number of alternations}
To prevent the algorithm from performing too many alternations, we stop it if the number of iterations exceeds $v_o$. Our experiments show that~$v_o=20$ performs well. Should the algorithm terminate from this stopping criteria, the last update of~$\k^*$ and its corresponding~$H^3$-score are returned. The element~$\k^*$ then has no guarantee of being a local maximum of~$H^3$, but is necessarily the element with the largest $H^3$-score of all elements visited so far, including all the probes performed by OS.

\begin{algorithm}[t]
\caption{Seeded optimistic search for~$H^3$: $\sOSH^3(\ell,r,x_0,v_o,k_o)$}
\label{Alg:sOSH3}
\vspace{1ex}
\KwIn{$\ell,r: r-\ell \ge 2$, $x_0\in\XX$, $v_o>1$, $k_o\in (\ell+2)\col r$}
\KwOut{$(\k^*,h^*, \bi^*)$, an element of $\KK^3_{\ell:r}$, its $H^3$-value and associated states}
$\k^* \gets (\ell + 1, k_o)$; $h_\old \gets h_\new \gets -\infty$; $v \gets 1$; $\tau \gets 0$  ($0=\text{horizontal}$, $1=\text{vertical}$)\;
\While{{\normalfont ($h_\old < h_\new$ \textbf{and} $v < v_o$) \textbf{or} $v = 1$}}{
	$h_\old \gets h_\new$\;
	\eIf{$\tau=0$}{
		$(k_1^*, h^*, \bi^*) \gets \OS
			\bigl(\ell+1, k^*_2-1, k_1^*, H^3(\cdot,k^*_2)\bigr)$\;}{
		$(k_2^*,h^*, \bi^*) \gets \OS
			\bigl(k_1^*+1, r, k_2^*, H^3(k^*_1,\cdot)\bigr)$\;}
	\If{$k_1^*+1=k_2^*$}{
	    $(k^*, h^*, \bi^*) \gets \OS
	        \bigl(\ell + 1, r - 1, k_1^*, k \mapsto H^3(k, k+1)\bigr)$\; $\k^* \gets (k^*, k^*+1)$\;}
	$h_\new \gets h^*$; $v\gets v+1$; 
	$\tau \gets 1 - \tau$ (change direction)\;
}
\end{algorithm}

\paragraph{Complete two-dimensional search}
The procedure, as described to this point and which relies on an initial seed~$k_o\in (\ell+2)\col r$, is summarized in Algorithm~\ref{Alg:sOSH3}. Now we choose~$\ns\ge 1$ evenly spaced starting points~$k_o \in (\ell+2)\col r$, run Algorithm~\ref{Alg:sOSH3} for each of those \textit{seeds}, and select the endpoint~$\k^*$ with the largest~$H^3$-score. This method allows to increase the chance of finding an element~$\k^*\in\KK^3_{\ell:r}$ with a large value of~$H^3$. Simulations showed that~$\ns=3$ is a good trade-of between speed and exploration of the space~$\KK^3_{\ell:r}$.

The complete procedure is summarized in Algorithm~\ref{Alg:OSH3}. It returns~$\k^*\in \KK^3_{\ell:r}$ such that
$\tilde{\x}^3:=(i_1^* \bone_{k_1^*-\ell},i_2^* \bone_{k_2^*-k_1^*},i_3^* \bone_{r-k_2^*+1})$
is an \emph{approximate best local path with three segments} on~$\ell\col r$ and previous state $x_0$, where
\[
	(i_1^*,i_2^*,i_3^*) := 
	\argmax_{i_1\neq i_2\neq i_3}\, 
	\Lambda_{\ell:r}\bigl((i_1\bone_{k_1^*-\ell},i_2 \bone_{k_2^*-k_1^*},i_3 \bone_{r-k_2^*+1})| x_0\bigr).
\]

\begin{algorithm}[t]
\caption{Optimistic search for~$H^3$: $\OSH^3(\ell,r,x_0,v_o,\ns)$}
\label{Alg:OSH3}
\vspace{1ex}
\KwIn{$\ell,r: r-\ell \ge 2$, $x_0\in\XX$, $v_o>1$, $\ns \in 1\col (r-\ell-1)$}
\KwOut{$(\k^*,h^*, \bi^*)$, an element of $\KK^3_{\ell:r}$, its $H^3$-value and associated states}
$h^* \gets -\infty$ \;
\For{$i\in 1\col \ns$}{
	$k_o \gets \ell + 2 + \lfloor i \cdot (r-\ell-1)/(\ns+1) \rfloor$\;
	$(\k^\temp,h^\temp, \bi^\temp) \gets \sOSH^3(\ell,r,x_0,v_o, k_o)$\;
	\lIf{$h^\temp> h^*$}
	{$(\k^*,h^*, \bi^*) \gets (\k^\temp,h^\temp, \bi^\temp)$}
}
\end{algorithm}

%-------------------------------------------------------------
\subsection{Linearization}
\label{Sec:Linearization}
%-------------------------------------------------------------
%
The maximizations in \eqref{Eq:H1} to \eqref{Eq:H3} needed to compute~$H^c$, $c\in 1\col 3$, involve the evaluation of~$\Lambda_{\ell:r}$ at~$i \bone_{r-\ell+1}$, $(i_1 \bone_{k-\ell},i_2 \bone_{r-k+1})$ and $(i_1 \bone_{k_1-\ell},i_2 \bone_{k_2-k_1},i_3 \bone_{r-k_2+1})$. Consider for instance the computation of~$\Lambda_{\ell:r}\bigl((i_1 \bone_{k-\ell},i_2 \bone_{r-k+1})|x_0\bigr)$ when~$\ell>1$, which is {
\begin{multline*}
	\Lambda_{\ell:r}\bigl((i_1 \bone_{k-\ell},i_2 \bone_{r-k+1})|x_0\bigr)
	= \\
	p_{x_0i_1}^{(\ell-1)}f_{i_1}^{(\ell)}(y_{\ell})
	\left[
		 \prod_{t=\ell+1}^{k-1}p_{i_1i_1}^{(t-1)} f_{i_1}^{(t)}(y_t)
	\right]
	p_{i_1i_2}^{(k-1)}f_{i_2}^{(k)}(y_k)
	\left[
		 \prod_{t=k+1}^{r} p_{i_2i_2}^{(t-1)} f_{i_2}^{(t)}(y_t)
	\right].
\end{multline*}
Define
$
    \F
	=
	(F_{ik})
	:=
	\bigl(
		\prod_{t=1}^k (f_i^{(t)}(y_t)+1_{[f_i^{(t)}(y_t) = 0]})
	\bigr),
$ 
$
\bar{\F} = (\bar{F}_{ik}) := \bigl(\sum_{t = 1}^k 1_{[f_i^{(t)}(y_t) = 0]}\bigr),
$  
$\P = (P_{ik}):=
\bigl(\prod_{t=1}^k (p_{ii}^{(t)}+1_{[p_{ii}^{(t)} = 0]})\bigr)$ and 
$
\bar{\P} = (\bar{P}_{ik}):=
\bigl(\sum_{t=1}^k 1_{[p_{ii}^{(t)} = 0]}\bigr)
$ in $\mathbb{R}^{m\times n}$.
Then,}
\begin{multline*}{
    \Lambda_{\ell:r}\bigl((i_1 \bone_{k-\ell},i_2 \bone_{r-k+1})|x_0\bigr) 
	\ = \
	p_{x_0i_1}^{(\ell-1)}
	 p_{i_1i_2}^{(k-1)}
	 \frac{P_{i_1\, k-2}}{P_{i_1\, \ell-1}}
	 \frac{P_{i_2\,r-1}}{P_{i_2\,k-1}}
	\frac{F_{i_1\, k-1}}{F_{i_1\, \ell-1}}
	\frac{F_{i_2\, r}}{F_{i_2\,k-1}} }\\
	{\cdot 1_{[\bar{P}_{i_1\, \ell-1} =\bar{P}_{i_1\, k-2} ]}  1_{[ \bar{P}_{i_2\,k-1}=\bar{P}_{i_2\,r-1}]} 1_{[\bar{F}_{i_1\, \ell-1}= \bar{F}_{i_1\, k-1}]} 1_{[\bar{F}_{i_2\,k-1}=\bar{F}_{i_2\, r}]}.}
\end{multline*}{
Here the indicators prevent $F$ or $P$ from vanishing at index $(i, k)$ when $f_i^{(k)}(y_k) = 0$ or $p_{ii}^{(k)} = 0$, while $\bar{F}$ and $\bar{P}$ track such cases to ensure the likelihood is indeed zero, allowing for the general setting where $f_i^{(k)}(y_k)$ or $p_{ii}^{(k)}$ may be zero.}
Thus, precomputing the matrices~$\F$, { $\bar{\F}$, $\P$ and $\bar{\P}$} in~$\OO(mn)$ operations and memory allows to compute any~$\Lambda_{\ell:r}(i \bone_{r-\ell+1}|x_0)$, $\Lambda_{\ell:r}\bigl((i_1 \bone_{k-\ell},i_2 \bone_{r-k+1})|x_0\bigr)$ or $\Lambda_{\ell:r}\bigl((i_1 \bone_{k_1-\ell},i_2 \bone_{k_2-k_1},i_3 \bone_{r-k_2+1})|x_0\bigr)$ in just~$\OO(1)$ operations, that is, independently of the size of~$\ell\col r$, and therefore~$n$.

\begin{Lemma}
The computational costs of single evaluations of~$H^1$, $H^2(k)$ and $H^3(\k)$ are respectively~$\OO(m)$, $\OO(m^2)$ and $\OO(m^3)$.
\end{Lemma}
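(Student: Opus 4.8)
The plan is to decompose the cost of each maximization into two factors: the number of candidate paths over which the maximum is taken, and the cost of evaluating~$\Lambda_{\ell:r}$ at a single such candidate. The claim then follows by combining an~$\OO(1)$ per-candidate bound (valid under \cref{Aspt:PositiveDensity}) with an elementary count of candidates in each of~\eqref{Eq:H1}, \eqref{Eq:H2} and~\eqref{Eq:H3}.

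First I would establish the~$\OO(1)$ per-evaluation cost, which is the crux of the argument. The factorization~\eqref{Eq:Lambda_ellr_SimpleProduct}, together with the precomputed cumulative-product matrix~$\F$, already expresses~$\Lambda_{\ell:r}(\x_{\ell:k:r}(\bi)| x_0)$ as a fixed-length product of transition probabilities (and their powers) and two ratios of entries of~$\F$. Since the entries of~$\F$ are looked up in constant time and the product has a bounded number of factors, each such evaluation costs~$\OO(1)$, independently of~$d=r-\ell+1$ and hence of~$n$. I would then record the two analogous factorizations: the constant path~$\x_{\ell:r}(i)$ contributes a single geometric block~$p_{x_0 i}\,p_{ii}^{\,r-\ell}$ times one~$\F$-ratio, and the three-segment path~$\x_{\ell:k_1:k_2:r}(\bi)$ contributes three geometric blocks joined by two transitions times three~$\F$-ratios; both again reduce to~$\OO(1)$ lookups and arithmetic.

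It then remains to count candidates and multiply. For~$H^1$, the maximum ranges over the~$m$ constant paths~$\x_{\ell:r}(i)$, $i\in\XX$, giving~$\OO(m)\cdot\OO(1)=\OO(m)$. For~$H^2(k)$, the maximum ranges over the ordered pairs~$\bi\in\XX^2$ with~$i_1\neq i_2$, of which there are~$m(m-1)=\OO(m^2)$, giving~$\OO(m^2)$. For~$H^3(\k)$, the maximum ranges over the triples~$\bi\in\XX^3$ with~$i_1\neq i_2$ and~$i_2\neq i_3$, of which there are~$m(m-1)^2=\OO(m^3)$, giving~$\OO(m^3)$. The counting is immediate, so I expect the only point genuinely requiring care to be the per-evaluation bound, and within it the treatment of the matrix powers~$p_{ii}^{\,k-\ell-1}$ and~$p_{i_2i_2}^{\,r-k}$: strictly these require~$\OO(\log d)$ multiplications by repeated squaring, so the~$\OO(1)$ claim should be stated in the standard computational model in which exponentiation counts as a constant-time primitive (as is implicit in the preceding discussion). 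I would flag this explicitly to keep the complexity accounting unambiguous.
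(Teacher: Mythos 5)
Your argument is correct and is exactly the justification the paper gives (implicitly, in the discussion preceding the lemma): after precomputing the cumulative-product matrix~$\F$, each evaluation of~$\Lambda_{\ell:r}$ at a candidate path costs~$\OO(1)$, and the maxima in~\eqref{Eq:H1}--\eqref{Eq:H3} range over~$m$, $m(m-1)$ and~$m(m-1)^2$ candidates respectively. Your caveat about the powers~$p_{ii}^{\,k-\ell-1}$ is a fair point of rigor, though it dissolves entirely in the linearized version the algorithm actually uses, where the power becomes the single multiplication~$q_{ii}(k-\ell-1)$.
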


To prevent numerical instability due to multiplication and division of small quantities, we linearize all relevant expressions by applying the natural logarithm (see Section~\ref{App:LogLocLike} of the Supplement): We replace~$H^1$ by
\[
	\HH^1
	\ := \
	\max_{i\in \XX}\,
	\log \Lambda_{\ell:r} (i \bone_{r-\ell+1}|x_0 ),
\]
and, in Algorithms~\ref{Alg:OSH2} and \ref{Alg:OSH3}, we replace~$H^2$ and~$H^3$ by
\begin{align*}
    \HH^2(k)
	\ &:= \
	\max_{i_1\neq i_2}\,
	\log \Lambda_{\ell:r} \bigl((i_1 \bone_{k-\ell},i_2 \bone_{r-k+1})|x_0 \bigr), \\
	\HH^3(\k)
	\ &:= \
	\max_{i_1\neq i_2\neq i_3}\,
	\log \Lambda_{\ell:r} \bigl((i_1 \bone_{k_1-\ell},i_2 \bone_{k_2-k_1},i_3 \bone_{r-k_2+1})|x_0 \bigr),
\end{align*}
respectively. The above two maps are displayed in \cref{Fig:Gain_Functions}.

\begin{Corollary}
\label{Cor:Complexity_HHc}
The computational costs of~$\HH^1$, $\HH^2(k)$ and $\HH^3(\k)$ are respectively $\OO(m)$, $\OO(m^2)$ and $\OO(m^3)$.
\end{Corollary}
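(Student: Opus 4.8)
The plan is to prove \cref{Cor:Complexity_HHc} by transferring the complexity bounds established for $H^1,H^2,H^3$ in the preceding (unnumbered) Lemma to their logarithmic counterparts $\HH^1,\HH^2,\HH^3$. The key observation is that the logarithm is a strictly increasing bijection on $(0,\infty)$, so under \cref{Aspt:PositiveDensity} (which guarantees all densities $f_i(y_k)$ are strictly positive, hence all local likelihoods $\Lambda_{\ell:r}(\cdot)$ are strictly positive), taking $\log$ does not change the argmax of any of the three maximization problems in \eqref{Eq:H1}--\eqref{Eq:H3}. Consequently $\HH^c = \log H^c$ pointwise for $c\in 1\col 3$, and the combinatorial structure of each maximization (over $m$ states, over $m(m-1)$ admissible state pairs, over $m(m-1)^2$ admissible state triples) is identical.

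First I would make precise the data-storage reduction. Using \cref{Aspt:PositiveDensity}, each factor $p_{ij}$ and $f_i(y_k)$ is positive, so $\log p_{ij}=q_{ij}$ and the cumulative log-sums $G_{ik}=\sum_{t=1}^k\log f_i(y_t)$ in \eqref{Eq:DataStorage} are well defined and finite. I would then appeal to the computation in \cref{App:LogLocLike} to express each of $\log\Lambda_{\ell:r}(\x_{\ell:r}(i)|x_0)$, $\log\Lambda_{\ell:r}(\x_{\ell:k:r}(\bi)|x_0)$ and $\log\Lambda_{\ell:r}(\x_{\ell:k_1:k_2:r}(\bi)|x_0)$ as a fixed-size combination of entries of $\q$ and differences of entries of $\G$; for instance, taking $\log$ of \eqref{Eq:Lambda_ellr_SimpleProduct} turns the product into a sum of $q$-entries and differences $G_{i_1,k-1}-G_{i_1,\ell-1}$ and $G_{i_2,r}-G_{i_2,k-1}$. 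The essential point is that, after the $\OO(mn)$ preprocessing that builds $\q$ and $\G$, each such log-local-likelihood value is evaluable in $\OO(1)$ arithmetic operations, independently of $\ell,r$ and hence of $n$.

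With the $\OO(1)$ per-evaluation cost in hand, the counting is routine. For $\HH^1$ I would note that the maximum ranges over the $m$ choices of $i\in\XX$, each costing $\OO(1)$, giving $\OO(m)$. For $\HH^2(k)$, the maximum is over pairs $\bi\in\XX^2$ with $i_1\neq i_2$, of which there are $m(m-1)=\OO(m^2)$, again each $\OO(1)$, giving $\OO(m^2)$. For $\HH^3(\k)$, the maximum is over triples $\bi\in\XX^3$ with $i_1\neq i_2\neq i_3$, numbering $m(m-1)^2=\OO(m^3)$, giving $\OO(m^3)$. Since $\HH^c=\log H^c$ and the bijectivity of $\log$ means the same enumeration realizes both optima, these costs coincide with those of the preceding Lemma.

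The main obstacle is not the counting but verifying cleanly that \cref{Aspt:PositiveDensity} is exactly what is needed to make the $\log$-linearization legitimate and to guarantee the $\OO(1)$ evaluability: without positivity of every $f_i(y_k)$, some $G_{ik}$ could be $-\infty$ and the telescoping difference representation of $\log\Lambda_{\ell:r}$ could break down. I would therefore be careful to invoke the assumption both where I reduce the product form to the cumulative-log-sum form and where I claim finiteness of all quantities being maximized, deferring the explicit algebraic identities to \cref{App:LogLocLike} rather than reproducing them here.
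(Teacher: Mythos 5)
Your proposal is correct and follows essentially the same route the paper intends: the $\OO(1)$ evaluability of each log-local likelihood after the $\OO(mn)$ precomputation of $\q$ and $\G$ (as laid out in \cref{Sec:Linearization} and \cref{App:LogLocLike}), combined with counting the $m$, $m(m-1)$ and $m(m-1)^2$ admissible state combinations, which matches the paper's own enumeration. The additional care you take in noting that \cref{Aspt:PositiveDensity} guarantees finiteness of the $G_{ik}$ and hence legitimacy of the log-linearization is consistent with, and a slightly more explicit version of, the paper's reasoning.
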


%-------------------------------------------------------------
\subsection{Complete algorithm}
%-------------------------------------------------------------
%
The pseudocode of the complete \textit{Quick Adaptive Ternary Segmentation} (QATS) algorithm from \cref{Sec:TernarySegmentation} is given in Algorithm~\ref{Alg:QATS}. It necessitates the preprocessing of data from \cref{Sec:Linearization} and uses Algorithms~\ref{Alg:OSH2} and \ref{Alg:OSH3} in which the~$H$-maps are replaced by their log-versions~$\HH$. {The hyperparameters are $\nu$, $d_o$, $v_o$, and $\ns$, and our experiments showed that tuning them has little impact on speed and precision.} The QATS-path~$\hat{\x}$ is then built from~$\SS$ and~$\hat \z$ as in~\eqref{Eq:QATS_path}.

\begin{algorithm}[t]
\caption{Quick~Adaptive~Ternary~Segmentation~($\QATS$)}
\label{Alg:QATS}
\vspace{1ex}
\KwIn{$\log \bpi$, {$\log\P$, $\log\F$, $\bar{\P}$, $\bar{\F}$,} $v_o>1$, $\ns \in 1\col (r-\ell-1)$}
\KwOut{$(\SS,\hat{\z})$, a segmentation of $1\col n$ and its state values}
$\SS \gets (1\col n)$; $\hat{\z} \gets 1$; $s \gets 1$; $u\gets 1$\;
\While{$u\le s$}{
	\lIf{$u> 1$}
	{$x_0\gets \hz_{u-1}$}
		$(\ell\col r) \gets S_u$; $(h^*_2,h^*_3) \gets (-\infty, -\infty)$; $h^*_1 \gets \HH^1$ with associate state $i_1^*$\;
	\If{$r-\ell \ge 1$}{
		$(k^*,h^*_2, \bi_2^*) \gets \OSH^2(\ell,r,x_0)$\;
		\If{$r-\ell \ge 2$}{
			$(\k^*,h^*_3, \bi_3^*) \gets \OSH^3(\ell,r,x_0,v_o,\ns)$\;}
	}
	$\hat{c} \gets \argmax_c\, h^*_c$\;
	\lIf{$\hat{c} = 1$}
	{$\hz_u \gets i_1^*$; $u \gets u+1$}
	\lIf{$\hat{c} = 2$}
	{$S_u \gets (\ell\col (k^* -1), k^*\col r)$ ; $\hz_u \gets \bi_2^*$; $s \gets s + 1$}
	\lIf{$\hat{c} = 3$}
	{$S_u \gets (\ell\col (k^*_1 -1), k^*_1\col (k^*_2-1), k^*_2\col r)$; $\hz_u \gets \bi_3^*$ ; $s \gets s + 2$}	
}
\end{algorithm}

%-------------------------------------------------------------
\subsection{Computational complexity}
%-------------------------------------------------------------
%
The Lemma below provides a bound on the number of iterations of QATS as displayed in Algorithm~\ref{Alg:QATS}. It involves the number~$s$ of intervals in~$\SS$ returned by QATS.

\begin{Lemma}
The number of iterations in the while loop of QATS is at most~$2s-1$.
\end{Lemma}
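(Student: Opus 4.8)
The plan is a purely combinatorial bookkeeping argument tracking how the two quantities $s$ (number of segments) and $u$ (current index) evolve across iterations; no analysis of the local-search subroutines is needed, since the bound depends only on the possible values of $\hat c$ produced at each step, not on how they are obtained.

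First I would classify each pass through the while loop by the value $\hat c\in\{1,2,3\}$ it produces, and let $N_1,N_2,N_3$ denote the number of iterations with $\hat c=1,2,3$ respectively, so that the total number of iterations is $T=N_1+N_2+N_3$. The update rules give two clean effects: an iteration with $\hat c=1$ leaves $s$ unchanged and sets $u\gets u+1$, whereas an iteration with $\hat c\in\{2,3\}$ leaves $u$ unchanged and replaces one segment by two or three, hence increases $s$ by $1$ or $2$.

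Next I would pin down $N_1$. The counter $u$ starts at $1$, is weakly increasing throughout (it is incremented by exactly $1$ on every $\hat c=1$ iteration and is untouched otherwise), and by step~(2) the loop terminates immediately after a $\hat c=1$ iteration, precisely when $u$ first exceeds $s$; since $u\le s$ before that last increment, it equals $s+1$ afterwards. Because $s$ is unchanged on $\hat c=1$ steps, this terminal value of $s$ is the number returned. Counting total increments of $u$ then yields $N_1=(s+1)-1=s$.

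Finally I would record the segment-count identity. Starting from $s=1$, each $\hat c=2$ iteration adds one segment and each $\hat c=3$ iteration adds two, so $s=1+N_2+2N_3$, i.e.\ $N_2+2N_3=s-1$. Hence $N_2+N_3=(s-1)-N_3\le s-1$, and combining with $N_1=s$ gives
\[
	T \ = \ N_1+N_2+N_3 \ = \ s+(N_2+N_3) \ = \ 2s-1-N_3 \ \le \ 2s-1 ,
\]
with equality exactly when no ternary split occurs ($N_3=0$). The only point requiring care is the invariant for $u$ in the previous paragraph---verifying that $u$ never overshoots $s+1$ and that termination can only follow a $\hat c=1$ step---but this is immediate from the control flow of step~(2); everything else is arithmetic.
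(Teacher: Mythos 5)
Your proof is correct and follows essentially the same decomposition as the paper's: the paper counts at most $s-1$ splitting iterations (worst case all single splits) plus one confirming $\hat c=1$ iteration per final segment, which is exactly your identity $N_1=s$, $N_2+2N_3=s-1$. Your version merely makes the paper's informal ``worst case'' reasoning explicit via the exact bookkeeping $T=2s-1-N_3$, which is a welcome sharpening but not a different argument.
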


\begin{proof}
In the worst case, each of the~$s-1$ separations of~$1\col n$ is due to single splits ($\hat c = 2$), and no double splits ($\hat c = 3$). To confirm that on each interval of~$\SS$ the best local path with at most three segments is constant, one extra iteration ($\hat c = 1$) per interval is~necessary.
\end{proof}

Consequently, the number of calls of~$\OSH^2$ and the number of calls of~$\OSH^3$ are both bounded by~$2s-1$. Lemma~\ref{Lem:OS} implies that the number of probes of~$\HH^2$ in~$\OSH^2$ for a given interval~$\ell\col r$ is of the order~$\OO(\log(r-\ell))$. As to the complexity of~$\OSH^3$, it is broken down as follows: First, for each call of~$\OSH^3$, there are~$\ns$ calls of~$\sOSH^3$. Second, the number of alternations of vertical and horizontal searches in~$\sOSH^3$ for a given seed, and therefore the number of calls of~$\OS$, is bounded by~$v_o$, since Algorithm~\ref{Alg:sOSH3} stops as soon as the number~$v$ of alternations exceeds~$v_o$. Then again, for each call of~$\OS$, the number of probes of~$\HH^3$ is of the order~$\OO(\log(r-\ell))$. Finally, Corollary~\ref{Cor:Complexity_HHc} ensures that each query~$\HH^c$ costs~$\OO(m^c)$ operations,~$c\in 1\col 3$. This reasoning proves the following result:

\begin{Theorem}
QATS has computational complexity~$\OO(sm^3\log n)$.
\end{Theorem}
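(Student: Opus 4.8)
The plan is to bound the total runtime of \cref{Alg:QATS} as a product of three separately controlled quantities: the number of outer while-loop iterations, the number of score evaluations incurred per iteration, and the cost of a single score evaluation. Each of these factors has already been prepared in the preceding discussion, so the argument is essentially an assembly of existing bounds together with a uniform control of the segment length inside the logarithm.

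First I would invoke the Lemma above to cap the number of while-loop iterations of \cref{Alg:QATS} by $2s-1=\OO(s)$. A single iteration processes one current segment $\ell\col r$: it computes $\HH^1$ in $\OO(m)$ operations and then calls $\OSH^2$ once and, when $r-\ell\ge 2$, $\OSH^3$ once. By \cref{Lem:OS}, each invocation of $\OS$ on $\ell\col r$ performs $\OO(\log(r-\ell))$ probes, so $\OSH^2$ performs $\OO(\log(r-\ell))$ evaluations of $\HH^2$. For $\OSH^3$ I would count as follows: it launches $\ns$ calls of $\sOSH^3$; each call runs at most $v_o$ alternations of horizontal and vertical searches, because the while loop of \cref{Alg:sOSH3} halts once $v\ge v_o$; and every alternation triggers at most two $\OS$ calls (a directional search, plus in the diagonal branch the search along $k\mapsto\HH^3(k,k+1)$). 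Hence one run of $\OSH^3$ performs $\OO(\ns\,v_o\log(r-\ell))$ evaluations of $\HH^3$.

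Since $\ns$ and $v_o$ are fixed constants (here $\ns=3$ and $v_o=20$), they are absorbed into the $\OO$-notation, leaving $\OO(\log(r-\ell))$ evaluations of the dominating score $\HH^3$ per iteration. By \cref{Cor:Complexity_HHc} each such evaluation costs $\OO(m^3)$, which dominates the $\OO(m^2)$ and $\OO(m)$ costs of the $\HH^2$ and $\HH^1$ evaluations, so a single iteration costs $\OO(m^3\log(r-\ell))$. As every processed segment satisfies $\ell\col r\subseteq 1\col n$, we have $r-\ell<n$ and therefore $\log(r-\ell)\le\log n$ uniformly; multiplying the per-iteration bound $\OO(m^3\log n)$ by the $\OO(s)$ iteration count yields $\OO(sm^3\log n)$. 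I do not expect a genuine obstacle here, as the estimate is bookkeeping; the only points demanding care are the uniform replacement of $r-\ell$ by $n$ inside the logarithm and the observation that holding the tuning parameters $\ns$ and $v_o$ constant is exactly what prevents the seeding and alternation overhead from entering the final order. (This bound refers to \cref{Alg:QATS} run on the preprocessed inputs $\log\bpi$, $\q$, $\G$; forming $\G$ is a separate one-time $\OO(mn)$ storage step.)
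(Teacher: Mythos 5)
Your proposal is correct and follows essentially the same route as the paper: bound the while-loop iterations by $2s-1$ via the preceding Lemma, count $\OO(\log(r-\ell))$ probes per call of $\OS$ (with the constant factors $\ns$ and $v_o$ absorbed), and charge $\OO(m^3)$ per probe via \cref{Cor:Complexity_HHc}. Your explicit remarks on the uniform bound $\log(r-\ell)\le\log n$ and on the extra $\OS$ call in the diagonal branch are slightly more careful than the paper's write-up, but the argument is the same.
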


First, QATS performs particularly well with only a few segments in contrast to dynamic programming methods, which work independently of the number of segments. Second, the benefits of QATS are more pronounced for small values of~$m$ ($m \ll n/s$). Third, the process of storing the data has computational complexity~$\OO(mn)$ and is easily parallelized, e.g.\ via the parallel-scan algorithm \citep{LaFi80}, resulting in~$O(\log n)$ span complexity. That means, the overall complexity (without parallelization and proper storing of the data) is~$\OO\bigl(\max(mn,sm^3\log n)\bigr)$, with $\OO(mn)$  the computation of the~inputs~{$\log\P$, $\log\F$, $\bar{\P}$, $\bar{\F}$}~from~the~raw~data. { In practice (cf.\ \Cref{Sec:MonteCarloSimulations}), the number $s$ of segments in a QATS-path is often close to the expected number of segments in the underlying hidden Markov chain. This expectation depends explicitly on the transition matrices and initial distributions (see Lemma~\ref{Lem:ExpNumSeg}).}

%-------------------------------------------------------------
\section{Theoretical analysis}
\label{Sec:Analysis}
%-------------------------------------------------------------
%
In this section, we investigate theoretical properties of QATS, and, for technical simplicity, we focus on HMMs with two hidden states (formalized in Assumption~\ref{Aspt:SimpleSetting} below). 

%-------------------------------------------------------------
\subsection{Justification}
\label{Sec:Justification}
%-------------------------------------------------------------
%
At each step of QATS, a given interval may be split in two or three new intervals with the change point(s) being determined by the search of local maxima on~$\HH^2$ and~$\HH^3$. To justify this procedure, we show that there is indeed a one-to-one correspondence between local maxima of~$\HH^2$ and~$\HH^3$, and change points of the hidden signal. In the sequel, we study without loss of generality the case of~$\ell=1$ and~$r=n$, and therefore drop any dependence on~$\ell$ and~$r$ in the notation. 

Let~$\x^o$ be the true signal at the origin of the observations~$\y$. Let~$\KK^o:=\{k\in 2\col n: x_{k-1}\neq x_k\}$ be the set of true change points of~$\x^o$. That means, $\x^o$ consists of~$s^o=\#\KK^o+1$ segments. If~$s^o > 1$, the elements of~$\KK^o$ are written~$\ka_1< \cdots<\ka_{s^o-1}$. For convenience, we also define~$\ka_0:=1$ and~$\ka_{s^o}:=n+1$. Finally, we set a basic setting for which a mathematical analysis of~$\HH^2$ and~$\HH^3$ is tractable:
\begin{Assumption}
\label{Aspt:SimpleSetting}
Let $n\ge 3$, $m=2$, $\bpi = (1/2, 1/2)$ and $p_{12}^{{(k)}}=p_{21}^{{(k)}}=\varepsilon$ for some~$\varepsilon\in (0,1/2)$ { and all $k\in1\col n$}. Further, $\YY=\R$ and there exist constants~$\beta_1,\beta_2\in\R$, $\beta_2>0$, such that:
\begin{equation*}
	\{y_k:k\in 1\col n\} \  \in \ 1\col 2, \quad
	 \log f_i^{{(k)}}(y) \  = \ \beta_1 + \beta_2 1_{[y = i]},
	\qquad i,y\in 1\col 2, \, k\in 1\col n.
\end{equation*}
\end{Assumption}
{ The setting of Assumption~\ref{Aspt:SimpleSetting} is (temporally) homogeneous, i.e., $\p^{(k)} = \p$ and $f^{(k)}_i = f_i$, for all $k\in 1\col n$, and describes} the ideal case where the observation sequence~$\y$ completely characterizes the true signal~$\x^o$. Further, the probability that the Markov chain stays at a certain state is higher than that of jumping to another state, since~$p_{11}^{{(k)}}=p_{22}^{{(k)}}=1-\varepsilon > 1/2$.

\begin{Example}
Assumption~\ref{Aspt:SimpleSetting} holds, for instance, when~$\LL(Y_k|X_k=i)=\NN(i,\sigma^2)$,~$i\in 1\col 2$, for some~$\sigma>0$, and the dominating measure $\mu$ is the Lebesgue measure, since then~$\beta_1=-(\log (2\pi \sigma^2) + \sigma^{-2})/2$ and~$\beta_2= \sigma^{-2}/2$ satisfy the required property.
\end{Example}

In the next Theorem, we show that local maxima of~$\HH^2$ and~$\HH^3$ in the interior of~$\KK^2$ and~$\KK^3$ provide information on change points. Reciprocally, all change points appear either as local maxima of~$\HH^2$ or as components of local maxima of~$\HH^3$. In other words, the study of the maps~$\HH^2$ and~$\HH^3$ shall theoretically unveil all change points of the true sequence~$\x^o$.

\begin{Theorem}
\label{Lem:Charac_CP}
Suppose that~$s^o\ge 2$ and that Assumption~\ref{Aspt:SimpleSetting} holds. Then:

(i) Local maxima of~$\HH^2$ are located either at $2$, $n$ or $\ka_a$, $a\in 1\col (s^o-1)$. Local maxima of~$\HH^3$ are located either at boundary points~$(2,n)$,~$(2,\ka_a)$, $(\ka_a,n)$, $a\in 1\col (s^o-1)$, on the diagonal~$\{\k\in\KK^3:k_2=k_1+1\}$, or at pairs~$(\ka_a,\ka_b)$ with~$a,b\in 1\col (s^o-1)$ such that~$a<b$ and~$a+b$ is odd.

(ii) Conversely, for all~$a\in 1\col (s^o-1)$, either~$(\ka_{a-1}, \ka_a)$ or~$(\ka_a, \ka_{a+1})$ is a local maximum of~$\HH^3$. (Note that~$(\ka_0, \ka_1)$ or~$(\ka_{s^o -1}, \ka_{s^o})$ is a local maximum of~$\HH^3$ if and only if~$\ka_1$ or~$\ka_{s^o -1}$ is a local maximum of~$\HH^2$, respectively.)
\end{Theorem}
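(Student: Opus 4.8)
Since \cref{Aspt:SimpleSetting} forces $g_{x^o_k}(y_k)=\beta_1+\beta_2$ while $g_i(y_k)=\beta_1+\beta_2 1_{[y_k=i]}$ with $\beta_2>0$, we have $y_k=x^o_k$ for every $k$: the observations coincide with the hidden path. The first step is to strip away the transition and initial contributions. With $m=2$ every within-run transition equals $p_{ii}=1-\varepsilon$ and every between-run transition equals $p_{i,3-i}=\varepsilon$, so a path with $c$ segments contributes $\log\pi_{\cdot}+(n-c)\log(1-\varepsilon)+(c-1)\log\varepsilon$ regardless of where the changes sit. Writing $n_i:=\#\{k:y_k=i\}$ and introducing the signed path $D(k):=\#\{t<k:y_t=1\}-\#\{t<k:y_t=2\}$, a direct computation gives, up to a global additive constant (which never affects local maxima),
\[
	\HH^2(k)=\beta_2\max\bl(D(k)+n_2,\ n_1-D(k)\br),
	\qquad
	\HH^3(k_1,k_2)=\beta_2\max\bl(D(k_1)-D(k_2)+n_1,\ D(k_2)-D(k_1)+n_2\br),
\]
whose two branches I abbreviate $T_1,T_2$. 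The crucial structural fact is that $D$ is strictly monotone on each maximal run of $\x^o$, increasing on a run of $1$'s and decreasing on a run of $2$'s; hence its local maxima occur exactly at the $1\to2$ change points and its local minima exactly at the $2\to1$ change points, and these two kinds alternate, i.e.\ sit at indices $a$ of opposite parity.

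\textbf{Part (i).} Let $\k=(k_1,k_2)$ be a local maximum of $\HH^3$ in the interior of $\KK^3$ and off the diagonal, so all four neighbours $(k_1\pm1,k_2)$ and $(k_1,k_2\pm1)$ lie in $\KK^3$. Pick the active branch, say $T_1\ge T_2$ at $\k$, so that $\HH^3(\k)=\beta_2 T_1(\k)$. Since $\HH^3\ge\beta_2T_1$ pointwise, $\k$ is a local maximum and $\beta_2>0$, we get $T_1(\text{nbr})\le T_1(\k)$ at every neighbour; reading this off the two coordinates forces $D(k_1\pm1)\le D(k_1)$ and $D(k_2\pm1)\ge D(k_2)$, i.e.\ $D$ has a local maximum at $k_1$ and a local minimum at $k_2$. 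By the structural fact, $k_1,k_2$ are change points of opposite type, so $k_1=\ka_a$, $k_2=\ka_b$ with $a<b$ and $a+b$ odd; the branch $T_2\ge T_1$ is the mirror image. If a coordinate is pinned at the boundary ($k_1=2$ or $k_2=n$) or $\k$ lies on the diagonal $k_2=k_1+1$, the same monotonicity applied only to the free coordinate yields the listed exceptional forms $(2,\ka_a),(\ka_a,n),(2,n)$ and the diagonal. The statement for $\HH^2$ is the one-dimensional instance of the identical argument.

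\textbf{Part (ii).} Fix a change point $\ka_a$ and suppose it is a local maximum of $D$ (the minimum case is its image under $1\leftrightarrow2$). I test the two candidates $(\ka_a,\ka_{a+1})$ and $(\ka_{a-1},\ka_a)$. At $(\ka_a,\ka_{a+1})$ the gap $D(\ka_a)-D(\ka_{a+1})$ equals the length $L_a:=\ka_{a+1}-\ka_a$ of the intervening run and is maximal there; each of the four lattice neighbours lowers this gap by exactly one, so all four share the branch values $T_1=L_a-1+n_1$ and $T_2=1-L_a+n_2$, while the centre has $T_1=L_a+n_1$. The local-maximum test thus collapses to the single inequality $1-L_a+n_2\le L_a+n_1$: that is, $(\ka_a,\ka_{a+1})$ is a local maximum of $\HH^3$ iff $2L_a\ge n_2-n_1+1$. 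By the symmetric computation, $(\ka_{a-1},\ka_a)$ is a local maximum iff $2L_{a-1}\ge n_1-n_2+1$. Since $L_a,L_{a-1}\ge1$ and $n_1+n_2=n$, whichever of $n_1,n_2$ is the larger makes the opposite threshold at most $1$, so at least one inequality holds and at least one candidate is a genuine local maximum. The boundary indices $a=1$ and $a=s^o-1$ degenerate: the pair $(\ka_0,\ka_1)=(1,\ka_1)$ leaves $\KK^3$, and the corresponding test becomes precisely the one-dimensional condition for $\ka_1$ to be a local maximum of $\HH^2$, with the first run length $L_0=D(\ka_1)$ replacing $L_{a-1}$, which is exactly the parenthetical claim.

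\textbf{Main obstacle.} The reduction and Part~(i) are painless once one notices that the transition and initial terms are independent of the change locations, so everything is governed by the single monotone path $D$. The real work is in Part~(ii): one must check that the off-branch value at \emph{all four} neighbours is controlled by the one comparison $2L_a\gtrless n_2-n_1$, which relies on the maximum defining $\HH^3$ not switching branches in a way that breaks the local-maximum inequality, and one must align the $D$-type/parity alternation with the boundary conventions tying $\ka_0,\ka_{s^o}$ to $\HH^2$. I expect this neighbour bookkeeping, together with the boundary degeneracies, to be the only delicate point.
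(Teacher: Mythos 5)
Your proof is correct, and it takes a genuinely different route from the paper's. The paper first proves a structural lemma (its Lemma A.4) that partitions $\KK^2$ and $\KK^3$ into blocks delimited by consecutive change points and shows that on each block $\HH^2$ and $\HH^3$ are of the form $\mathrm{const}+\beta_2|k-\gamma_a|$ and $\mathrm{const}+\beta_2|k_2-(-1)^{a+b}k_1-\gamma_{ab}|$; Part~(i) then follows because V-shaped functions attain local maxima only at block corners, and the parity condition comes from the sign $(-1)^{a+b}$. You instead represent $\HH^2$ and $\HH^3$ globally as $\beta_2$ times a maximum of two affine functionals of the signed count $D$, and read everything off from the strict $\pm1$ monotonicity of $D$ on runs; the parity condition emerges from the alternation of maxima and minima of $D$. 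The two arguments are equivalent in content (your $T_1,T_2$ are exactly the paper's $\psi^c$ and $n-\psi^c$ up to centering), but yours avoids the block-by-block bookkeeping, at the cost of not producing the explicit piecewise formula that the paper reuses later in its sensitivity analysis (Lemmas 3.6--3.8). For Part~(ii), the paper uses the identity $\psi^3(\ka_{a-1},\ka_a)=\psi^3(\ka_a,\ka_{a+1})+(\ka_{a+1}-\ka_{a-1})$ and the dichotomy $2\psi^3(\ka_{a-1},\ka_a)\ge n+2$ or $2\psi^3(\ka_a,\ka_{a+1})\le n-2$; your run-length conditions $2L_a\ge n_2-n_1+1$ versus $2L_{a-1}\ge n_1-n_2+1$ are the same dichotomy in different coordinates, but you additionally extract an if-and-only-if characterization of which candidate pair wins, and you treat the degenerate pairs $(\ka_0,\ka_1)$ and $(\ka_{s^o-1},\ka_{s^o})$ explicitly by identifying the virtual $\HH^3$ test there with the $\HH^2$ test at $\ka_1$ resp.\ $\ka_{s^o-1}$ --- a boundary point the paper's proof glosses over. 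All the steps you flag as delicate (the four-neighbour bookkeeping, the branch-switching control via $\HH^3\ge\beta_2 T_1$ pointwise, and the fact that weak local extrema of $D$ are automatically strict because its increments are $\pm1$) check out.
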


\cref{Fig:Gain_Functions_Population_Versions} exemplifies the above result. Part~(i) of Lemma~\ref{Lem:Charac_CP} indicates that local maxima found by QATS serve as proper estimates of change points (Lemmas~\ref{Lem:OS} and \ref{Lem:aOS}). Conversely, as shown in the left panel of \cref{Fig:Gain_Functions_Population_Versions}, not every change point corresponds to a local maxima of~$\HH^2$. However, by Part~(ii) of Lemma~\ref{Lem:Charac_CP}, the collection of local maxima of~$\HH^3$ allows us to find all change points. We stress that this is one of the motivations for QATS, which uses three segments, instead of two as it is usually done in change point detection.

\begin{figure}[!t]
\centering
\includegraphics[width=0.9\textwidth]{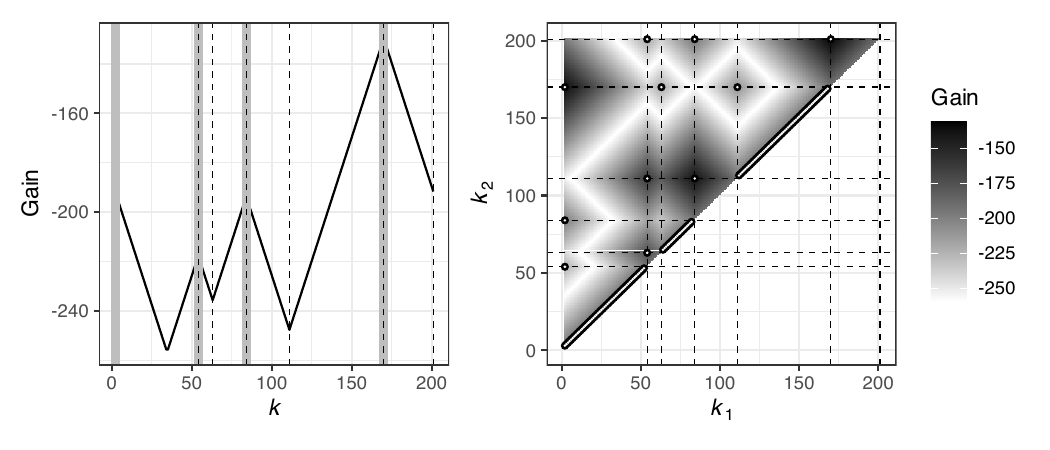}
\caption{Plots of~$\HH^2 = \log H^2$ (\textit{left}) and~$\HH^3 = \log H^3$ (\textit{right}). Local maxima (solid gray lines or gray points with black border) are located either on the boundaries, on the diagonal (for~$\HH^3$), on true change points or pairs thereof (dashed lines).} 
\label{Fig:Gain_Functions_Population_Versions}
\end{figure}

%-------------------------------------------------------------
\subsection{Sensitivity}
\label{Sec:Sensitivity}
%-------------------------------------------------------------
%
At each iteration of the while loop of Algorithm~\ref{Alg:QATS}, a local maximum~$k^*$ of~$\HH^2$ may be added to the list of change points if the local likelihood score of a path with two segments and jump at~$k^*$ is better than that of a constant path. Likewise, a local maximum~$\k^*$ of~$\HH^3$ may be added to the list of change points if the score of a path with three segments and jumps at~$\k^*$ is better than that of a constant path. The following results derive conditions on the relative position of change points in order for corresponding paths with two or three segments to have a higher score than a constant path. It also shows that if there is no change point ($s^o=1$), then the constant path indeed has a higher score than any other path with two or three segments. {These findings shed light on the detection power and estimation efficiency of QATS, which will facilitate future research on establishing statistical guarantees (e.g.\ risk bounds).} Define
$
	\delta 
	=
	\delta(\varepsilon,\beta_2)
	:=
	\beta_2^{-1} \log ((1-\varepsilon)/\varepsilon)
	>
	0.
$

\begin{Lemma}
\label{Lem:OptimH0H1H2}
Suppose that Assumption~\ref{Aspt:SimpleSetting} holds true. If~$s^o=1$, we then have that
$
    \HH^1 > \max_{k\in\KK^2}\, \HH^2(k)>\max_{\k\in\KK^3}\, \HH^3(\k).
$
If~$s^o=2$, we have
$
    \HH^2(\ka_1) > \max_{\k \in \KK^3}\, \HH^3(\k)
$
and the equivalence
$
	\HH^2(\ka_1) 
	> \HH^1
$
if, and only if,
$
	\ka_1 \in \bigl( 1+\delta , n + 1 - \delta \bigr).
$
If~$s^o=3$, we have that 
\begin{itemize}
\item  $\HH^2(\ka_1) > \HH^1$ if, and only if, $\ka_2 > \frac{n}{2} + 1 + \delta$ and $\ka_1 \in \bigl(1+\delta, 2\ka_2 - n - 1 - \delta \bigr)$;
\item  $\HH^2(\ka_2) > \HH^1$ if, and only if,  $\ka_1 < \frac{n}{2} + 1 - \delta$ and $\ka_2 \in \bigl(2\ka_1 - 1 + \delta, n + 1 - \delta \bigr)$;
\item $\HH^3(\ka_1,\ka_2) > \HH^1$ if, and only if,  $\ka_2 - \ka_1 \in ( 2\delta , n - 2\delta )$;
\item $\HH^3(\ka_1,\ka_2) > \max_{k \in \KK^2}\, \HH^2(k)$ if, and only if, $\ka_1>1+\delta$, $\ka_2<n+1-\delta$, $\ka_2- \ka_1 \ge \delta-1$.
\end{itemize}
\end{Lemma}

\begin{Lemma}
\label{Lem:H1>H0}
Suppose that Assumption~\ref{Aspt:SimpleSetting} holds and let~$s^o\ge 3$ and~$a\in 1\col (s^o-1)$. Then~$\HH^2(\ka_a) > \HH^1$ if, and only if,
\begin{equation*}
	\frac{\|\y_{1:(\ka_a-1)} - i_1 \bone\|_1}
	{\ka_a-1} 
	 < 
	\frac{1}{2} - 
	\frac{1}{2}
	\frac{\delta}{\ka_a-1}, \quad
	\frac{\|\y_{\ka_a:n} - i_2 \bone\|_1}{n-\ka_a+1}
	 < 
	\frac{1}{2} - 
	\frac{1}{2}
	\frac{\delta}{n-\ka_a+1},
\end{equation*}
for some~$(i_1,i_2)\in\{(1,2),(2,1)\}$.
\end{Lemma}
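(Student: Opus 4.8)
The plan is to turn both sides of the claimed equivalence into statements about how many observations on each side of~$\ka_a$ equal~$1$ versus~$2$, exploiting that under \cref{Aspt:SimpleSetting} the observations are binary and the log-densities are affine in the indicator~$1_{[y_k=i]}$. Since here~$\ell=1$ and~$r=n$, for a constant path~$i\cdot\bone$ one has~$\log\Lambda_{1:n}(i\cdot\bone)=\log\tfrac12+n\beta_1+(n-1)\log(1-\varepsilon)+\beta_2 n_i$, where~$n_i:=\#\{k\in 1\col n:y_k=i\}$; hence~$\HH^1=\log\tfrac12+n\beta_1+(n-1)\log(1-\varepsilon)+\beta_2\max(n_1,n_2)$. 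Writing~$L:=\ka_a-1$,~$R:=n-\ka_a+1$,~$n_j^-:=\#\{k\in 1\col(\ka_a-1):y_k=j\}$ and~$n_j^+:=\#\{k\in\ka_a\col n:y_k=j\}$, I would similarly expand the two-segment path that jumps at~$\ka_a$ with labels~$(i_1,i_2)$: it carries exactly~$n-2$ ``stay'' transitions and a single ``jump'', so~$\HH^2(\ka_a)=\log\tfrac12+n\beta_1+(n-2)\log(1-\varepsilon)+\log\varepsilon+\beta_2\max(n_1^-+n_2^+,\,n_2^-+n_1^+)$, the maximum running over~$(i_1,i_2)\in\{(1,2),(2,1)\}$.

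Subtracting and using~$\log\varepsilon-\log(1-\varepsilon)=-\beta_2\delta$ (this is precisely where~$\delta$ enters), the transition terms collapse and I obtain
\[
	\HH^2(\ka_a)-\HH^1
	\ = \
	\beta_2\Bigl[\max(n_1^-+n_2^+,\,n_2^-+n_1^+)-\max(n_1^-+n_1^+,\,n_2^-+n_2^+)\Bigr]-\beta_2\delta .
\]
Thus~$\HH^2(\ka_a)>\HH^1$ is equivalent to the bracketed difference exceeding~$\delta$. Introducing the signed margins~$p:=n_1^--n_2^-$ and~$q:=n_1^+-n_2^+$ and substituting~$n_1^-=\tfrac12(L+p)$,~$n_2^-=\tfrac12(L-p)$,~$n_1^+=\tfrac12(R+q)$,~$n_2^+=\tfrac12(R-q)$ together with~$L+R=n$, the two maxima become~$\tfrac n2+\tfrac12|p-q|$ and~$\tfrac n2+\tfrac12|p+q|$ respectively, so the criterion reduces to the clean scalar inequality~$|p-q|-|p+q|>2\delta$.

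The last step is a sign analysis of~$|p-q|-|p+q|$: this quantity equals~$+2\min(|p|,|q|)$ when~$p,q$ have opposite signs and~$-2\min(|p|,|q|)$ when they share a sign. Since~$2\delta>0$, the inequality holds exactly when~$p$ and~$q$ have opposite signs and both exceed~$\delta$ in absolute value, that is, when either~($p>\delta$ and~$q<-\delta$) or~($p<-\delta$ and~$q>\delta$). Finally I would translate these two margin conditions back into the stated normalized~$\ell_1$ form: because~$y_k,i\in\{1,2\}$ we have~$|y_k-i|=1_{[y_k\neq i]}$, so~$\|\y_{1:(\ka_a-1)}-i_1\cdot\bone\|_1$ counts the minority label on the left segment and likewise on the right. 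The case~$p>\delta,\,q<-\delta$ is exactly the pair~$(i_1,i_2)=(1,2)$ and~$p<-\delta,\,q>\delta$ the pair~$(2,1)$, and in each case the inequality~$2\cdot(\text{minority count})<(\text{segment length})-\delta$ rearranges into the two displayed bounds, producing the ``for some~$(i_1,i_2)\in\{(1,2),(2,1)\}$'' disjunction.

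The computation never uses~$s^o\ge 3$ beyond placing~$\ka_a$ strictly in the interior; that hypothesis merely puts us in the regime complementary to the~$s^o\in\{1,2\}$ cases already recorded in \cref{Lem:OptimH0H1H2} (indeed, when one side is label-pure the bounds collapse to the interval condition of that lemma). The only genuinely delicate point is the coupled maximization: the constraint~$i_1\neq i_2$ means the best two-segment labeling is \emph{not} obtained by independently majority-voting each side, and I expect the main care to lie in verifying that the winning opposite-sign configuration is exactly the one whose two majority margins both exceed~$\delta$. Everything else is bookkeeping of counts.
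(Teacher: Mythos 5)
Your proof is correct and follows essentially the same route as the paper's: after cancelling the transition terms via $\log\varepsilon-\log(1-\varepsilon)=-\beta_2\delta$, both arguments reduce the claim to comparing $\bigl|\tfrac n2-(\text{two-segment mismatch count})\bigr|$ with $\bigl|\tfrac n2-(\text{one-segment mismatch count})\bigr|$, which in your margins is exactly $\tfrac12|p-q|-\tfrac12|p+q|>\delta$, the same quantity the paper calls $\Delta$ (with $D_1\propto p$ and $D_2\propto -q$). The only difference is packaging: you resolve the absolute values with the identity $|p-q|-|p+q|=\pm 2\min(|p|,|q|)$ in one stroke, whereas the paper runs an explicit four-case sign analysis on $(D_1,D_2)$ and discards the two mixed-sign cases as incompatible; your version is a little tidier but mathematically identical.
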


\begin{Lemma}
\label{Lem:H2>H0}
Suppose that Assumption~\ref{Aspt:SimpleSetting} holds, and let~$s^o\ge 3$ and~$a,b\in 1\col (s^o-1)$ be such that~$a<b$ and~$a+b$ is odd. Then~$\HH^3(\ka_a,\ka_b) > \HH^1$ if, and only if,
\begin{equation*}
	\frac{\|\y_{1:(\ka_a-1)} - i_1 \bone\|_1 + 
	\|\y_{\ka_b:n} - i_1 \bone\|_1}
	{n-(\ka_b-\ka_a)}
	 < \frac{1}{2} -
	\frac{\delta}{n-(\ka_b-\ka_a)}, \;
	\frac{\|\y_{\ka_a:(\ka_b-1)} - i_2 \bone\|_1}
	{\ka_b-\ka_a}
	 < \frac{1}{2} -
	\frac{\delta}{\ka_b-\ka_a},
\end{equation*}
for some~$(i_1,i_2)\in\{(1,2),(2,1)\}$.
\end{Lemma}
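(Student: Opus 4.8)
The plan is to reduce the inequality $\HH^3(\ka_a,\ka_b) > \HH^1$ to an explicit comparison of ``match counts'', exactly mirroring the proof of \cref{Lem:H1>H0}, and then to translate that count comparison into the two stated $\ell_1$-norm conditions. First I would write out both scores explicitly. Under \cref{Aspt:SimpleSetting} we have $m=2$, $\bpi=(1/2,1/2)$, $p_{11}=p_{22}=1-\varepsilon$, $p_{12}=p_{21}=\varepsilon$, and $\log f_i(y_k)=\beta_1+\beta_2 1_{[y_k=i]}$. Because only two states are available, the constraint $i_1\neq i_2\neq i_3$ in the definition of $\HH^3$ forces $i_3=i_1$, so the only admissible three-segment labelings are $(1,2,1)$ and $(2,1,2)$; I would parametrise them by $(i_1,i_2)\in\{(1,2),(2,1)\}$, with $i_1$ the outer state and $i_2$ the middle state. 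A path with jumps at $\ka_a,\ka_b$ has exactly two jumps and $n-3$ stays, so its log-likelihood equals $-\log 2+(n-3)\log(1-\varepsilon)+2\log\varepsilon+n\beta_1+\beta_2 N_3(i_1,i_2)$, where $N_3(i_1,i_2)$ counts matches of $i_1$ on the outer blocks $1\col(\ka_a-1)$ and $\ka_b\col n$ and of $i_2$ on the middle block $\ka_a\col(\ka_b-1)$. Likewise $\HH^1=-\log 2+(n-1)\log(1-\varepsilon)+n\beta_1+\beta_2\max_i N_1(i)$ with $N_1(i)=\#\{k:y_k=i\}$.

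Taking the maximum over the two labelings and subtracting, the transition and emission constants collapse via $\log\frac{1-\varepsilon}{\varepsilon}=\beta_2\delta$, yielding
\[
	\HH^3(\ka_a,\ka_b)-\HH^1
	\ = \
	\beta_2\Bigl(\max_{(i_1,i_2)} N_3(i_1,i_2)-\max_i N_1(i)-2\delta\Bigr).
\]
Since $\beta_2>0$, the inequality $\HH^3(\ka_a,\ka_b)>\HH^1$ is therefore equivalent to the existence of a labeling $(i_1,i_2)$ with $N_3(i_1,i_2)>\max_i N_1(i)+2\delta$.

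Next I would convert this count inequality into the two displayed $\ell_1$ conditions. Since $y_k,i\in\{1,2\}$, one has $|y_k-i|=1_{[y_k\neq i]}$, so each $\ell_1$-norm is the number of mismatches on the corresponding block. Writing $A,B,C$ for the lengths of the three blocks, $a_1+c_1$ for the number of $i_1$-matches on the two outer blocks, and $m_2$ for the number of $i_2$-matches on the middle block, a short computation gives the identities $N_3(i_1,i_2)-N_1(i_2)=2(a_1+c_1)-(A+C)$ and $N_3(i_1,i_2)-N_1(i_1)=2m_2-B$. Rearranging, the first displayed condition is exactly $N_3(i_1,i_2)-N_1(i_2)>2\delta$ and the second is exactly $N_3(i_1,i_2)-N_1(i_1)>2\delta$. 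Because $\{i_1,i_2\}=\{1,2\}$, the two conditions hold simultaneously if and only if $N_3(i_1,i_2)>\max(N_1(1),N_1(2))+2\delta=\max_i N_1(i)+2\delta$, which closes the equivalence established above.

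I do not anticipate a genuine obstacle: the argument is the same bookkeeping as in \cref{Lem:H1>H0}, with the single jump replaced by two jumps (hence $2\delta$ rather than $\delta$) and a two-sided outer block replacing the right block. The only points needing care are (a) getting the transition count right, i.e.\ verifying that precisely two transitions are jumps and $n-3$ are stays, which uses $\ka_a\geq 2$, $\ka_a<\ka_b$ and $\ka_b\leq n$, all guaranteed by $1\leq a<b\leq s^o-1$; and (b) recognising that the two $\ell_1$ conditions separately encode the comparisons against $N_1(i_2)$ and against $N_1(i_1)$, so that together they dominate $\max_i N_1(i)$. The hypothesis that $a+b$ is odd does not enter the algebra itself; it is inherited from \cref{Lem:Charac_CP}(i), which singles out the pairs $(\ka_a,\ka_b)$ with $a+b$ odd as the only interior pairs that can be local maxima of $\HH^3$, and hence the only ones for which this comparison is relevant.
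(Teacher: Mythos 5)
Your proposal is correct, and the equivalence checks out line by line: with $A=\ka_a-1$, $B=\ka_b-\ka_a$, $C=n-\ka_b+1$, the two displayed conditions for a fixed labeling $(i_1,i_2)$ are precisely $N_3(i_1,i_2)-N_1(i_2)>2\delta$ and $N_3(i_1,i_2)-N_1(i_1)>2\delta$, and since $\{i_1,i_2\}=\{1,2\}$ their conjunction is $N_3(i_1,i_2)>\max_i N_1(i)+2\delta$, matching the score gap $\HH^3(\ka_a,\ka_b)-\HH^1=\beta_2\bigl(\max_{(i_1,i_2)}N_3(i_1,i_2)-\max_i N_1(i)-2\delta\bigr)$. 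The paper omits this proof and points to the template of \cref{Lem:H1>H0}, which works with the absolute-value form $\bigl|\psi^c-\tfrac{n}{2}\bigr|$ from \cref{Lem:HH2_HH3_Char} and then runs a four-way case analysis on the signs of the quantities $D_1,D_2$ to determine which branch of each absolute value is active, discarding two incompatible cases. You instead keep the scores in the form $\max(\psi^c,n-\psi^c)$ and observe that the two $\ell_1$ conditions separately encode the comparisons of $N_3(i_1,i_2)$ against $N_1(i_2)$ and against $N_1(i_1)$; this makes the existential quantifier over $(i_1,i_2)$ absorb the maximization directly and eliminates the sign case analysis entirely. The two routes prove the same identity with the same bookkeeping (two jumps contributing $2\log\varepsilon$ rather than one, hence $2\delta$ rather than $\delta$), but yours is the more streamlined presentation; your closing remark that the parity hypothesis on $a+b$ plays no role in the algebra and only enters via \cref{Lem:Charac_CP} is also accurate.
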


%-------------------------------------------------------------
\subsection{Properties of QATS-paths}
%-------------------------------------------------------------
%
The path~$\hat{\x}$ returned by Algorithm~\ref{Alg:QATS} is admissible, in the sense that all transitions in~$\hat{\x}$ have positive probability. Indeed, $\hat{\x}$ is built from the left to the right, taking into account the last state~$x_0$ from the previous segment for the derivations.

The QATS-path is different from paths resulting from risk-based segmentation techniques MAP, PMAP and sMAP in \cref{Sec:Intro}, as it does not maximize a risk function. Instead, it may be seen as a form of ``greedy'' decoder, since it selects a window of observation~$\ell\col r$ and determines the best path with at most three segments in that window.

Thus, if one omits the approximation due to OS in Algorithm~\ref{Alg:QATS}, the resulting path is an element of the following set
\begin{multline*}
    	\hat{\XX}
	:=	
	\bigl\{
		\x\in\XX^n :
		\Lambda_{\ell_u:r_u}(\x_{\ell_u:r_u}| x_{\ell_u-1}) 
		\ge
		\Lambda_{\ell_u:r_u}(\x'| x_{\ell_u-1}),
		\forall 
		\x'\in\bar{\XX}^{d_u} 
		\text{ and }
            \forall
		u \in 1\col s
	\bigr\},
\end{multline*}
where~$\bar{\XX}^d$ is the set of vectors of size~$d$ with at most three segments, and~$\ell_u$, $r_u$ and $d_u$ depend on the segmentation of the concerned~$\x$. In particular, $\hat{\XX}$ does not contain vectors~$\x$ for which a certain interval~$\ell_u\col r_u$ could be further split in two or three intervals and at the same time yield a higher local likelihood score. The next simple example shows that in general this set may contain more than one element.

\begin{Example}
Let $n=4$, $m=2$ and $\LL(Y_k|X_k=i)=\NN(i,4)$. We further set $\log \bpi = \theta_{\bpi} \bone$ and $p_{11} = p_{22} = \theta_{\p} + \log 2$, with constants~$\theta_{\bpi}=-\log 2$ and~$\theta_{\p}=-\log 3$. Suppose now that we observe~$\y=(1,4,-1,1)$. That means, with~$\theta_{f}=-\log(8\pi)/2$, we have
\[
	\bigl(\log f_i(y_k)\bigr)_{ik}
	\ = \
	\theta_{f} - \frac{1}{8}
	\begin{pmatrix}
	0 & 9 & 4 & 0 \\
	1 & 4 & 9 & 1
	\end{pmatrix}.	 
\]
Simple computations show that the best paths on~$1\col n$ with one, two or three segments are the following paths, respectively, along with their corresponding log-local likelihood scores: $\x^{(1)} := (1,1,1,1)$, $\log \Lambda_{1:n}(\x^{(1)}) = \theta - 13/8 + 3 \log 2$, $\x^{(2)} := (2,2,1,1)$, $\log \Lambda_{1:n}(\x^{(2)}) = \theta - 9/8 + 2 \log 2$, $\x^{(3)} := (1,2,1,1)$, $\log \Lambda_{1:n}(\x^{(3)}) = \theta - 8/8 + \log 2$, where~$\theta=\theta_{\bpi} + 3 \theta_{\p} + 4  \theta_{f}$. But since~$\log 2 \approx 0.69 > 4/8$ and~$2\log 2 \approx 1.39 > 5/8$, we find that~$\x^{(1)}$ is the best path with at most three segments, thus implying~$(1,1,1,1) \in \hat{\XX}$. To show that~$(2,2,1,1)\in\hat{\XX}$, too, one first verifies that, on~$1\col 2$, the best path out of the four possible ones with at most three (i.e.\ two) segments is the path~$(2,2)$, with a log-score of~$\theta_{\bpi} + \theta_{\p} + 2  \theta_{f} - 5/8 + \log 2$. Finally, out of the four possible path on~$3\col 4$ with previous state~$2$, the path~$(1,1)$ is the best one, with a score of~$2\theta_{\p} + 2  \theta_{f} - 4/8 + \log 2$. An exhaustive search shows that there are no other paths in~$\hat{\XX}$. %\qed
\end{Example}

%-------------------------------------------------------------
\section{Monte--Carlo simulations}
\label{Sec:MonteCarloSimulations}
%-------------------------------------------------------------
%
The goals of the simulations are to:

\noindent 1) Verify empirically that QATS is substantially faster than Viterbi and PMAP when the number of expected segments is small compared to the length of the observation sequence.

\noindent 2) Show that the accuracy of QATS-paths is comparable to Viterbi and PMAP-paths.

%-------------------------------------------------------------
\subsection{Simulation settings}
\label{Sec:Settings}
%-------------------------------------------------------------
%
We consider sequences of observations~$\y = \y_{1\col n}$ of length~$n$ where $n \in 1 + \{10^3, 10^4, 10^5, 10^6\}$ (the additional~$1$ will be clear soon). For the size~$m$ of the state space, we study~$m \ \in \ \{2,3,5,10\}$. Much larger state spaces are not recommended for QATS because of its cubic complexity in the number of states. The initial distribution has no impact for the comparison of the accuracy or speed of all procedures. Thus, we simply set $\bpi := m^{-1} \bone_m$. Because the computation speed of QATS depends on the expected number~$s$ of segments of the true state sequence, we study a selection of number of segments. Precisely, we are interested in settings with $s \in 1 + \{1, 2, 5, 10, 20, 50, \ldots\}$, up to~$s \le n/50$, since afterwards change points are too frequent in order for QATS to proceed efficiently.

We study the { homogeneous case of transition matrices~$\p^{(k)} = \p$ for all $k$.} We set~$\p$ to be the~$m\times m$ matrix with entries
\[
    p_{ij}
    \ := \
    \begin{cases}
        (m-1)^{-1}p & \text{if}\ i \neq j, \\
        1 - p & \text{if}\ i = j,\\
    \end{cases}
\]
where
$
    p = 
    p(n,s)
    :=
    (s-1)/(n-1)
$
is the \textit{exit probability}, i.e.\ the probability to transition to a different state than the current one. A Markov chain with such a transition matrix has an expected number~$s$ of segments, see Section~F of the Supplement. Considering sample sizes that are powers of~$10$ with an additional~$1$ helps with numerical stability.

As mentioned in the introduction, the observable process~$\Y$ takes values in an arbitrary measurable space~$(\YY,\BB)$. But for an observed sequence~$\y\in \YY^n$, only the value of each emission density~$f_i^{{(k)}}$ evaluated at~$y_k$ matters for the estimation, see \cref{Sec:Linearization}. Hence, we set the following normal model:
$
    { f_i^{(k)}(y)} = f_i(y) := \phi\left((y-i)/\sigma\right)
$,
where~$\phi$ is the density of the standard normal distribution with respect to Lebesgue measure, and 
$
    \sigma  \in \{ 0.1, 1.0 \}
$.
Thus,~$f_i$ is the density of the normal distribution with mean~$i$ and standard deviation~$\sigma$.

%-------------------------------------------------------------
\subsection{Data generation}
\label{Sec:DataGeneration}
%-------------------------------------------------------------
%
A state-observation sequence from an HMM with parameters $n$, $m$, $s$ and $\sigma$ is generated inductively as follows: Sample from a categorical distribution with parameter~$\bpi$ to generate the first state~$x_1^o$. Then, for~$k \in 2\col n$, sample from a categorical distribution with parameter~$(p_{x_{k-1}^o,j})_{j\in 1\col m}$ to generate~$x_k^o$. Finally, for each~$k\in 1\col n$, sample from~$\NN(x_k^o, \sigma^2)$ to generate~$y_k$. This yields a true state sequence~$\x^o = \x^o_{1\col n}$ with observed sequence~$\y = \y_{1\col n}$.

%-------------------------------------------------------------
\subsection{Implementation}
\label{Sec:Implementation}
%-------------------------------------------------------------
%

For completeness, the pseudocode of Viterbi is given in Algorithm~\ref{Alg:Viterbi}. It takes as an input the componentwise logarithm of the initial distribution~$\bpi$ and the transition matrix~$\p$, and a matrix~$\g\in \R^{m\times n}$ whose~$(i,k)$ entry is~$\log f_i(y_k)$. The PMAP algorithm requires forward and backward recursions which we implement as in \cite{Rabiner_1989} and its erratum.

\begin{algorithm}
\caption{Viterbi algorithm: $\mathrm{Viterbi}(\log \bpi, \log \p, \g)$}
\label{Alg:Viterbi}
\vspace{1ex}
\KwIn{$\log \bpi$, $\log \p$, $\g:=\bigl(\log f_i(y_k)\bigr)_{i\in\XX,k\in 1\col n}$}
\KwOut{$\hat{\bs x}$, a Viterbi path}
\lFor{$i\in 1\col m$}{
    $\rho_{i1} \gets \log \pi_i + g_{i1}$}
\For{$k\in 2 \col n$}{
    \For{$i \in 1\col m$}{
        \lFor{$j \in 1\col m$}{
            $\rho^{\mathrm{temp}}_j \gets \rho_{j,k-1} + \log p_{ji}$}
        $\zeta_{i, k-1} \gets \argmax_{j\in 1\col m} \rho^{\mathrm{temp}}_j$\;
        $\rho_{ik} \gets \max_{j \in 1 \col m} \rho^{\mathrm{temp}}_j + g_{ik}$\;}}
$\hat{x}_{n} \gets \argmax_{j \in 1 \col m} \rho_{jn}$\;
\lFor{$k=n-1,\ldots,1$}{
    $\hat{x}_k \gets \zeta_{\hat{x}_{k+1}, k}$}
\end{algorithm}

For a fair comparison between all methods, the parameters~$\log\bpi$ and $\log\p$ and the data~$\y$ are preprocessed outside of the timed computations. The matrix~$\g$ of log-densities and the matrices~{$\log \P$ and $\log\F$}  of cumulative log-densities, { together with~$\bar{\P}$ and~$\bar{\F}$,} are therefore precomputed from~$\y$. Precomputing the data as such could be performed while the collected data are being stored on the machine, or even instead of it. Furthermore, any temporary vector or matrix needed in QATS, Viterbi or PMAP, and whose size depends on~$n$ is declared outside of the timed computations and are passed as reference to their respective methods. Unlike described in Algorithm~\ref{Alg:QATS}, the computation of the final QATS-path from~$\SS$ and~$\hat{\z}$ in \eqref{Eq:QATS_path} counts for the computation time of QATS. We use the following optimization parameters for QATS: $\nu = 0.5$, $d_o=3$, $v_o=20$ and $\ns = 3$.

%-------------------------------------------------------------
\subsection{Results}
\label{Sec:Results}
%-------------------------------------------------------------
%
\paragraph{Computation time} 
To compare computation times for each setting~$(n,m,s,\sigma)$, we compute sample~$\beta$-quantiles of $T^{\mathrm{Q}}$, 
    $T^{\mathrm{V}}$,
    $T^{\mathrm{P}}$,
    $T^{\mathrm{V}}/T^{\mathrm{Q}}$
    and $T^{\mathrm{P}}/T^{\mathrm{Q}}$
from~$n_{\mathrm{sim}}=10^4$ independent repetitions, for~$\beta\in\{0.1,0.5,0.9\}$, where~$T^{\mathrm{Q}}$, $T^{\mathrm{V}}$ and $T^{\mathrm{P}}$ denote the computation times in seconds of QATS, Viterbi and PMAP, respectively. In particular, the ratios~$T^{\mathrm{V}}/T^{\mathrm{Q}}$ and $T^{\mathrm{P}}/T^{\mathrm{Q}}$ give the acceleration provided by QATS in comparison to Viterbi and PMAP.

\begin{figure}[!t]
\centering
\includegraphics[width=0.8\textwidth]{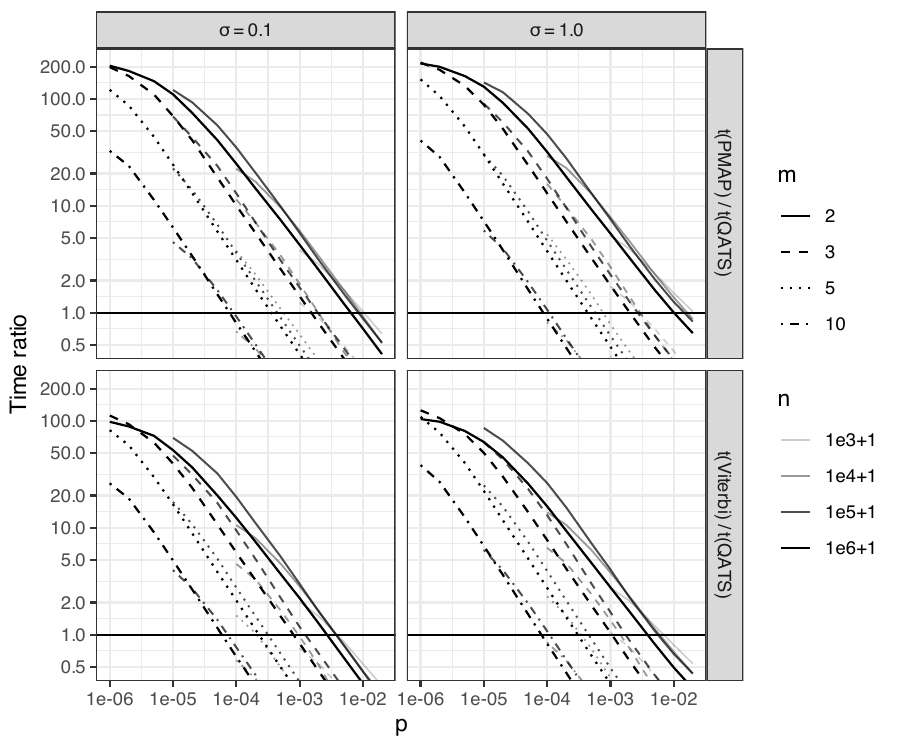}
\caption{Median time ratios against exit probability~$p$ using $\log$-scales.}
\label{Fig:Time_Ratios}
\end{figure}

Given the respective complexities of QATS, Viterbi and PMAP, we expect the ratios~$T^{\mathrm{V}}/T^{\mathrm{Q}}$ and $T^{\mathrm{P}}/T^{\mathrm{Q}}$ to behave as~$(pm\log n)^{-1}$. \cref{Fig:Time_Ratios} shows plots of time ratios against exit probabilities~$p$ with log-scales in both variables. We observe an almost negative linear relationship between the log-ratio and the log-probability. For fixed~$p$, time ratios increase if either~$m$ or~$n$ decrease. Those plots show that the standard deviation~$\sigma$ has little impact on estimation time, and that Viterbi is generally faster than PMAP.

When~$m=2$, accelerations of about~$30$ units are possible when~$p=10^{-4}$. When~$n=10^6+1$, it means that QATS is about~$30$ times faster than Viterbi when the expected number of segments~$s$ is~$101$, and about~$100$ times faster when $s=11$. If~$m=3$, $5$ or $10$, those ratios are smaller than for~$m=2$, but still often larger than~$2$, even for rather large values of~$p$. When~$m=10$ and~$n=10^6+1$, QATS is about~$5$ times faster than Viterbi when~$s=11$. This shows that QATS is substantially faster than Viterbi and PMAP for low number of segments/change points. Furthermore, time ratios could be even larger for even longer sequences of observations.

\begin{figure}[!t]
\centering
\includegraphics[width=0.9\textwidth]{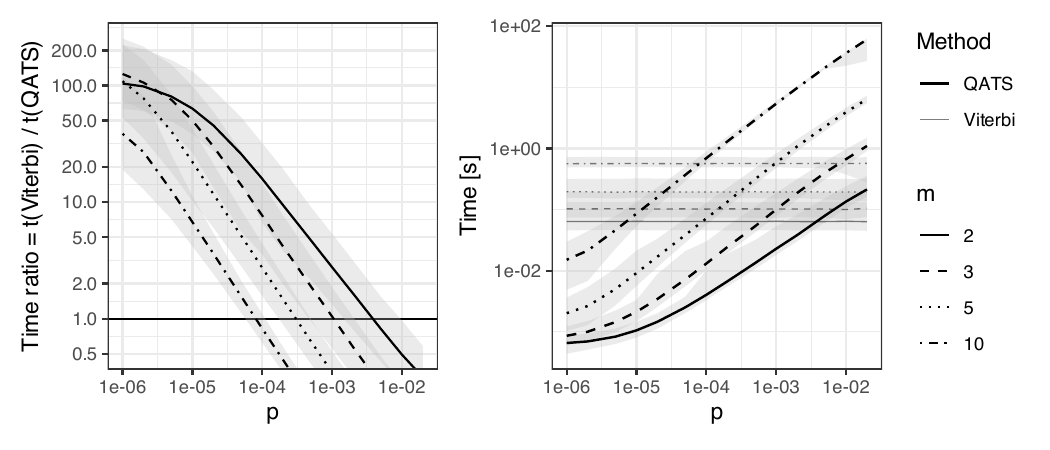}
\caption{Median quantile time ratio (left) and times (right) on $\log$-scales, with~$10\%$- and~$90$\%-quantile curves, for~$\sigma = 1.0$ and~$n = 10^6+1$.} 
\label{Fig:Time_Ratios_Bands}
\end{figure}

{
\cref{Fig:Time_Ratios_Bands} provides an assessment of the spread of time ratios and times for the setting~$\sigma = 1.0$ and~$n=10^6+1$. The~$10\%$- and~$90$\%-quantile curves display the expected variability around the median. The plot on the right demonstrate that the computation times of Viterbi is indeed independent of~$p$ (or the expected number of segments), whereas the log-computation time of QATS depends linearly on~$\log p$, except for small values of~$p$. Both methods are increasingly slower with increasing $m$.
}
\paragraph{Accuracy}
To evaluate the quality of an estimate~$\hat{\x}$, we compare it with the ground truth~$\x^o$. This true hidden path can indeed be accessed for this simulation study since data are generated. The comparison is done in terms of~$\ell^w$-type distances between~$\hat{\x}$ and~$\x^o$. Precisely, if~$\hat{\x}$ and~$\x^o$ are of length~$n$, we define
\[
    d_w(\hat{\x}, \x^o)
    \ := \
    \begin{cases}
        \frac{1}{n}
        \sum_{k\in 1\col n} 1_{\hat{x}_k \neq x_k^o} 
        & \text{if}\ w = 0,\\
        \left(
            \frac{1}{n}
            \sum_{k\in 1\col n} |\hat{x}_k - x_k^o|^{w} 
        \right)^{1/w} 
        & \text{if}\ w > 0.
    \end{cases}
\]
Hence, the quantity~$d_0(\hat{\x}, \x^o)$ corresponds to the proportion of misclassified (or misestimated) states, or simply \textit{misclassification rate}. On the other hand,~$d_w(\hat{\x}, \x^o)$ for~$w>0$ gives a measure of the amplitude of misclassifications scaled to the vector length. For~$w=2$, this is simply the \textit{root mean squared error}. Thus, for each setting~$(n,m,s,\sigma)$, we compute sample~$\beta$-quantiles of $d_w(\hat{\x}^{\mathrm{Q}}, \x^o)$, $d_w(\hat{\x}^{\mathrm{V}}, \x^o)$, $d_w(\hat{\x}^{\mathrm{P}}, \x^o)$, $d_w(\hat{\x}^{\mathrm{Q}}, \x^o) - d_w(\hat{\x}^{\mathrm{V}}, \x^o)$ and $d_w(\hat{\x}^{\mathrm{Q}}, \x^o) - d_w(\hat{\x}^{\mathrm{P}}, \x^o)$, for~$\beta\in\{0.1,0.5,0.9\}$ and~$w\in\{0,2\}$, where~$\hat{\x}^{\mathrm{Q}}$, $\hat{\x}^{\mathrm{V}}$ and~$\hat{\x}^{\mathrm{P}}$ correspond to QATS, Viterbi and PMAP paths, respectively.

\begin{figure}[!t]
\centering
\includegraphics[width=1\textwidth]{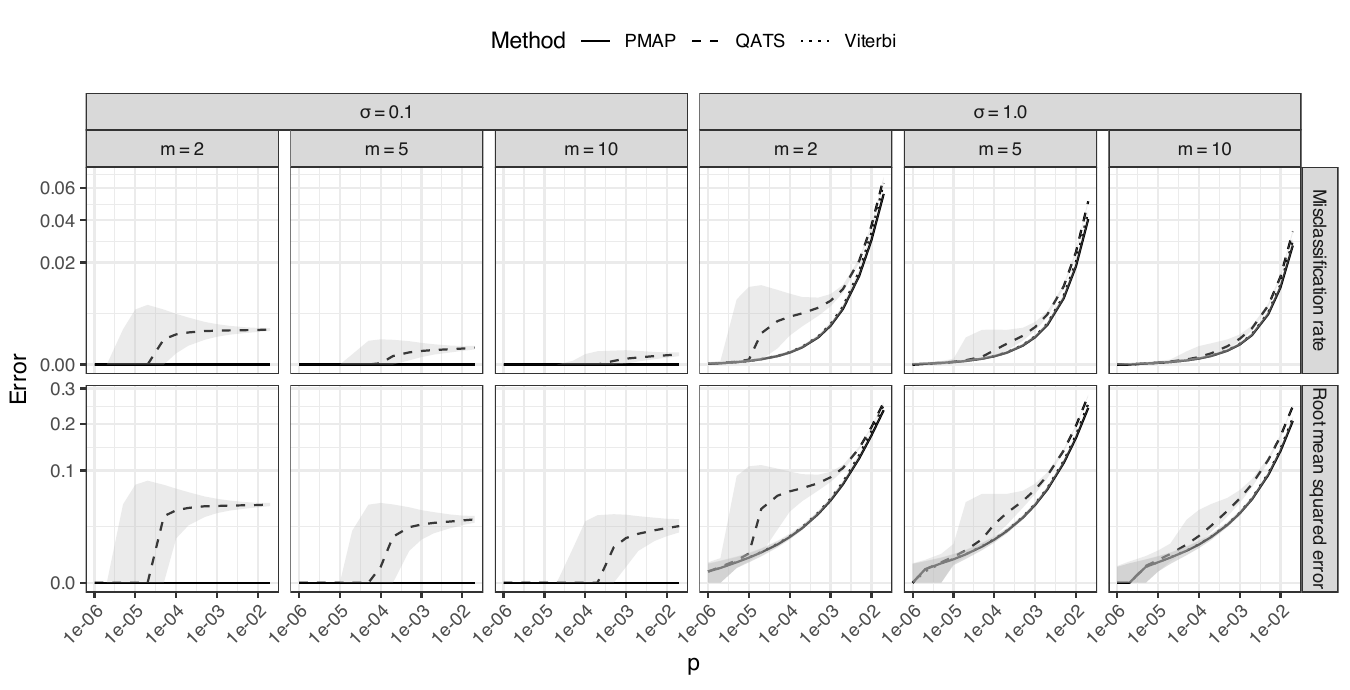}
\caption{Median error rates and~$10\%$- and~$90$\%-quantile curves in the square-root scale for the setting~$n=10^6+1$.}
\label{Fig:Errors}
\end{figure}

{
\cref{Fig:Errors} compares error rates of QATS, Viterbi and PMAP when~$n=10^6+1$. PMAP generally has the lowest error values, followed closely by Viterbi, and then QATS. When $\sigma = 0.1$, PMAP and Viterbi essentially perform errorlessly, unlike QATS whose error is small, but present. In the setting $\sigma = 1.0$, the three methods are hardly distinguishable when $m = 10$. Let us now comment on the interesting setting of~$m=2$ and~$\sigma=1.0$. For values of~$p$ smaller than~$10^{-5}$, all methods present small errors, with more variability for QATS. Then, for values of~$p$ in the interval~$[10^{-5},10^{-3}]$, error rates of Viterbi and PMAP increase, whereas the error of QATS takes a larger step up before flattening again. In this region, QATS is less able to differentiate between true variability (i.e.\ due to a large~$\sigma$) and variability due to a large number of segments (i.e.\ a large~$p$), than the two other methods. When~$p$ exceeds~$10^{-3}$, all methods interpret larger numbers of segments as extra noise, so the increase in error rates have similar behaviors again.}

\begin{figure}[!t]
\centering
\includegraphics[width=1\textwidth]{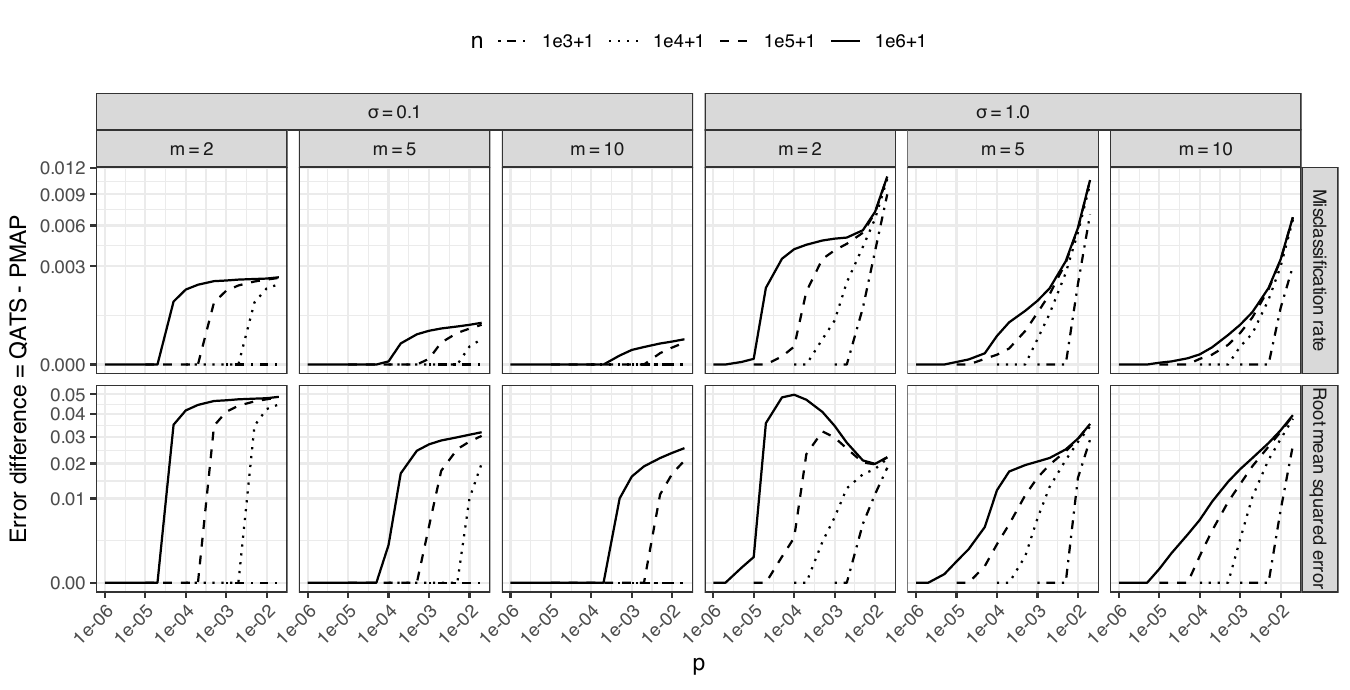}
\caption{Median difference in error rates (square-root scale) between PMAP and QATS.} 
\label{Fig:Errors_Differences}
\end{figure}

\cref{Fig:Errors_Differences} shows median error differences between QATS and PMAP only, as PMAP~is~the closest competitor to QATS. Decreasing~$n$ or $\sigma$, or increasing $m$, have the effect of reducing error differences. When~$\sigma = 0.1$, decreasing~$p$ (and therefore~$s$) lowers the error difference, but when~$\sigma = 1.0$ and~$m=2$, this is no longer true for the root mean squared error. Interestingly, QATS' and PMAP's number of misestimated states differ by at most $\approx 1\%$ (in median), no matter the setting considered. This allows us to conclude that QATS and PMAP have comparable misclassification rates.

\begin{figure}[!t]
\centering
\includegraphics[width=0.60\textwidth]{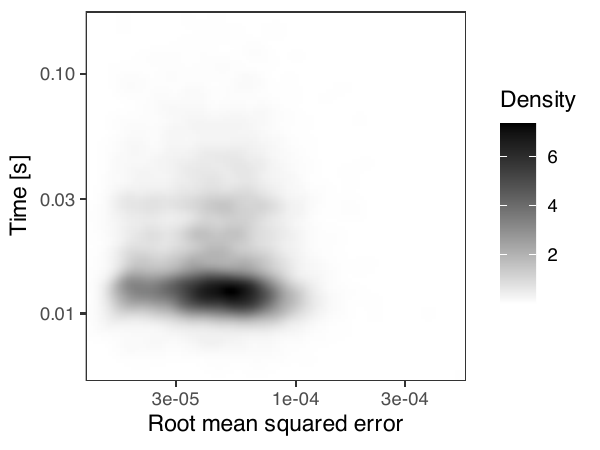}
\caption{Computation time of QATS against its root mean squared error, for~$n=10^6+1$, $m=3$, $p=10^{-4}$ and $\sigma = 1.0$. A kernel density estimator was used to build the heat map.} 
\label{Fig:Time_Errors}
\end{figure}

\paragraph{Time and error study}
One may suspect that fast computation times of QATS could be due to an error which would skip a substantial amount of steps in the procedure, but as indicated by \cref{Fig:Time_Errors}, this is not the case. When plotting each of the~$n_\mathrm{sim}$ measurements of log-time against log-root mean squared error in the setting~$n=10^6+1$, $m=3$, $p=10^{-4}$ and $\sigma = 1.0$, the two variables appear to be independent.

\paragraph{Misspecification and robustness}
{ We consider scenarios with model misspecification to reflect realistic settings where the model only holds approximately and parameters are estimated with error. Simulations in \Cref{sec:miss} demonstrate that QATS not more sensitive to model misspecification than Viterbi.} In addition, a robustness study with~$t$-distributed errors showed that Viterbi is no more robust than QATS. This is not surprising, as all methods share the same input: densities evaluated at each observation, and attempt to maximize certain likelihood scores.

%-------------------------------------------------------------
{
\section{Real data analysis}
\label{Sec:Data}
%-------------------------------------------------------------
%
\begin{figure}[!t]
\centering
\includegraphics[width=1\textwidth]{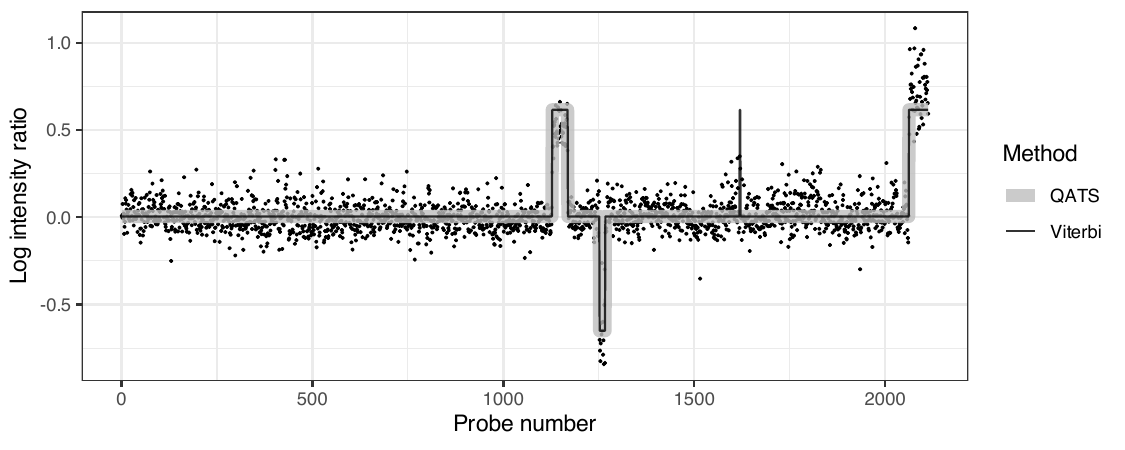}
\caption{{ Array CGH data for Coriell cell line 05296, segmented by QATS and Viterbi.}} 
\label{Fig:aCGH}
\end{figure}

The data shown in Figure~\ref{Fig:aCGH} correspond to array CGH (Comparative Genomic Hybridization) log-intensity ratios from the publicly available Coriell dataset 
%provided in the \texttt{DNAcopy} R package \citep{DNAcopy}, originally 
provided by \citet{snijders2001assembly}. These data represent normalized hybridization signals across ordered genomic probes for a single Coriell cell line (05296), and are commonly used to benchmark methods for copy number variation detection. After parameter estimation using Baum–Welch algorithm, both Viterbi and QATS yield plausible state sequences, identifying chromosomal regions with similar underlying copy number states. Notably, while Viterbi path captures a fine-scale fluctuation (around~$k=1620$), QATS prefers a smoother segmentation, effectively flattening this subtle variation. Both interpretations appear reasonable and highlight the trade-off between sensitivity and parsimony in state sequence estimation.
}

%-------------------------------------------------------------
\section{Discussion}
\label{Sec:Diss}
%-------------------------------------------------------------
%
The proposed QATS is a greedy algorithm in nature. This complicates its theoretical justification, but eases its extensions to setups beyond HMMs (e.g.\ high-dimensional time series \citep{wang2019localizing,rinaldo2021localizing}) and to alternative likelihood functionals (e.g.\ including total variation as regularization \citep{wei2021inferring}). { To further accelerate QATS, one could alleviate the impact of the cubic complexity in the size $m$ of the state space by swapping the order of maximizations of local likelihoods and by using parallel computing. Precisely, one could first apply OS to maximize the local likelihood of a path with given states $\bs i$ but unknown change point(s) $k$ or $\bs k$. Provided those computations are split on $m^3$ threads, collecting the local maxima for each $\bs i$ and computing their maximum takes $\OO(m^3)$ operations. Because OS for local likelihood maximization and the maximum over all states $\bs i$ happen serially, this should result in computational complexity $\OO(s \max\{\log n, m^3\})$ for QATS.} The exploration of these extensions are promising avenues for future research. 

%-------------------------------------------------------------
\section*{Acknowledgments}
%-------------------------------------------------------------
%
The authors are grateful to Solt Kov\'acs and Yannick Baraud for stimulating discussions. The authors thank the editor, an associate editor, and two anonymous reviewers for their suggestions on an earlier version of the manuscript. HL and AM are funded by the Deutsche Forschungsgemeinschaft (DFG, German Research Foundation) under Germany’s Excellence Strategy--EXC 2067/1-390729940, and DFG Collaborative Research Center 1456. AM further acknowledges the support of DFG Research Unit 5381.

%\bibliographystyle{apalike}
%\bibliography{Bibliography}

\clearpage

\appendix
%-------------------------------------------------------------
\section{Proof of Lemma~\ref{Lem:aOS}}
%-------------------------------------------------------------
%
The assumption on $V$ implies that all of its two-dimensional local maxima lie on the $s\times s$ grid $\KK \times \KK$. Further, by Lemma~\ref{Lem:OS}, the alternation of OS arrives on this grid after at most two searches, and remains on the grid afterwards. If the alternating procedure stops at a point, this point is a two dimensional local maximum of $V$, since it is a vertical and a horizontal maximum of $V$. Otherwise, by Lemma~\ref{Lem:OS}, the alternating procedure moves always to a point with a strictly larger value of $V$, implying that no loop (of at least two points) can occur in the alternating procedure and that a chain of alternation has length at most $s(s+1)/2$. The assertion of the Lemma follows, since each search requires at most $\OO\bigl(\log (r-\ell)\bigr)$ probes of $V$. \qed

%-------------------------------------------------------------
\section{An alternative version of Algorithm~\ref{Alg:sOSH3}}
\label{App:AltAlgo}
%-------------------------------------------------------------
%
The local maxima of $\HH^3$ at the boundary of $\KK^3_{\ell \col r}$ correspond to the local maxima of $\HH^2$ on $\KK^2_{\ell \col r}$. Thus, it is more interesting to find local maxima of $\HH^3$ in the interior of $\KK^3_{\ell \col r}$. To this end, instead of maximizing $\HH^2$, $\KK^2_{\ell:(k_2-1)} \ni k\mapsto \HH^3(k,k_2)$ and $\KK^2_{k_1:r} \ni k\mapsto \HH^3(k_1,k)$, it may be desirable to rotate each score so that their value on both ends of their respective domains coincides. Precisely, we would set
\begin{equation*}
	\bar{\HH}^2(k) \ := \
	\HH^2(k) - \bigl(\HH^2(r)-\HH^2(\ell+1)\bigr) \frac{k-\ell-1}{r-\ell-1},
\end{equation*}
for $k\in\KK^2_{\ell:r}$,
\begin{equation*}
    \bar{\HH}^3(k, k_2) \ := \
	\HH^3(k, k_2) 
	 - \bigl(\HH^3(k_2-1,k_2)-
	\HH^3(\ell+1,k_2)\bigr) \frac{k-\ell-1}{k_2-\ell-2},
\end{equation*}
for $k\in\KK^3_{\ell:(k_2-1)}$, and
\begin{equation*}
    \bar{\HH}^3(k_1, k) \ := \
	\HH^3(k_1, k) 
	- \bigl(\HH^3(k_1,r)-
	\HH^3(k_1,k_1+1)\bigr) \frac{k-k_1-1}{r-k_1-1},
\end{equation*}
for $k\in\KK^3_{k_1:r}$, whenever the above fractions are respectively well-defined, that is whenever the sets $\KK^2_{\ell:r}$, $\KK^3_{\ell:(k_2-1)}$ and $\KK^3_{k_1:r}$ are respectively non-empty. Then, one would replace $H^2$ by $\bar{\HH}^2$ in Algorithm~\ref{Alg:OSH2}, $H^3(k_1^*,\cdot)$ and $H^3(\cdot,k_2^*)$ by $\bar{\HH}^3(k_1^*,\cdot)$ and $\bar{\HH}^3(\cdot,k_2^*)$ in Algorithm~\ref{Alg:sOSH3}.

%-------------------------------------------------------------
\section{Log-local likelihoods}
\label{App:LogLocLike}
%-------------------------------------------------------------
%
{
Recall the definition of matrices $\F$, $\bar{\F}$ $\P$ and $\bar{\P}$ in Section~\ref{Sec:Linearization}, and introduce the entry-wise logarithms  $\QQ := \log \P$ and $\G := \log \F$ as follows:
\begin{equation*}
    \QQ
	=
	(Q_{ik})
	:=
	\left(
		\sum_{t = 1}^k q_{ii}^{(t)}
	\right)
	\;\text{ and }\; 
	\G
	=
	\left(G_{ik}\right)
	:=
	\left(
		\sum_{t=1}^k \log \bigl(f_i^{(t)}(y_t) + 1_{\left[f_i^{(t)}(y_t) = 0\right]}\bigr)
	\right),
\end{equation*}
where $q_{ij}^{(t)} = \log \bigl(p_{ij}^{(t)} + 1_{\left[p_{ij}^{(t)} = 0\right]}\bigr)$. We adopt the convention that $\bar{F}_{i0} = \bar{P}_{i0} = G_{i0} = 0$ and $q_{ij}^{(0)} = 0$ for $i,j\in\XX$. Now we may define
\[
\GG_{\ell:r}(i|x_0)
	\ := \
	G_{i,r} - G_{i,\ell-1}   + 1_{[\ell=1]}\log\pi_{i} 
	+  q_{x_0i}^{(\ell-1)} + Q_{i,r-1} -  Q_{i,\ell-1},
\]
if $\bar{F}_{i,\ell-1} = \bar{F}_{i,r}$ and $\bar{P}_{i,\ell-1} = \bar{P}_{i,r-1}$; and otherwise $\GG_{\ell:r}(i|x_0) := -\infty$. Similarly, we define 
\begin{multline*}
\GG_{\ell:k:r}(\bi|x_0)
	\ := \
	 G_{i_1,k-1} - G_{i_1,\ell-1} 
	+ G_{i_2,r}  - G_{i_2,k-1} + 1_{[\ell=1]}\log\pi_{i_1} \\
		+ q_{x_0i_1}^{(\ell-1)} 
	 	+ Q_{i_1,k-2}-Q_{i_1,\ell-1} + q_{i_1i_2}^{(k-1)} + Q_{i_2,r-1} - Q_{i_2,k-1},
\end{multline*}
if  $\bar{F}_{i_1,\ell-1} = \bar{F}_{i_1,k-1}$, $\bar{F}_{i_2,k-1} = \bar{F}_{i_2,r}$, $ \bar{P}_{i_1,\ell-1} = \bar{P}_{i_1,k-2}$ and $\bar{P}_{i_2,k-1} = \bar{P}_{i_2,r-1}$; and otherwise $\GG_{\ell:k:r}(\bi|x_0) := -\infty$. Also, we define
\begin{multline*}
\GG_{\ell:k_1:k_2:r}(\bi|x_0)
	:=
	  G_{i_1,k_1-1} - G_{i_1,\ell-1} 
	+ G_{i_2,k_2-1} 
	  - G_{i_2,k_1-1}	+ G_{i_3,r} - G_{i_3,k_2-1}  + 1_{[\ell=1]}\log\pi_{i_1}
	   + q_{x_0i_1}^{(\ell-1)} \\
	   	+ Q_{i_1,k_1-2}-Q_{i_1,\ell-1}+ q_{i_1i_2}^{(k_1-1)} 
	  + Q_{i_2,k_1-1} - Q_{i_2,k_2-2}+ q_{i_2i_3}^{(k_2-1)} + Q_{i_3,r-1} - Q_{i_3,k_2-1},
\end{multline*} 
if $ \bar{F}_{i_1,\ell-1} = \bar{F}_{i_1,k-1}$, $\bar{F}_{i_2,k_1-1} = \bar{F}_{i_2,k_2-1}$, $ \bar{F}_{i_3,k_2-1} = \bar{F}_{i_3,r}$, $\bar{P}_{i_1,\ell-1} = \bar{P}_{i_1,k_1-2}$, $\bar{P}_{i_2,k_2-2} = \bar{P}_{i_2,k_1-1}$ and $\bar{P}_{i_3,k_2-1} = \bar{P}_{i_3,r-1}$; and otherwise $\GG_{\ell:k_1:k_2:r}(\bi|x_0) := -\infty$. 
Then, we have
\begin{align*}
	\log \Lambda_{\ell:r} (i \bone_{r-\ell+1}|x_0 )
	\ &= \
	\GG_{\ell:r}(i|x_0), \\
	\log \Lambda_{\ell:r} \bigl((i_1 \bone_{k-\ell},i_2 \bone_{r-k+1})|x_0 \bigr)
	\ &= \
	\GG_{\ell:k:r}(\bi|x_0), \\
	\log \Lambda_{\ell:r} \bigl((i_1 \bone_{k_1-\ell},i_2 \bone_{k_2-k_1},i_3 \bone_{r-k_2+1})|x_0 \bigr)
	\ &= \
	\GG_{\ell:k_1:k_2:r}(\bi|x_0).
\end{align*}
}
%-------------------------------------------------------------
\section{Proofs for Section~\ref{Sec:Justification}}
%-------------------------------------------------------------
%
We progressively build the theory to prove Lemma~\ref{Lem:Charac_CP}. We suppose that Assumption~\ref{Aspt:SimpleSetting} holds for the remainder of the appendix, and define the following constants for $c\in 1\col 3$:
\[
	\beta^c_1
	\ := \
	- \log 2 + (n-c)\log(1-\varepsilon) + (c-1) \log \varepsilon + n\beta_1 + \frac{n}{2}\beta_2,
\]
as well as the following elements for $k\in \KK^2$ and $\k\in \KK^3$:
\begin{align*}
	\psi^1
	\ &:= \
	\sum_{t=1}^n 1_{[y_t = 1]},&\\
	\psi^2(k)
	\ &:= \
	\sum_{t=1}^{k-1} 1_{[y_t = 1]}
	+ \sum_{t=k}^{n} 1_{[y_t = 2]},
	\quad \text{for}\ k\in \KK^2,\\
	\psi^3(\k)
	\ &:= \
	\sum_{t=1}^{k_1-1} 1_{[y_t = 1]}
	+ \sum_{t=k_1}^{k_2-1} 1_{[y_t = 2]}
	+ \sum_{t=k_2}^{n} 1_{[y_t = 1]}.
\end{align*}

\begin{Lemma}
\label{Lem:HH2_HH3_Char}
We have that
\[
	\HH^c
	\ = \
	\beta_1^c +
	\beta_2 \Bigl(
		\max\bigl( \psi^c, n-\psi^c \bigr) - \frac{n}{2}
	\Bigr)
	\ = \
	\beta_1^c + 
	\beta_2 \left|\psi^c - \frac{n}{2} \right|
\]
for $c\in 1\col 3$. Furthermore, for each segment $\KK^2_a:=(\ka_{a-1}\col \ka_{a})\cap \KK^2$, $a\in 1\col s^o$, there exists $\gamma_a\in\Z/2$ such that
$
	\HH^2(k) 
	=
	\beta_1^2 + \beta_2 \left|k - \gamma_a \right|,
$
for $k\in \KK^2_a$. Likewise, for each nonempty set of contiguous index pairs $\KK^3_{ab}:=\bigl((\ka_{a-1}\col \ka_a)\times (\ka_{b-1}\col \ka_b)\bigr)\cap \KK^3$, where $a,b\in 1\col s^o$, $a\le b$, there exists $\gamma_{ab}\in \Z/2$ such that
$
	\HH^3(\k)
	=
	\beta_1^3 + \beta_2 \left| k_2 - (-1)^{a+b} k_1 - \gamma_{ab} \right|,
$
for $\k \in \KK^3_{ab}$.
\end{Lemma}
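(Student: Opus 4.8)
The plan is to derive the three closed forms for $\HH^c$ by a direct evaluation of the relevant log-local-likelihoods, and then to obtain the piecewise-affine descriptions by inspecting the discrete first differences of the counting functions $\psi^2$ and $\psi^3$.

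First I would establish $\HH^c=\beta_1^c+\beta_2|\psi^c-n/2|$. Since $m=2$, the alternation constraints leave only two admissible label vectors in each case: $\bi\in\{(1,2),(2,1)\}$ for two segments, and $\bi\in\{(1,2,1),(2,1,2)\}$ for three (because $i_1\neq i_2\neq i_3$ with two states forces $i_3=i_1$). Substituting $\pi_i=1/2$, $p_{ii}=1-\varepsilon$, $p_{ij}=\varepsilon$ for $i\neq j$, and $g_i(y)=\beta_1+\beta_2 1_{[y=i]}$ into the expressions $\GG_{1:n}$, $\GG_{1:k:n}$, $\GG_{1:k_1:k_2:n}$ of Appendix~\ref{App:LogLocLike}, each of the $n-1$ transitions contributes $\log(1-\varepsilon)$ except for the $c-1$ jumps, which contribute $\log\varepsilon$, while the emission part equals $n\beta_1$ plus $\beta_2$ times the number of indices at which the assigned label coincides with $y_t$. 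For the two admissible labelings this count is exactly $\psi^c$ and $n-\psi^c$, so both share the constant $A_c:=-\log 2+n\beta_1+(n-c)\log(1-\varepsilon)+(c-1)\log\varepsilon$ and differ only by $\beta_2\psi^c$ versus $\beta_2(n-\psi^c)$. Taking the maximum over the two labelings and using the elementary identity $\max(a,n-a)=n/2+|a-n/2|$ gives $\HH^c=A_c+\beta_2 n/2+\beta_2|\psi^c-n/2|=\beta_1^c+\beta_2|\psi^c-n/2|$, as claimed.

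For the affine behaviour within blocks I would compute forward differences. One checks $\psi^2(k+1)-\psi^2(k)=1_{[y_k=1]}-1_{[y_k=2]}\in\{+1,-1\}$, and likewise $\psi^3(k_1+1,k_2)-\psi^3(k_1,k_2)=1_{[y_{k_1}=1]}-1_{[y_{k_1}=2]}$ and $\psi^3(k_1,k_2+1)-\psi^3(k_1,k_2)=1_{[y_{k_2}=2]}-1_{[y_{k_2}=1]}$. Under Assumption~\ref{Aspt:SimpleSetting} the observations reveal the truth, i.e.\ $y_k=x^o_k$, so $\y$ is constant on every true segment. Hence on $\KK^2_a$ the increment of $\psi^2$ is a constant $\pm 1$ (its sign being $+1$ iff the value of $\x^o$ on segment $a$ equals $1$), so $\psi^2$ is affine with slope $\pm1$; substituting into $|\psi^2-n/2|$ yields $\HH^2(k)=\beta_1^2+\beta_2|k-\gamma_a|$ with $\gamma_a\in\Z/2$ (the half-integer appearing because $\psi^2$ is integer-valued while $n/2$ need not be). The same argument makes $\psi^3$ affine on $\KK^3_{ab}$, say $\psi^3(\k)=s_1 k_1+s_2 k_2+\mathrm{const}$ with slopes $s_1,s_2\in\{\pm1\}$ governed by the constant values $v_a,v_b$ of $\x^o$ on segments $a$ and $b$.

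The final step, which I expect to be the main obstacle, is to pin down the sign $(-1)^{a+b}$ correctly from the interplay of the two increment conventions. Here I would use that with two states the segment values strictly alternate, so $v_a=v_b$ iff $a\equiv b\pmod 2$. Because the $k_1$- and $k_2$-increments carry opposite sign conventions ($+1$ for $y_{k_1}=1$ but $+1$ for $y_{k_2}=2$), a short case analysis on $(v_a,v_b)$ shows $s_1 s_2=-1$ when $a+b$ is even and $s_1 s_2=+1$ when $a+b$ is odd, i.e.\ $s_1=-(-1)^{a+b}s_2$. Consequently $\psi^3(\k)=s_2\bigl(k_2-(-1)^{a+b}k_1\bigr)+\mathrm{const}$, and substituting into $|\psi^3-n/2|$ while absorbing the sign $s_2$ into the absolute value gives $\HH^3(\k)=\beta_1^3+\beta_2\bigl|k_2-(-1)^{a+b}k_1-\gamma_{ab}\bigr|$ with $\gamma_{ab}\in\Z/2$. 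Obtaining this sign, rather than its opposite, from the parity of the segment values is the delicate point; the remainder is routine bookkeeping.
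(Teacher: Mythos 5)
Your proposal is correct and follows essentially the same route as the paper: both reduce $\HH^c$ to $\beta_1^c+\beta_2\,\max(\psi^c,n-\psi^c)-\beta_2 n/2$ by substituting the parameters of Assumption~\ref{Aspt:SimpleSetting} into the two admissible labelings and applying the $\max$-to-absolute-value identity, and both then exploit that $\y=\x^o$ is constant on each true segment with alternating values to show $\psi^2$ and $\psi^3$ are affine with slopes $\pm1$ on each block. Your use of discrete first differences (rather than the paper's explicit telescoping sums with constants $\alpha_a,\alpha_{ab}$) and your case analysis on $(v_a,v_b)$ (rather than the paper's parity of $y_1+a$) are only cosmetic variations of the same argument, and your sign computation $s_1s_2=-(-1)^{a+b}$ correctly recovers the factor $(-1)^{a+b}$.
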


\begin{proof}[\textbf{Proof of Lemma~\ref{Lem:HH2_HH3_Char}}]
The special forms of $g_i$ and the fact that $m=2$ imply that
\begin{align}
	&\GG_{1:n}(i)
	\ = \
	\beta_1^1 - \frac{n}{2}\beta_2+ \beta_2 \cdot 
	\begin{cases}
		\psi^1 	  & \text{if}\ i = 1, \\
		n - \psi^1 & \text{if}\ i = 2,
	\end{cases} \nonumber\\
	&\GG_{1:k:n}(\bi)
	\ = \
	\beta_1^2 - \frac{n}{2}\beta_2+ \beta_2 \cdot
	\begin{cases}
		\psi^2(k) 	   & \text{if}\ \bi = (1,2), \\
		n - \psi^2(k) & \text{if}\ \bi = (2,1),
	\end{cases}\nonumber \\
	&\GG_{1:k_1:k_2:n}(\bi)  \ = \
	\beta_1^3 - \frac{n}{2}\beta_2+ \beta_2 \cdot
	\begin{cases}
		\psi^3(\k)     & \text{if}\ \bi = (1,2,1), \\
		n - \psi^3(\k) & \text{if}\ \bi = (2,1,2),
	\end{cases}\label{Eq:G_HH3}
\end{align}
for $k\in\KK^2$ and $\k\in\KK^3$. In consequence, we find that
\[
	\HH^c
	=
	\beta_1^c + \beta_2
	\left(
		\max(\psi^c, n-\psi^c) - \frac{n}{2}
	\right)
	=
	\beta_1^c + \beta_2 \left| \psi^c -\frac{n}{2} \right|,
\]
for $c\in 1 \col 3$, where we used the fact that $2\max(\xi_1,\xi_2)=\xi_1 + \xi_2 + |\xi_1 - \xi_2|$ in the last equality.

For the rest of the proof, observe first that for $a \in 1\col s^o$ and $t\in \ka_{a-1}\col (\ka_a - 1)$, we have
\[
	y_t
	\ = \	
	\begin{cases}
		1 & \text{if}\ y_1+a\ \text{is even}, \\
		2 & \text{if}\ y_1+a\ \text{is odd}. \\
	\end{cases}
\]
For $a\in 1\col s^o$ and $k\in \KK^2_a$, there exists some $\alpha_{a}\in\Z$ such that
\begin{align*}
	\psi^2(k) & \ = \
	\sum_{t=1}^{n} 1_{[y_t = 1]}
	+ 1_{[k<\ka_a]}\sum_{t=k}^{\ka_a-1} (-1)^{y_t} + 1_{[a<s^o]} \sum_{t=\ka_a}^{n} (-1)^{y_t} \\
	& \ = \
	\alpha_{a} - (-1)^{y_1+a} (\ka_a-k).
\end{align*}
Since $\gamma_a := -\ka_a + (-1)^{y_1+a} (\alpha_{a} - n/2)$ satisfies $| \psi^2(k) - n/2 | = | k - \gamma_a |$, the second part of the Lemma is proved.

Let $a,b\in 1\col s^o$, $a\le b$, such that $\KK^3_{ab}\neq\emptyset$, and fix $\k\in\KK^3_{ab}$. In case $a=b$, then $a+b$ is even and $\ka_{b-1} \le k_1 \le k_2 -1 < \ka_b$, so for some for some $\alpha_{ab}\in \N$ we have that
\begin{align*}
	\psi^3(\k)
	\ =& \
	\sum_{t=1}^{n} 1_{[y_t = 1]} -(-1)^{y_1+b}(k_2-k_1)\\
	\ =& \
	\alpha_{ab} -(-1)^{y_1+b}\bigl(k_2-(-1)^{a+b}k_1\bigr).
\end{align*}
When $a<b$, we have $\ka_{a-1}\le k_1 \le \ka_a\le \ka_{b-1}\le k_2 \le \ka_b$ and for some $\alpha_{ab}\in \Z$ it holds that
\begin{align*}
	\psi^3(\k)
	=\ &
	\sum_{t=1}^{n} 1_{[y_t = 1]}
	+ 1_{[k_1< \ka_a]}\sum_{t=k_1}^{\ka_a-1} (-1)^{y_t}\\
	& \ + 1_{[a+1<b]}\sum_{t=\ka_a}^{\ka_{b-1}-1} (-1)^{y_t}
	+ 1_{[\ka_{b-1}<k_2]}\sum_{t=\ka_{b-1}}^{k_2-1} (-1)^{y_t} \\
     =\ &
	\sum_{t=1}^{n} 1_{[y_t = 1]}
	- (-1)^{y_1+a} (\ka_a-k_1) \\
	& \ + 1_{[a+1<b]}\sum_{t=\ka_a}^{\ka_{b-1}-1} (-1)^{y_t}
	- (-1)^{y_1+b} (k_2-\ka_{b-1})\\
	=\ &
	\alpha_{ab} - (-1)^{y_1+b} 
	\bigl(
		k_2 - (-1)^{a+b} k_1	
	\bigr).
\end{align*}
In both cases, $\gamma_{ab} := (-1)^{y_1+b} (\alpha_{ab} - n/2)$ satisfies $|\psi^3(\k) - n/2| = |k_2 - (-1)^{a+b}k_1 - \gamma_{ab} |$ and thus the last part of the Lemma is proved.
\end{proof}

\begin{proof}[\textbf{Proof of Lemma~\ref{Lem:Charac_CP}}] \textbf{Part~(i).}
A local maximum of $\HH^2$ on $\KK^2$ belonging to a set $\KK^2_a$, $a\in 1\col s^o$, necessarily has to be a local maximum of $\HH^2$ restricted to that $\KK^2_a$. But the structure of $\HH^2$ and the fact that $\beta_2>0$ imply that local maxima of $\HH^2$ restricted to $\KK^2_a$ can only be located at its endpoints $\max(2,\ka_{a-1})$ and $\min(\ka_a,n)$. Likewise, a local maximum of $\HH^3$ on $\KK^3$ belonging to a set $\KK^3_{ab}$, $a,b\in 1\col s^o$ and $a\le b$, necessarily has to be a local maximum of $\HH^3$ restricted to that $\KK^3_{ab}$. The structure of $\HH^3$ and the fact that $\beta_2>0$ imply that local maxima of $\HH^3$ restricted to $\KK^3_{ab}$ are necessarily distributed as follows:

If $a+b$ is even, then $\HH^3(\k)=\beta^3_1+\beta_2|k_2-k_1-\gamma_{ab}|$ on $\KK^3_{ab}$. When $a=b$, local maxima of $\HH^3$ restricted to $\KK^3_{ab}$ can only be located on the diagonal $\KK^3_{ab}\cap \{\k\in\KK^3:k_2= k_1+1\}$ or at $\bigl(\max(2,\ka_{a-1}),\min(\ka_b,n)\bigr)$. When $a<b$, then $a < b-1$ (because $a+b$ is even) and only $\bigl(\max(2,\ka_{a-1}),\min(\ka_b,n)\bigr)$ or $\bigl(\ka_a,\ka_{b-1}\bigr)$ can be local maxima of $\HH^3$ restricted to $\KK^3_{ab}$.

If $a+b$ is odd, then $a<b$ and $\HH^3(\k)=\beta^3_1+\beta_2|k_2+k_1-\gamma_{ab}|$ on $\KK^3_{ab}$. That means, local maxima of $\HH^3$ restricted to $\KK^3_{ab}$ can only be located at $\bigl(\max(2,\ka_{a-1}),\ka_{b-1}\bigr)$ or $\bigl(\ka_a,\min(\ka_b,n)\bigr)$.

All in all, the sum of indices indexing local maxima of $\HH^3$ in the interior of $\KK^3$ is always odd. 

\bigskip

\textbf{Part~(ii).}
Let $a \in 1 \col (s^o-1)$ be arbitrary. We suppose without loss of generality that $y_{\ka_a} = 1$. It is then clear to see that
\begin{equation}
    \label{Eq:HH3_NP}
    \psi^3(\ka_{a-1}, \ka_{a}) 
    \ = \ \psi^3(\ka_a, \ka_{a+1}) + (\ka_{a+1} - \ka_{a-1}).
\end{equation}
Furthermore, $2\psi^3(\ka_{a-1},\ka_a) \ge n + 2$ or $2\psi^3 (\ka_a, \ka_{a+1}) \le n-2$, because otherwise the relation in \eqref{Eq:HH3_NP} implies that $\ka_{a+1} - \ka_{a-1} < 2$, which is a contradiction.  

\begin{itemize}
\item In case $2\psi^3(\ka_{a-1},\ka_a) \ge n + 2$, we apply \eqref{Eq:G_HH3} and obtain 
\[
    \HH^3 (k_1,k_2) 
    \ = \ \GG_{1:k_1:k_2:n}(1,2,1) 
    \ \ge \ \GG_{1:k_1:k_2:n}(2,1,2)
\]
for $(k_1, k_2) \in \KK^3$ and $\|(k_1,k_2) - (\ka_{a-1},\ka_{a})\|_1 \le 1$. Thus, the pair $(\ka_{a-1}, \ka_{a})$ is a local maximum of $\HH^3$.

\item In case $2\psi^3 (\ka_a, \ka_{a+1}) \le n-2$, by \eqref{Eq:G_HH3} we obtain 
\[
    \HH^3 (k_1,k_2) 
    \ = \ \GG_{1:k_1:k_2:n}(2,1,2) 
    \ \ge \ \GG_{1:k_1:k_2:n}(1,2,1)
\]
for $(k_1,k_2) \in \KK^3$ and $\|(k_1,k_2) - (\ka_{a},\ka_{a+1})\|_1 \le 1$. Thus, the pair $(\ka_{a}, \ka_{a+1})$ is a local maximum of $\HH^3$.
\end{itemize}

Recall that at least one of the two above cases is valid, which concludes the proof.
\end{proof}

\begin{Remark}
An intuition for the fact that $a+b$ must be odd is as follows: The search for the best local path with three segments looks for bumps, that is a state sequence $(1,2,1)$ or $(2,1,2)$ with jumps occurring at some $2\le \ka_a<\ka_b\le n$. But for such a feature to exist in $\x^o$, it is necessary for $a$ and $b$ to not share the same parity, i.e.\ $a+b$ must be odd.
\end{Remark}

%-------------------------------------------------------------
\section{Proofs for Section~\ref{Sec:Sensitivity}}
%-------------------------------------------------------------
%
We start with some technical preparations. Recall that $\ka_{s^o} = n+1$ and define, when $s^o\ge 2$, the numbers $w(a)_1,\ldots,w(a)_{s^o-1}$ to be the sorted elements of $(1\col s^o)\setminus \{a\}$ and when $s^o\ge 3$, the numbers $w(a,b)_1,\ldots,w(a,b)_{s^o-2}$ to be those of $(1\col s^o)\setminus \{a,b\}$. Furthermore, we let
\[
	\varphi^c
	\ := \
	\left\lceil \frac{s^o-c}{2} \right\rceil
	\qquad
	\text{for}\ c\in 1\col 3,
\]
and set to $0$ any sum whose index of summation ranges from $1$ to $0$.

\begin{Lemma}
\label{Lem:HHc}
If $s^o\ge 1$, we have that
\[
	\HH^1
	\ = \
	\beta_1^1 + \beta_2
	\left| \frac{n}{2} - \eta^1 \right|
\]
with $\eta^1
	:=
	\sum_{t=1}^{\varphi^1} 
	\left( \ka_{2t} - \ka_{2t-1} \right)$.
If $s^o\ge 2$ and $a\in 1\col (s^o-1)$, then
\[
	\HH^2(\ka_a)
	\ = \
	\beta_1^2 + \beta_2
	\left| \frac{n}{2} - \eta^2(a) \right|
\]
with $\eta^2(a)
	:=
	\sum_{t=1}^{\varphi^2} 
	\left( \ka_{w(a)_{2t}} - \ka_{w(a)_{2t-1}}\right)$.
If $s^o\ge 3$ and $a,b\in 1\col (s^o-1)$, $a<b$ and $a+b$ odd, then
\[
	\HH^3(\ka_a,\ka_b)
	\ = \
	\beta_1^3 + \beta_2
	\left| \frac{n}{2} - \eta^3(a,b) \right|
\]	
with $\eta^3(a,b) := 
	\sum_{t=1}^{\varphi^3}
	\left( \ka_{w(a,b)_{2t}} - \ka_{w(a,b)_{2t-1}} \right)$.
\end{Lemma}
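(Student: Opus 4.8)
The plan is to reduce everything to a direct count of agreements between a labelled test path and the observed sequence, and then to match that count with $\eta^c$ by a parity bookkeeping argument. First I would record the key consequence of \cref{Aspt:SimpleSetting}: since $g_{x^o_k}(y_k)=\beta_1+\beta_2=\max_{i}g_i(y_k)$ and $\beta_2>0$, one must have $y_k=x^o_k$ for every $k$, so the observed sequence $\y$ inherits the alternating block structure of $\x^o$. In particular, as already shown in the proof of \cref{Lem:HH2_HH3_Char}, on the true segment $a$, i.e.\ for $t\in\ka_{a-1}\col(\ka_a-1)$, the value $y_t$ is constant and equals $1$ exactly when $y_1+a$ is even. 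Writing $L_a:=\ka_a-\ka_{a-1}$ for the segment lengths (so $\sum_{a=1}^{s^o}L_a=n$), \cref{Lem:HH2_HH3_Char} already gives $\HH^1=\beta_1^1+\beta_2|\psi^1-n/2|$, $\HH^2(\ka_a)=\beta_1^2+\beta_2|\psi^2(\ka_a)-n/2|$ and $\HH^3(\ka_a,\ka_b)=\beta_1^3+\beta_2|\psi^3(\ka_a,\ka_b)-n/2|$. Hence it remains only to compute these three $\psi$-values and to compare them with the $\eta^c$.

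The reduction I would use is that it suffices to prove $\psi^1\in\{\eta^1,n-\eta^1\}$, $\psi^2(\ka_a)\in\{\eta^2(a),n-\eta^2(a)\}$ and $\psi^3(\ka_a,\ka_b)\in\{\eta^3(a,b),n-\eta^3(a,b)\}$: indeed $|n/2-\eta^c|$ is invariant under $\eta^c\mapsto n-\eta^c$, so either membership yields $|\psi^c-n/2|=|n/2-\eta^c|$ and the stated formula. For $c=1$ this is immediate: $\psi^1$ is the total length of the value-$1$ segments, which are the odd-indexed ones if $y_1=1$ and the even-indexed ones if $y_1=2$; the even-indexed lengths sum to $\eta^1=\sum_{t=1}^{\varphi^1}(\ka_{2t}-\ka_{2t-1})$, so $\psi^1$ equals $\eta^1$ or $n-\eta^1$.

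For $c\in\{2,3\}$ I would expand the definitions of $\psi^2(\ka_a)$ and $\psi^3(\ka_a,\ka_b)$, which use the specific labels $\bi=(1,2)$ and $\bi=(1,2,1)$, into sums of the form $\sum_a L_a\,1_{[\text{segment }a\text{ agrees with the test path}]}$ over the true segments. Because the true values alternate from segment to segment while the test path is piecewise constant, the agreement indicator is itself alternating within each piece of the test path, so $\psi^c$ collects every other segment length. The content of the lemma is that this alternate-segment sum is exactly the reindexed sum $\eta^c$, or its complement $n-\eta^c$. I would verify this by analysing the sets $w(a)=(1\col s^o)\setminus\{a\}$ and $w(a,b)=(1\col s^o)\setminus\{a,b\}$: deleting $\ka_a$ (resp.\ $\ka_a,\ka_b$) merges the two adjacent true segments into one block of the reduced change-point sequence, and the pairing $(\ka_{w_{2t-1}},\ka_{w_{2t}})$ (with $w=w(a)$ or $w=w(a,b)$) together with the cutoff $\varphi^c=\lceil(s^o-c)/2\rceil$ selects precisely the alternate blocks on which the test path disagrees, equivalently agrees, with the truth. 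For $\HH^3$ the hypothesis that $a+b$ is odd is used exactly here: it guarantees that the outer two pieces of the bump path $(i_1,i_2,i_1)$ fall on true segments of the same parity, so that the single repeated label $i_1$ carries a consistent agreement phase across both outer pieces.

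The routine but delicate part, and the step I expect to be the main obstacle, is precisely this last bookkeeping: correctly tracking how the parities of $a$ (and of $a,b$), the merging of segments at the deleted indices, and the truncation at $\varphi^c$ interact, so that the alternate-segment sum coming out of $\psi^c$ is matched term by term with $\eta^c$ or with $n-\eta^c$. I would organise this by fixing $y_1$ (two cases), and within each case splitting according to the parity of $a$ (and of $a$ relative to $b$), checking that the selected blocks coincide and that the first and last blocks are included or excluded in agreement with the value of $\varphi^c$; the small instances $s^o\in\{2,3,4\}$ serve as useful consistency checks. Once the membership $\psi^c\in\{\eta^c,n-\eta^c\}$ is established in all cases, the three displayed identities follow from \cref{Lem:HH2_HH3_Char} as explained above.
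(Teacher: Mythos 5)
Your strategy coincides with the paper's: both reduce via \cref{Lem:HH2_HH3_Char} to computing $\psi^1$, $\psi^2(\ka_a)$ and $\psi^3(\ka_a,\ka_b)$, exploit the symmetry $|\psi^c-n/2|=|(n-\psi^c)-n/2|$ so that it suffices to identify $\psi^c$ with $\eta^c$ or $n-\eta^c$, and then read each $\psi^c$ off as a sum of lengths of the true segments on which the labelled test path agrees with $\y$. Your treatment of $c=1$ is complete and correct, and you correctly locate where the hypothesis that $a+b$ is odd enters for $c=3$.

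The one reservation is that for $c\in\{2,3\}$ the decisive step --- matching the alternate-segment sum coming out of $\psi^2(\ka_a)$ and $\psi^3(\ka_a,\ka_b)$ with the reindexed sums $\eta^2(a)$ and $\eta^3(a,b)$, including the merging of the two true segments adjacent to each deleted change point and the truncation at $\varphi^c$ --- is announced as ``routine but delicate'' and organised into cases, but not actually executed. That bookkeeping is the entire content of the lemma: the paper's proof consists precisely of writing out the telescoping rearrangement of $\psi^2(\ka_a)$ and $\psi^3(\ka_a,\ka_b)$ for one representative parity case ($s^o$ odd, $a$ even, $b$ odd) and observing that the remaining three cases are analogous. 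Your plan would succeed, but to count as a proof it needs at least one such parity case carried out explicitly rather than asserted.
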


\begin{proof}[\textbf{Proof of Lemma~\ref{Lem:HHc}}]
Since $|\psi^c-n/2|=|(n-\psi^c)-n/2|$ for all arguments of $\psi^c$ and all $c \in 1\col 3$, we may assume, without loss of generality, that $y_1=1$. Hence, $y_t$ is equal to $2$ on $\ka_{2t-1}\col (\ka_{2t}-1)$, for all $t\in 1\col \varphi^1$ (when $s^o>1$, otherwise $\varphi^1=0$ and $y_t$ is always $1$). Thus $\psi^1=n-\eta^1$.

For the results concerning $\HH^2$ and $\HH^3$, we distinguish between the four cases generated by the various combinations of $s^o$ even or odd and $a$ even or odd (and therefore $b$ odd or even). We only prove the result concerning $s^o$ odd and $a$ even. The proof of the other three cases is analogous.

When $s^o\ge 2$ is odd and $a\in 1\col (s^o-1)$ is even, we have
\begin{align*}
	\psi^2 & (\ka_a) 
	\ = \
	(\ka_1-1) + \cdots + (\ka_{a-1} - \ka_{a-2}) \\
	& \ + (\ka_{a+2} - \ka_{a+1}) + \cdots + (\ka_{s^o-1}-\ka_{s^o-2}) \\
	\ =& \
	- (\ka_2-\ka_1) - \cdots - (\ka_{a-2} - \ka_{a-3}) - (\ka_{a+1} - \ka_{a-1}) \\
	& \ - (\ka_{a+3} - \ka_{a+2}) - \cdots - (\ka_{s^o} - \ka_{s^o-1}) + n \\
	\ =& \
	n - \eta^2(a).
\end{align*}
When $s^o\ge 3$ is odd and $a,b\in 1\col (s^o-1)$ with $a<b$, $a$ even and $b$ odd, we have
\begin{align*}
	\psi^3&(\ka_a,\ka_b)
	\ = \
	(\ka_1-1) + \cdots + (\ka_{a-1} - \ka_{a-2}) \\
	& \ + (\ka_{a+2} - \ka_{a+1}) + \cdots + (\ka_{b-1} - \ka_{b-2}) \\
	& \ + (\ka_{b+2} - \ka_{b+1}) + \cdots + (\ka_{s^o-2} - \ka_{s^o-3}) \\
	& \ + (n - \ka_{s^o-1} + 1) \\
	\ =& \
	- (\ka_2-\ka_1) - \cdots - (\ka_{a-2} - \ka_{a-3}) - (\ka_{a+1} - \ka_{a-1}) \\
	& \ - (\ka_{a+3} - \ka_{a+2}) - \cdots - (\ka_{b-2} - \ka_{b-3}) - (\ka_{b+1} - \ka_{b-1}) \\
	& \ - (\ka_{b+3} - \ka_{b+2}) - \cdots - (\ka_{s^o-1} - \ka_{s^o-2}) + n \\
	\ =& \
	n - \eta^3(a,b).\qedhere
\end{align*}
\end{proof}

\begin{Lemma}
\label{Lem:HHc_so_123}
When $s^o=1$,
\begin{align*}
	\HH^1 
	\ &= \ \beta_1^1 + \beta_2 \frac{n}{2}, \\
	\HH^2(2)
	\ =  \ \HH^2(n)
	\ &= \ \beta_1^2 + \beta_2\left(\frac{n}{2}-1\right)
	\ >  \ \HH^2(k), 
\end{align*}
for $k\in\KK^2\setminus\{2,n\}$,
\begin{align*}
    \HH^3(k,k+1)
	\ &= \ \beta_1^3 + \beta_2\left(\frac{n}{2}-1\right)
	\ >  \ \HH^3(\k),
\end{align*}
for $k\in\KK^2\setminus\{n\},\, \k\in \KK^3, k_1 + 1 < k_2$. When $s^o=2$,
\begin{align*}
	\HH^1 
	\ &= \ \beta_1^1 + \beta_2 \left| \frac{n}{2} - \ka_1 + 1\right|, \\
	\HH^2(\ka_1)
	\ &= \ \beta_1^2 + \beta_2 \frac{n}{2}
	>  \ \HH^2(k), 
	\qquad k\in\KK^2\setminus\{ \ka_1\},\\
	\HH^3(2,\ka_1)
	\ &=  \ \HH^3(\ka_1,n)
	\ = \ \beta_1^3 + \beta_2\left(\frac{n}{2}-1\right)
	>  \ \HH^3(\k),
\end{align*}
for $\k\in\KK^3\setminus\{(2,\ka_1),(\ka_1,n)\}$, provided $(2,\ka_1)$ and $(\ka_1,n)$ are elements of $\KK^3$, otherwise the corresponding expression is omitted.

When $s^o=3$ and for $a\in 1\col 2$,
\begin{align*}
	\HH^1
	\ &= \ 
	\beta_1^1+\beta_2 \left|\frac{n}{2} - \ka_2 + \ka_1\right|, \\
	\HH^2(\ka_a)
	\ &= \
	\beta_1^2+\beta_2 \left|\frac{n}{2} - \ka_{w(a)_1} + 1 \right|
	> \HH^2(k),
\end{align*}
for $k\in \KK^2\setminus\{2,\ka_1,\ka_2,n\}$,
\begin{align*}
	\HH^2(2) \ = \ \HH^2(n)
	\ &= \
	\beta_1^2+\beta_2 \left|\frac{n}{2} - \ka_2 + \ka_1 -1 \right|
	> \HH^2(k),
\end{align*}
for $k\in \KK^2\setminus\{2,\ka_1,\ka_2,n\}$,
\begin{align*}
	\HH^3(\ka_1,\ka_2)
	\ &= \
	\beta_1^3  + \beta_2 \frac{n}{2}
	> \HH^3(\k),
\end{align*}
for $\k\in \KK^3\setminus\{(\ka_1,\ka_2)\}$.
\end{Lemma}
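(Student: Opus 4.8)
The plan is to derive everything from the two structural results already in hand, namely the closed forms of \cref{Lem:HHc} and the piecewise absolute-value representation of \cref{Lem:HH2_HH3_Char}, and then to supply only the finite comparison of candidate values that remains. As in the proof of \cref{Lem:HHc}, I would first reduce to the case $y_1=1$ by the symmetry $|\psi^c-n/2|=|(n-\psi^c)-n/2|$. Then the observation pattern is fully explicit: $y_t=1$ on the odd-indexed blocks $\ka_{a-1}\col(\ka_a-1)$ and $y_t=2$ on the even-indexed ones, which makes every $\psi^c$ computable by hand.

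First I would read off the values at change points and the value of $\HH^1$ directly from \cref{Lem:HHc}. For $s^o\in\{1,2,3\}$ the ceiling $\varphi^c=\lceil(s^o-c)/2\rceil$ equals $0$ or $1$, so each $\eta^c$ either vanishes (empty sum) or reduces to a single difference $\ka_j-\ka_i$; substituting $\ka_0=1$, $\ka_{s^o}=n+1$ and simplifying the resulting absolute value reproduces each claimed closed form for $\HH^1$, $\HH^2(\ka_a)$ and $\HH^3(\ka_1,\ka_2)$. The values $\HH^2(2)=\HH^2(n)$ appearing for $s^o\in\{1,3\}$, together with the boundary values $\HH^3(2,\ka_1)$, $\HH^3(\ka_1,n)$ and the diagonal value $\HH^3(k,k+1)$ appearing for $s^o\in\{1,2\}$, are not covered by \cref{Lem:HHc}; these I would obtain by evaluating $\psi^2$ and $\psi^3$ on the explicit pattern and invoking $\HH^c=\beta_1^c+\beta_2|\psi^c-n/2|$ from \cref{Lem:HH2_HH3_Char}. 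For instance, at $(2,\ka_1)$ with $s^o=2$ one finds $\psi^3=1$, giving $|\psi^3-n/2|=n/2-1$, and at $(\ka_1,\ka_2)$ with $s^o=3$ one finds $\psi^3=n$, giving $|\psi^3-n/2|=n/2$.

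The remaining task is to justify the strict inequalities, i.e.\ that the listed points are the global maximizers. Here I would lean on \cref{Lem:HH2_HH3_Char}: on each segment $\KK^2_a$ the map $\HH^2$ has the form $\beta_1^2+\beta_2|k-\gamma_a|$, and on each region $\KK^3_{ab}$ the map $\HH^3$ has the form $\beta_1^3+\beta_2|k_2-(-1)^{a+b}k_1-\gamma_{ab}|$, with unit-coefficient affine arguments. Consequently the restricted maxima can only sit at the extreme points of these regions, which for $s^o\le 3$ are exactly the finitely many candidates enumerated in \cref{Lem:Charac_CP}(i): segment endpoints, the two outer corners, the pair $(\ka_1,\ka_2)$, and the diagonal. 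Comparing the finitely many values computed in the previous step then yields the claimed strict ordering in each case.

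I expect the main obstacle to be the strictness for $\HH^3$, especially when $s^o=2$. One must verify that among all region corners and diagonal points the maximal $|\psi^3-n/2|$ is attained \emph{only} at $(2,\ka_1)$ and $(\ka_1,n)$, and for $s^o=3$ \emph{only} at $(\ka_1,\ka_2)$. The clean way to see this is that $\psi^3\in 0\col n$ counts agreements of the template $(1,2,1)$ (or its complement) with $y$, so $|\psi^3-n/2|\le n/2$ with equality iff $\psi^3\in\{0,n\}$, i.e.\ iff the template aligns with $y$ exactly. Since $y$ has exactly $s^o-1$ changes, such an exact alignment is possible only when the template's two breaks coincide with true change points, pinning the maximizer uniquely; when no exact alignment exists (as for $\HH^3$ with $s^o\le2$, or $\HH^2$ with $s^o=1$) the best achievable deviation is $n/2-1$, attained precisely at the stated boundary or diagonal points. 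Tracking this alignment through the parity factor $(-1)^{a+b}$ of \cref{Lem:HH2_HH3_Char} is the one place where care is needed; the rest is bookkeeping.
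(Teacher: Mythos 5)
Your strategy is essentially the one the paper follows: reduce to $y_1=1$ via the symmetry $|\psi^c-n/2|=|(n-\psi^c)-n/2|$, evaluate $\psi^c$ explicitly on the known block pattern of $\y$ to obtain the displayed values (using \cref{Lem:HHc} where it applies and direct computation at the boundary and diagonal points), and use the piecewise absolute-value structure from \cref{Lem:HH2_HH3_Char} together with \cref{Lem:Charac_CP}(i) to confine candidate maximizers to a finite list. Your ``template alignment'' observation ($|\psi^3-n/2|=n/2$ iff the three-segment template matches $\y$ exactly, which forces the breaks onto true change points) is a clean repackaging of the paper's max/min bookkeeping for $\psi^c$, and it handles the uniqueness claims for $s^o\le 2$ and for $\HH^3$ at $s^o=3$ correctly.

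There is, however, one place where ``comparing the finitely many values computed in the previous step'' cannot deliver what the lemma literally asserts, namely the two $\HH^2$ inequalities in the case $s^o=3$. Those assert that \emph{each} of $\HH^2(\ka_1)$, $\HH^2(\ka_2)$ and $\HH^2(2)=\HH^2(n)$ strictly dominates \emph{every} $k\in\KK^2\setminus\{2,\ka_1,\ka_2,n\}$; a pairwise comparison among the four candidate values only identifies which of them is the global maximum, and the stronger statement is in fact false. Take $n=100$, $\ka_1=3$, $\ka_2=97$ and $y_1=1$: then $\psi^2(2)=95$, $\psi^2(\ka_1)=96$, $\psi^2(\ka_2)=2$, $\psi^2(n)=5$ and $\psi^2(98)=3$, so $\HH^2(98)-\beta_1^2=47\beta_2$ exceeds both $\HH^2(\ka_1)-\beta_1^2=46\beta_2$ and $\HH^2(2)-\beta_1^2=45\beta_2$. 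What the V-shape argument does give --- and all that is needed later, e.g.\ in the proof of \cref{Lem:OptimH0H1H2} --- is that the global maximum of $\HH^2$ over $\KK^2$ is attained in $\{2,\ka_1,\ka_2,n\}$ and equals the largest of the displayed values. The paper's own proof of the $s^o=3$ case has exactly the same limitation (it only locates the maximum via \cref{Lem:Charac_CP}(i) and computes the candidate values), so this is an imprecision you have inherited from the statement rather than introduced; but as written, your final ``strict ordering'' step would fail at this point, and the conclusion should be weakened accordingly.
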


\begin{proof}[\textbf{Proof of Lemma~\ref{Lem:HHc_so_123}}]
Similarly as in the previous proof, since $|\psi^c-n/2|=|(n-\psi^c)-n/2|$ for all arguments of $\psi^c$ and all $c \in 1\col 3$, we may assume, without loss of generality, that $y_1=1$ and study maxima and minima of $\psi^c$ to infer on maxima of $|\psi^c-n/2|$

When $s^o=1$, $\psi^2(k)=k-1$, so $\HH^2(k)=\beta_1^2 + \beta_2 |n/2 - k + 1|$ with maximum $\beta_1^2 + \beta_2 (n/2 - 1)$ attained at $k=2$ and $k=n$. Likewise, $\psi^3(\k)=n-k_2+k_1$, so $\HH^3(\k)=\beta_1^3 + \beta_2 |n/2 - k_2 + k_1|$ with maximum $\beta_1^3 + \beta_2 (n/2 - 1)$ attained at $\k\in\KK^3$ such that $k_1+1=k_2$.

When $s^o=2$, $\psi^2$ is by definition maximal and equal to $n$ at $\ka_1$, whereas its minimal value is at least as large as $2$. Likewise, $\psi^3$ is maximal with value $n-1$ at $(\ka_1,n)$ if $\ka_1<n$ and minimal with values $1$ at $(2,\ka_1)$ if $2<\ka_1$. Since $n>2$, at least one of the two cases must hold, yielding a maximum of $\beta_1^3 + \beta_2 (n/2 - 1)$ for $\HH^3$.

When $s^o=3$, by Part~(i) of Lemma~\ref{Lem:Charac_CP}, the maximum of $\HH^2$ is attained either at the boundaries $2$ or $n$, or at the change points $\ka_1$ or $\ka_2$. In the first case, $\psi^2(2)=\ka_2-\ka_1 + 1$ and $\psi^2(n)=n-\ka_2+\ka_1-1$, yielding $\HH^2(2)=\HH^2(n)=\beta_1^2 + \beta_2 |n/2 - \ka_2 - \ka_1 - 1|$, and in the second case, $\psi^2(\ka_1)=\ka_2-1$ and $\psi^2(\ka_2)=\ka_1-1$, yielding $\HH^2(\ka_a)=\beta_1^2 + \beta_2 |n/2 - \ka_{w(a)_1} + 1|$, for $a\in1\col 2$. As to $\HH^3$, we have that $\psi^3$ is maximal equal to $n$ at $(\ka_1,\ka_2)$, and at least as large as $3$ otherwise.
\end{proof}

\begin{proof}[\textbf{Proof of Lemma~\ref{Lem:OptimH0H1H2}}]
The proof of each inequality derives either directly or in part from Lemma~\ref{Lem:HHc_so_123}. For instance, since $\beta_1^1-\beta_1^2=\beta_1^2-\beta_1^3= \beta_2\delta > 0$, inequalities concerning $s^o=1$ as well as the first inequality concerning $s^o=2$ are clear. When $s^o=2$, then $\HH^2(\ka_1) > \HH^1$ holds if, and only if, $n/2 - |n/2 - \ka_1 +1|> \delta$, which is equivalent to $\ka_1 \in (1+\delta,n+1-\delta)$. We consider the case $s^o=3$. Then $\HH^2(\ka_1)>\HH^1$ is equivalent to
\[
	\left|\frac{n}{2} - \ka_2 + 1\right|
	- \left|\frac{n}{2} - \ka_2 + \ka_1\right|
	\ > \
	\delta.
\]
The only possibility for the displayed inequality to hold is if $\ka_2> n/2 + 1$, since otherwise $\ka_2 \le n/2 + 1 < n/2 + \ka_1$ would imply that the displayed inequality simplifies to $\ka_1 < 1-\delta$, which is impossible. In consequence, either $\ka_2\le n/2+\ka_1$, in which case the displayed inequality is equivalent to $\ka_1<2\ka_2 - n - 1 - \delta$, or $\ka_2 > n/2+\ka_1$, in which case the displayed inequality is equivalent to $\ka_1>1+\delta$. Combining our findings, the displayed inequality is equivalent to $\ka_2 > n/2 + 1$ and $\ka_1\in(1+\delta,2\ka_2 - n - 1 - \delta)$. To conclude, the latter interval is non-empty if, and only if, $\ka_2>n/2+1+\delta$. The equivalence statements regarding $\HH^2(\ka_2)>\HH^1$ and $\HH^3(\ka_1,\ka_2)>\HH^1$ are proved with similar arguments.

We prove the equivalent statement to $\HH^3(\ka_1,\ka_2) > \max_{k\in\KK^2}\, \HH^2(k)$. To this end, note that the inequality holds if, and only if, $n/2- |n/2 - \ka_a + 1| > \delta$, $a\in 1\col 2$, and $n/2- |n/2 - \ka_2 + \ka_1- 1| > \delta$. The first of those two assertions holds if, and only if, $\ka_a\in (\delta+1,n+1-\delta)$ for $a\in 1\col 2$. But since $\ka_1< \ka_2$, the last statement is equivalent to $\delta+1 < \ka_1$ and $\ka_2 < n+1-\delta$, which imply $\ka_2-\ka_1<\ka_2-1-\delta \le n-1-\delta$. The second assertion holds if, and only if, $\delta-1 < \ka_2-\ka_1 < n-1-\delta$, where the second inequality was already implied by the first assertion.
\end{proof}

\begin{proof}[\textbf{Proof of Lemma~\ref{Lem:H1>H0}}]
The proof of this Lemma relies on the findings of Lemma~\ref{Lem:HHc}. Observe first that $\HH^2(\ka_a) > \HH^1$ is equivalent to
\[
	\Delta
	\ := \
	\left| \frac{n}{2} - \eta^2(a) \right|
	-	
	\left| \frac{n}{2} - \eta^1 \right|
	\ > \
	\delta.
\]
Let us denote by $\x^*_{a}$ the path which is equal to $1$ on $1\col (\ka_a-1)$, and equals $2$ on $\ka_a\col n$. Then, if we define
\begin{align*}
	C_1
	\ :&= \
	\|\x - \bone\|_1 \\
	& = \  \|\x_{1:(\ka_a-1)} - \bone\|_1 + (n - \ka_a + 1) - \|\x_{\ka_a:n} - \btwo\|_1, \\
	C_2
	\ :&= \
	\|\x - \x^*_a\|_1
	\ = \
	\|\x_{1:(\ka_a-1)} - \bone\|_1 + \|\x_{\ka_a:n} - \btwo\|_1,
\end{align*}
and note that $C_1$ is equal to $\eta^1$ if $x_1=1$, and to $n-\eta^1$ if $x_1=2$, and likewise that $C_2$ is equal to $\eta^2(a)$ if $x_1=1$, and to $n-\eta^2(a)$ if $x_1=2$. In consequence, we find that
\[
	\Delta
	\ = \
	\left| \frac{n}{2} - C_2 \right|
	-	
	\left| \frac{n}{2} - C_1 \right|.
\]
Next, define
\begin{align*}
    D_1
	\ :&= \
	\frac{1}{2} - \frac{\|\x_{1:(\ka_a-1)} - \bone\|_1}{\ka_a-1} \\
	D_2
	\ :&= \
	\frac{1}{2} - \frac{\|\x_{\ka_a:n} - \btwo\|_1}{n-\ka_a+1},
\end{align*}
and study all four combinations $(\pm D_1,\pm D_2)\ge (0,0)$.

First of all, when $D_1,D_2\ge 0$, we find that
\[
	C_2
	\ \le \
	\frac{1}{2}(\ka_a-1) + \frac{1}{2}(n-\ka_a+1)
	\ = \
	\frac{n}{2}.
\]
Simple calculations then show that $\Delta > \delta$ is equivalent to
\[
	D_1 \ > \ \frac{1}{2} \cdot \frac{\delta}{\ka_a-1}
	\quad
	\text{and}
	\quad
	D_2 \ > \ \frac{1}{2} \cdot \frac{\delta}{n - \ka_a + 1}.
\]
Second of all, when $D_1,D_2\le 0$, we find that
\[
	C_2
	\ \ge \
	\frac{1}{2}(\ka_a-1) + \frac{1}{2}(n-\ka_a+1)
	\ = \
	\frac{n}{2},
\]
so that $\Delta > \delta$ is equivalent to
\[
	D_1 \ < \ -\frac{1}{2} \cdot \frac{\delta}{\ka_a-1}
	\quad
	\text{and}
	\quad
	D_2 \ < \ -\frac{1}{2} \cdot \frac{\delta}{n - \ka_a + 1}.
\]
Third, when $D_1 \le 0 \le D_2$, we have that
\[
	C_1
	\ \ge \
	\frac{1}{2}(\ka_a - 1) + (n-\ka_a+1) - \frac{1}{2} (n - \ka_a + 1)
	\ = \
	\frac{n}{2}.
\]
But in that case, $\Delta > \delta$ is equivalent to
\[
	D_1 \ > \ \frac{1}{2} \cdot \frac{\delta}{\ka_a-1}
	\quad
	\text{or}
	\quad
	D_2 \ < \ -\frac{1}{2} \cdot \frac{\delta}{n - \ka_a + 1},
\]
which is incompatible with the assumption that $D_1 \le 0 \le D_2$. Finally, when $D_2 \le 0 \le D_1$, we have that $C_1 \le n/2$, so $\Delta > \delta$ is equivalent to
\[
	D_1 \ < \ -\frac{1}{2} \cdot \frac{\delta}{\ka_a-1}
	\quad
	\text{or}
	\quad
	D_2 \ > \ \frac{1}{2} \cdot \frac{\delta}{n - \ka_a + 1},
\]
which is again incompatible with $D_2 \le 0 \le D_1$.

In conclusion, $\HH^2(\ka_a)>\HH^1$ is equivalent to either one of the two pairs of inequalities
\[
	\begin{cases}
		\frac{\|\x_{1:(\ka_a-1)} - \bone\|_1}{\ka_a-1}
		\ < \
		\frac{1}{2} - \frac{1}{2} \cdot \frac{\delta}{\ka_a-1}, \\
		\frac{\|\x_{\ka_a:n} - \btwo\|_1}{n - \ka_a + 1}
		\ < \
		\frac{1}{2} - \frac{1}{2} \cdot \frac{\delta}{n - \ka_a + 1},
	\end{cases}
\]
or
\[
	\begin{cases}
		\frac{\|\x_{1:(\ka_a-1)} - \bone\|_1}{\ka_a-1}
		\ > \
		\frac{1}{2} + \frac{1}{2} \cdot \frac{\delta}{\ka_a-1}, \\
		\frac{\|\x_{\ka_a:n} - \btwo\|_1}{n - \ka_a + 1}
		\ > \
		\frac{1}{2} + \frac{1}{2} \cdot \frac{\delta}{n - \ka_a + 1}.
	\end{cases}
\]
Multiplying by $-1$ and adding $1$ to the latter set of inequalities, and simplifying the left-hand sides yield the desired result.
\end{proof}

The arguments of the proof of Lemma~\ref{Lem:H2>H0} are essentially the same as those given in the proof of Lemma~\ref{Lem:H1>H0}. The proof is therefore omitted.

{
\section{Number of segments in a Markov chain}

\begin{Lemma}
\label[Lemma]{Lem:ExpNumSeg}
Let $\X:=(X_k)_{k=1}^n$ be a homogeneous Markov chain with state space $1\col m$, initial distribution $\bpi$ and transition matrix $\p\in [0,1]^{m\times m}$. Then, the expected number of segments of $\X$ is equal to 
\begin{equation*}%\label{Eq:ExpNumSeg}
n - \bpi \left( \sum_{k = 0}^{n-2} \p^{k}\right) \diag(\p),
\end{equation*}
where $\diag(\p)$ denotes a vector in $[0,1]^{m}$ consisting of the diagonal entries of $\p$. As a consequence, we have the following special cases:

(i) If $\bpi$ is an invariant distribution, then the expected number of segments of $\X$ is 
equal to $n - (n-1)\bpi \diag(\p)$.

(ii) If $\bpi = m^{-1} \bone$ and  $\p$ is equal to $1-p(n,s)$ on the diagonal and $(m-1)^{-1}p(n,s)$ elsewhere, with exit probability $p(n,s)=(s-1)/(n-1)$, then the expected number of segments of $\X$ is equal to $s$.
\end{Lemma}

\begin{proof}
For each $k \in 1\col(n-1)$, it holds that
\begin{align*}
 \Pr(X_{k} \neq X_{k+1}) & = 1 -  \Pr(X_{k} = X_{k+1})  \\
 & = 1 - \sum_{i = 1}^m  \Pr( X_{k+1} = i \mid X_{k} =i)\Pr(X_k = i) \\
 & = 1 - \bpi \p^{k-1}\diag(\p).
\end{align*}
Thus, the expected number of segments of $\X$ is equal to
\[   1 + \Ex\left(
     \sum_{k=1}^{n-1} 1_{[X_{k} \neq X_{k+1}]}
    \right)
     = 
    1 + \sum_{k=1}^{n-1} \Pr(X_{k} \neq X_{k+1}) 
     = 
    n - \bpi \left( \sum_{k = 0}^{n-2} \p^{k}\right) \diag(\p).
\]

\textbf{Part~(i).} It follows from the fact that $\bpi \p^k = \bpi$. 

\textbf{Part~(ii).} Note that  $\bpi$ is the invariant distribution. The assertion follows from part~(i) together with $\bpi\diag(\p) = (n-s)/(n-1)$.
\end{proof}

\begin{Remark}
In practice, the Markov chain $\X$ is often reversible (i.e.\ satisfying the detailed balance condition), where the transition matrix $\p$ is  diagonalizable, and can be written as $\p = \boldsymbol{V} \boldsymbol{\Lambda}\boldsymbol{V}^{-1}$ with $\boldsymbol V\in\R^{m\times m}$ an invertible matrix consisting of eigenvectors, and $\boldsymbol{\Lambda} \in [-1,1]^{m\times m}$ a diagonal matrix of eigenvalues $\{\lambda_i : i \in 1:m\}$. Then, we can simplify the computation by 
the relation
\[
\sum_{k = 0}^{n-2}\p^k =    \boldsymbol{V} \left(\sum_{k = 0}^{n-2}\boldsymbol{\Lambda}^k\right)\boldsymbol{V}^{-1},
\]
where $\sum_{k = 0}^{n-2}\boldsymbol{\Lambda}^k$ is a diagonal matrix with its $i$-th diagonal entry being
$$
\sum_{k = 0}^{n-2}\lambda_i^k = \begin{cases}
n-1 & \text{ if }\lambda_i = 1,\\
\frac{1-\lambda^{n-1}}{1-\lambda_i} & \text{ otherwise. }
\end{cases}
$$
\end{Remark}
}

{
\section{Additional simulations in a misspecified setting}\label{sec:miss}
To examine how deviations from the assumed transition matrix influence QATS, we repeated the simulation study of Section~4 of the main text under controlled perturbations of the matrix $\p$. For this, we fixed the sequence length to $n=10^5+1$, the state space to $m=2$, and the emission distribution to be normal with standard deviation $\sigma=1.0$. For the expected number of segments~$s$, the usual sequence $\{1,2,5,\ldots,2000\}$ is used, influencing the exit probability $p=(s-1)10^{-5}$ and therefore the transition matrix $\p$ with $p_{11}=p_{22}=1-p$ and $p_{12}=p_{21}=p$, which will later be perturbed. These parameters act as the underlying truth and are therefore used for data generation.

To perturb $\p$, we initially leave $p_{11}$ and $p_{22}$ unchanged, but multiply $p_{ij}$ for $i\neq j$ by independents random draws from a uniform distribution on $[\nu^{-1}, \nu]$, where $\nu\in\{1,2,5,10,15,20\}$. Because this random scaling destroys stochasticity, the rows were renormalised so that they again sum to one, resulting in the transition matrix $\tilde\p$. The parameter $\nu$ tunes the severity of the perturbation: When $\nu=1$, this reproduces the true matrix $\p$, whereas larger values of $\nu$ allow off–diagonal probabilities to be multiplied by as much as $\nu$ or shrunk by its reciprocal. After renormalisation the resulting exit probability can differ by an order of magnitude; for example, with $\nu=20$ the chance of switching states is typically ten times higher than in the data‑generating model. This can be observed in the first two rows of Figure~\ref{Fig:Misspec} which display the median and interquartile range of the distribution of the relative difference $100 \cdot (\tilde p_{1j}-p_{1j})/p_{1j}$, for $j=1,2$.

\begin{figure}[!t]
\centering
\includegraphics[width=1\textwidth]{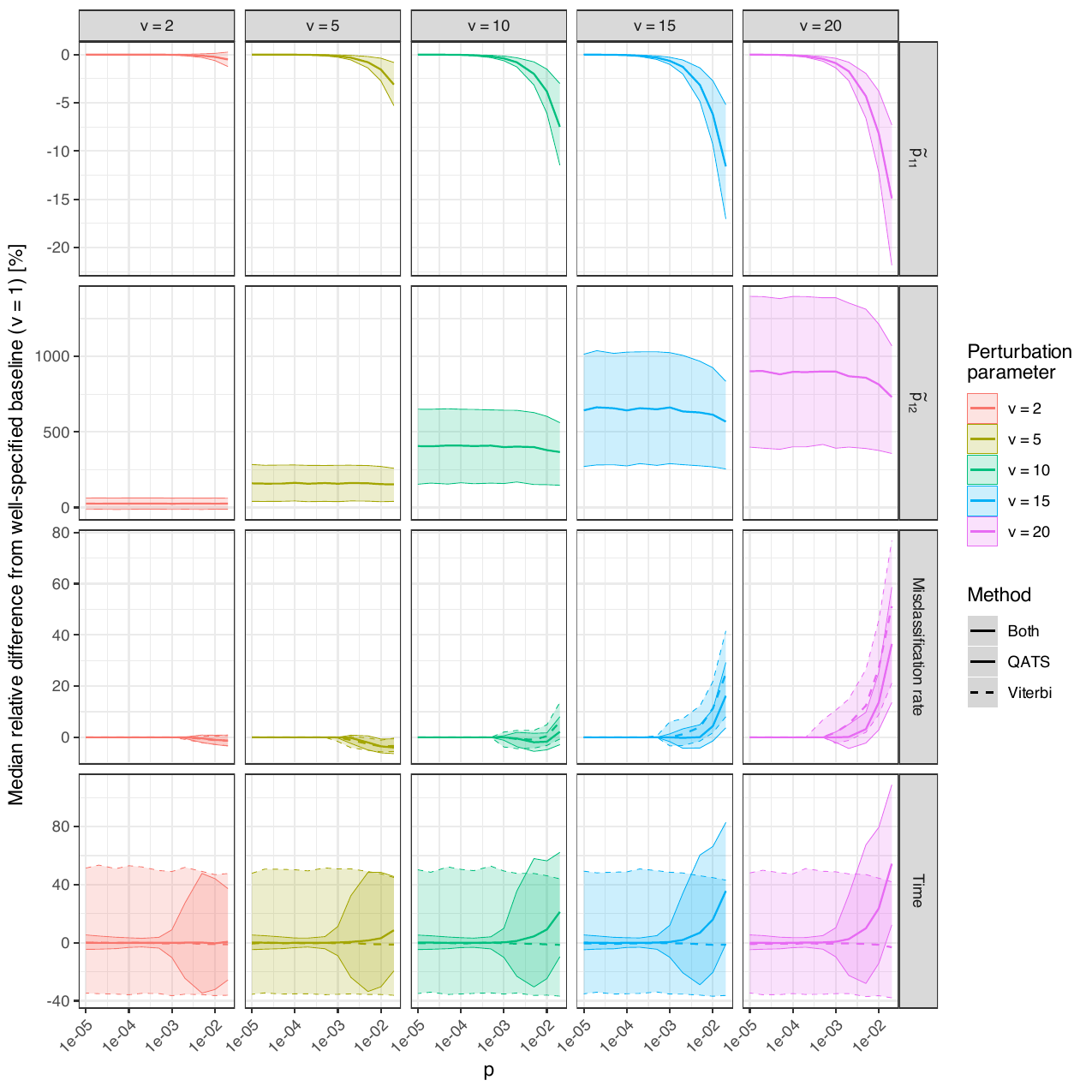}
\caption{Results of Monte Carlo simulations to assess the effect of transition matrix misspecification in the setting $n=10^5+1$, $m=2$, and $\sigma = 1$.} 
\label{Fig:Misspec}
\end{figure}

For every replicate we measured computation time and segmentation error under each perturbation level and expressed the result as a relative difference from the well‑specified baseline ($\nu=1$). Thus, for QATS we define $\Delta T^{\text{Q}}(\nu):=100\cdot(T^{\text{Q}}(\nu) - T^{\text{Q}}(1))/T^{\text{Q}}(1)$. Analogous contrasts were computed for Viterbi and the misclassification rate $d_0$. Repeating this $10^4$ times per setting and summarizing the resulting $\Delta$-values by their median and interquartile range yields a clear picture of how runtime and accuracy change as the transition matrix is increasingly distorted.

Figure~\ref{Fig:Misspec} summarizes the findings. For mild perturbations ($\nu\le 5$), both decoders show a lower median misclassification rate. The relative difference in $d_0$ is negative because the inflated off‑diagonal entries encourage additional state switches and thus better align the estimated path with the truth. Viterbi enjoys this gain at no extra cost, whereas QATS pays a modest runtime penalty. For severe perturbations ($\nu \ge 15$), error rates rise for both methods, yet the increase is consistently smaller for QATS, indicating greater robustness. At $\nu = 20$ and $p=10^{-2}$, the median excess error of Viterbi is roughly one‑and‑a‑half times that of QATS. The additional segments detected by QATS inflate its runtime, especially when the true exit probability $p$ is already sizable, but the overhead remains moderate.
}

\end{document}